\documentclass[11pt]{article}
\usepackage[a4paper,margin=2.4cm]{geometry}
\usepackage{amsmath,amssymb,amsthm}
\usepackage{graphicx,booktabs,array}
\usepackage[colorlinks=true,linkcolor=blue,citecolor=blue,urlcolor=blue]{hyperref}

\newtheorem{theorem}{Theorem}[section]
\newtheorem{lemma}[theorem]{Lemma}
\newtheorem{proposition}[theorem]{Proposition}
\newtheorem{corollary}[theorem]{Corollary}
\theoremstyle{definition}
\newtheorem{definition}[theorem]{Definition}
\newtheorem{example}[theorem]{Example}
\theoremstyle{remark}
\newtheorem{remark}[theorem]{Remark}
\numberwithin{equation}{section}
\newcommand{\Sm}{[S]}
\newcommand{\Nm}{[N]}
\newcommand{\Qm}{[Q]}

\title{First-order differential constraints\\
for the nonlinear wave equation:\\
wave fronts, admissible nonlinearities and exact solutions}
\author{O.~V. Kaptsov\thanks{Federal Research Center for Information and Computational Technologies, Novosibirsk, Russia. E-mail: profkap@gmail.com. ORCID: 0000-0002-9562-9092.}}
\date{}

\begin{document}
\maketitle

\begin{abstract}
We construct exact solutions of the nonlinear wave equation in three space
dimensions by the method of linear determining equations: first-order
constraint functions are found as solutions of the linear
determining equation, the differential constraints obtained by setting them
equal to zero are integrated, and the resulting ansatz is substituted into
the equation. Lie point symmetries are a special case, and the solutions obtained generically have no symmetry. The
determining equations tie the admissible nonlinearity to the geometry of
the constraint: arbitrary on characteristic hyperplanes, cubic on light
cones, and, for a pair of constraints, exponential on light cones and
quintic on the cylindrical fronts of a line source. Among the solutions are smooth bounded sine-Gordon
walls and vortex configurations whose shape varies arbitrarily along a
front, the fields of point sources moving on arbitrary trajectories for
the cubic and Liouville equations, with an exact nonlinear analogue of
the Larmor formula for the power radiated in the cubic case, and smooth
finite-energy solutions of the cubic equation with any prescribed number of
toroidal nodal surfaces.
\end{abstract}

\section{Introduction}
\subsection{The question and the method}
We study the semilinear wave equation
\begin{equation}\label{eq:main}
  \Box u:=u_{tt}-u_{xx}-u_{yy}-u_{zz}=F(u),\qquad u=u(t,x,y,z),
\end{equation}
the equation of motion of a scalar field with potential $V$, $F=-V'$; it
contains the Klein--Gordon, $\phi^4$, sine-Gordon, Liouville and Tzitz\'eica
models. Exact solutions containing arbitrary functions are rare already in
two independent variables: they occur only for special equations, such as
the Liouville equation, and the theory of Euler, Laplace, Moutard and
Darboux describes when. For the semilinear wave equation in $3+1$
dimensions a general classification of such families is not available.

Our tool is the method of linear determining equations (LDE) \cite{K95,K98}.
One looks for constraint functions
\begin{equation}\label{eq:intro-h}
  h=\xi\cdot\nabla u+q(x)\,u+q_0(x),\qquad
  \xi\cdot\nabla u=\xi^0u_t+\xi^1u_x+\xi^2u_y+\xi^3u_z,
\end{equation}
that solve the LDE: the linearisation of \eqref{eq:main} applied to $h$ must vanish,
not identically, but modulo $h$ and its derivatives (\S\ref{sec:det}). Each solution
$h$ is set equal to zero; integrating the first-order equation $h=0$ (a
differential constraint) gives an ansatz, and substituting the ansatz into
\eqref{eq:main} reduces it to an equation in fewer variables. The procedure is the
same as in the theory of symmetry reductions \cite{Ovs82,Olv93,Ibr85}, where $h$ is
the characteristic of a Lie point symmetry and the linearised equation holds
identically; the LDE has many more solutions. The parameters of
families of solutions of the reduced equation may then be made arbitrary
functions of a characteristic phase (lifting, Lemma~\ref{lem:develop}). All the
solutions of this paper are obtained in this way; symmetry reductions
appear only as the special case of constant data (\S\ref{sec:lie}).

\subsection{Main results}
\begin{enumerate}
\item \emph{The geometry of the wave fronts decides the nonlinearity.} A
  constraint selects a family of wave fronts along which the solution has
  a simple form. If the fronts are planes moving with the characteristic
  speed, every nonlinearity $F$ is allowed (Theorem~\ref{thm:rigidity}). If they are the
  spherical fronts emitted by a moving point, a single homogeneous constraint
  ($q_0=0$ in \eqref{eq:intro-h}) allows only the cubic $F=cu^3$ (Lemma~\ref{lem:weight-null},
  Theorem~\ref{thm:nullk}; for $q_0\neq0$ see Remark~\ref{rem:inhom}), and a pair of constraints also
  the exponential $F=Ce^u$ (Theorem~\ref{thm:liouville}). If they are cylindrical fronts
  emitted by a line, a pair of constraints of which one transports $u$ along
  the rays of the fronts allows only $F=cu^5$ (Theorem~\ref{thm:par}); $5=(n+2)/(n-2)$
  with $n=3$ is the conformal exponent in $2+1$ dimensions, as $3$ is in $3+1$. All
  other single constraints found are Lie point symmetries; whether
  twisting spacelike constraints add anything is open (\S\ref{sec:status}).
\item \emph{Where arbitrary functions come from.} A parameter of a known
  family of solutions may be replaced by an arbitrary function of the
  front variable (for plane fronts, of $t+x$) provided its derivative obeys
  a simple transport equation along the rays (Lemma~\ref{lem:develop}). For plane
  fronts this holds automatically, so any smoothly parametrised family of
  solutions of the transverse two-dimensional equation can be ``lifted'' to
  $3+1$ dimensions with arbitrary functions (Proposition~\ref{prop:lifting}).
\item \emph{Smooth solutions of the sine-Gordon equation.} Lifting the
  two-dimensional sine-Gordon walls, crossing jets and vortices gives
  solutions in $3+1$ dimensions that are smooth and bounded everywhere and
  whose walls turn and shift arbitrarily along the front
  (Proposition~\ref{prop:smoothsg}); a single planar wall with constant parameters is linearly stable
  (Proposition~\ref{prop:stability}). The same construction gives smooth $\phi^4$ walls, Liouville
  lumps and Gaussian packets of arbitrary shape (\S\ref{sec:morelifted}), and lifts the
  Tzitz\'eica solitons, which are singular on curves (\S\ref{sec:lifting}).
\item \emph{Fields of moving sources.} For $\Box u=u^3$ the field of a point
  source moving on an arbitrary trajectory and carrying a dipole axis that
  turns arbitrarily is
  \[
    u=\frac{4\sigma}{3\rho^2-\sigma^2},
  \]
  where $\rho$ is the distance from the source at the moment of emission,
  measured in its rest frame, and $\sigma$ the projection on the dipole axis
  (Remark~\ref{rem:conesol}); for a source at rest it is $4z/(3x^2+3y^2+2z^2)$, and for a
  uniformly accelerated source a rational function of $t,x,y,z$ \eqref{eq:coneaccel}. The field is regular off the trajectory and decays
  like a spherical wave; the power it radiates is given exactly by a
  nonlinear analogue of the Larmor formula, which depends on the
  acceleration relative to the axis and on the rate at which the axis
  turns (Proposition~\ref{prop:radiation}). For the Liouville equation $\Box u=e^u$ the field of a
  source on an arbitrary curve has an arbitrary angular pattern given by a
  meromorphic function (Theorem~\ref{thm:liouville}); it is not a lifted solution,
  and together with the lifted ones it exhausts a natural class of
  Liouville fields, for phases that are real analytic where their rays
  twist (Theorem~\ref{thm:liouvclass}).
\item \emph{Smooth solutions of finite energy.} For every $n\ge0$ the equation $\Box u=u^3$
  has a solution that is smooth on all of space-time, has finite energy,
  decays at infinity and has exactly $n$ toroidal nodal surfaces
  (Proposition~\ref{prop:ladder}).
\item \emph{No symmetry.} For generic data, the families with arbitrary
  functions are invariant under no Lie point symmetry; constant data
  give back the classical symmetry reductions (\S\ref{sec:lie}).
\end{enumerate}

\subsection{Relation to previous work}
Compatible lower-order equations go back to the intermediate integrals of
Monge and Amp\`ere, to Jacobi \cite{Jac62} and Darboux \cite{Gou96,KZ05}; as a method for
exact solutions they were developed by Yanenko and coauthors \cite{Yanenko,SSY}.
The determining equations used here are those of \cite{K95,K98,K02}; in group
analysis such constraints are nonclassical or conditional symmetries
\cite{BC,OR86,OR87,LW,FT,ZTP}, closely related to the direct method \cite{CK,CM,ABH93,Olv94}. For
\eqref{eq:main}, symmetry reductions were classified in \cite{GHW,FT,FSS}, conformal
subalgebras in \cite{BBF98}, conditional symmetries studied in \cite{FSe88,Yeh06},
and the d'Alembert--Hamilton systems behind $u=\varphi(\omega)$ solved in \cite{FZY91,FZR91}.
Arbitrary functions of a null coordinate appear in \cite{GHW,FB96,FS83,Yur97}, always
on families of null hyperplanes; here the phase may also be the retarded
time of an arbitrary curve, for which $\Box\tau\neq0$. For linear waves of this kind
see \cite{KP26,Bat15}. Other constructions of solutions of the
$(3+1)$-dimensional sine-Gordon equation containing arbitrary functions are
given in \cite{ABP09,ABP15}. Exact retarded fields of an arbitrarily moving point
source are classical for the Maxwell equations and known for the
Yang--Mills equations \cite{Kos98,Sar02}; the latter are gauge potentials
built from the velocity of the source and the direction of the ray, and
they give no solutions of a scalar nonlinear wave equation. For a scalar
nonlinear wave equation we are not aware of exact fields of an
arbitrarily moving source.

\subsection{Conventions and terminology}\label{sec:conv}
We work in Minkowski space $M^{1,3}$, $x=(t,x,y,z)$, with
$a\cdot b=a^0b^0-a^1b^1-a^2b^2-a^3b^3$, so that $\Box=\partial_t^2-\Delta_3$ and
$(\nabla f)^2:=\eta^{\mu\nu}\partial_\mu f\,\partial_\nu f$. Indices are raised and lowered with
$\eta=\mathrm{diag}(1,-1,-1,-1)$, repeated indices are summed, and
$A^{(\mu}B^{\nu)}=\tfrac12(A^\mu B^\nu+A^\nu B^\mu)$ denotes symmetrisation. For two vectors
$A,B$ we write $A\odot B$ for the symmetric tensor with components
$(A\odot B)^{\mu\nu}=A^{(\mu}B^{\nu)}$; as a quadratic form, $(A\odot B)(v,v)=(A\cdot v)(B\cdot v)$.
For example, \eqref{eq:o2} below can be written as $S=\lambda\,\eta+\alpha\odot\xi$. With these
conventions $\Box(t-r)=2/r$ and, outside the light cone,
$\Box[(-x\cdot x)^{-1/2}]=[(-x\cdot x)^{-1/2}]^3$, which fixes all signs. The characteristic speed is set to $c=1$: the
equation $c^{-2}u_{tt}-\Delta u=F(u)$ reduces to \eqref{eq:main} by $t\mapsto ct$, so ``light
cone'' and ``speed of light'' stand for the characteristic speed of the
model, e.g.\ the Swihart velocity of a long Josephson junction
(sine-Gordon), the sound speed for dislocations (Frenkel--Kontorova), or
the limiting spin-wave velocity for domain walls in antiferromagnets. All statements
are local and are made away from the singular set of the phase.

A few geometric words are used throughout; we state their meaning for the
wave operator.
\begin{itemize}
\item A \emph{null} or \emph{characteristic} direction has $\xi\cdot\xi=0$; along it the principal
  part of $\Box$ degenerates; $\xi\cdot\xi>0$ is \emph{timelike}, $\xi\cdot\xi<0$ \emph{spacelike}. A
  \emph{characteristic phase} $g$ has $(\nabla g)^2=0$; its level sets are wave fronts, swept
  by straight light rays, the integral curves of $\nabla g$.
\item For a family of light rays the \emph{expansion}, \emph{shear} and \emph{twist} are the
  rates at which a thin bundle grows in cross-sectional area, becomes
  elliptic, and rotates. For $\xi=\nabla g$ null the expansion is $\Box g$ up to a
  factor; the twist vanishes exactly when $\xi$ is proportional to a gradient.
\item A hypersurface is \emph{totally umbilic} if it bends equally in all tangent
  directions (the Hessian of $g$ on its tangent vectors is a multiple of the
  scalar product): planes and spheres in $\mathbb R^3$; hyperplanes, light cones and
  pseudospheres $(x-x_0)\cdot(x-x_0)=\mathrm{const}$ in $M^{1,3}$.
\item The \emph{screen} of a null ray is the set of transverse directions (for a
  light cone, the sphere of directions, $m=2$). A \emph{conformal Killing vector}
  generates a conformal transformation (translation, rotation, boost,
  dilation, inversion); for the cubic these are symmetries.
\end{itemize}

Claims checked by computation or quoted from the literature carry one of three markers:
\Sm\ proved or verified in closed form (symbolic computation in exact
arithmetic), \Nm\ verified numerically, \Qm\ quoted from the literature.
Unmarked statements are proved in place. The details of the numerical
verifications, the remaining longer proofs, among them those of
Theorems~\ref{thm:twist} and~\ref{thm:liouvclass}, and the proofs for hyperboloids and for several
constraints are given in the supplementary material; the proof of
Proposition~\ref{prop:radiation} is in the appendix.

\section{The determining equation}\label{sec:det}
Let $E$ denote \eqref{eq:main} and $[E]$ the system formed by \eqref{eq:main} and all its
differential consequences. Functions of $x$, $u$ and finitely many derivatives
$u_\mu,u_{\mu\nu},\dots$ are differentiated by the total derivatives
\[
  D_\nu=\partial_\nu+u_\nu\,\partial_u+u_{\mu\nu}\,\partial_{u_\mu}+u_{\mu\lambda\nu}\,\partial_{u_{\mu\lambda}}+\cdots,
\]
and $\Box$ applied to such a function is the total d'Alembertian,
$\Box h:=\eta^{\mu\nu}D_\mu D_\nu h$. We consider linear first-order \emph{constraint functions}
\begin{equation}\label{eq:constraint}
  h=\xi^\mu(x)\,u_\mu+q(x)\,u+q_0(x),\qquad u_\mu:=\partial_\mu u .
\end{equation}
\begin{definition}
A constraint function \eqref{eq:constraint} is \emph{admissible} for \eqref{eq:main} if it is a solution of
the linear determining equation (LDE) \cite{K95,K98}
\begin{equation}\label{eq:LDE}
  \Box h-F'(u)\,h=\alpha^\nu D_\nu h+\beta_0(x)\,h\quad\text{on }[E]
\end{equation}
for some $\alpha^\nu(x)$ and $\beta_0(x)$. The equation $h=0$ is then called an
admissible \emph{differential constraint}.
\end{definition}
In detail, for $h$ of the form \eqref{eq:constraint},
\[
  D_\nu h=\partial_\nu\xi^\mu\,u_\mu+\xi^\mu u_{\mu\nu}+\partial_\nu q\,u+q\,u_\nu+\partial_\nu q_0,
\]
\[
  \Box h=2\,\partial^\nu\xi^\mu\,u_{\mu\nu}+\bigl(\Box\xi^\mu+2\,\partial^\mu q\bigr)u_\mu
  +\xi^\mu D_\mu(\Box u)+q\,\Box u+(\Box q)\,u+\Box q_0 .
\]
``On $[E]$'' means that $\Box u$ and its total derivatives are replaced by $F(u)$ and
its total derivatives, in particular $D_\mu(\Box u)=F'(u)\,u_\mu$, and that the
resulting equality holds identically in $x$, $u$, $u_\mu$ and in the second
derivatives $u_{\mu\nu}$ subject only to $\eta^{\mu\nu}u_{\mu\nu}=F(u)$. Since the terms
$\xi^\mu F'(u)u_\mu$ cancel, \eqref{eq:LDE} reads explicitly
\begin{equation}\label{eq:LDEexpl}
\begin{aligned}
  &2\,\partial^\nu\xi^\mu\,u_{\mu\nu}+\bigl(\Box\xi^\mu+2\,\partial^\mu q\bigr)u_\mu
   +q\,F(u)-F'(u)\,(q\,u+q_0)+(\Box q)\,u+\Box q_0\\
  &\qquad=\alpha^\nu\bigl(\partial_\nu\xi^\mu\,u_\mu+\xi^\mu u_{\mu\nu}+\partial_\nu q\,u+q\,u_\nu+\partial_\nu q_0\bigr)
   +\beta_0\,(\xi^\mu u_\mu+q\,u+q_0).
\end{aligned}
\end{equation}
The left side is the linearisation of \eqref{eq:main} applied to $h$; the LDE requires
it to vanish modulo $h$ and its derivatives, not identically. Each admissible $h$
is set equal to zero; integrating the first-order equation $h=0$ gives an
ansatz, which is substituted into \eqref{eq:main} and reduces it to an equation in
fewer variables. Compatibility of the overdetermined system $\{E,h=0\}$ is
not assumed and is not guaranteed a priori; it follows a posteriori whenever
the reduced equation has solutions, as it does for all constraints found
below. For the characteristic of a
Lie point symmetry one may take $\alpha=0$, $\beta_0=0$. Multiplying $h$ by a
nonvanishing function $f(x)$ preserves the form of \eqref{eq:LDE}, with
$\alpha\mapsto\alpha+2\nabla\ln f$; we use this freedom to normalise $\xi$ below.

Write $S^{\mu\nu}:=2\partial^{(\mu}\xi^{\nu)}$, so that the second-order terms on the left of
\eqref{eq:LDEexpl} are $S^{\mu\nu}u_{\mu\nu}$. Sorting \eqref{eq:LDEexpl} by the order of the
derivatives of $u$ gives three groups of conditions; at order two the trace
$\eta^{\mu\nu}u_{\mu\nu}=F(u)$ is not free, which is the origin of $\lambda$ below. With $q_0=0$ for the moment:
\begin{align}
  &\text{order 2:}\quad S^{\mu\nu}=\lambda\,\eta^{\mu\nu}+\alpha^{(\mu}\xi^{\nu)},
  \label{eq:o2}\\
  &\text{order 1:}\quad
    \Box\xi^\mu+2\partial^\mu q=\alpha^\nu\partial_\nu\xi^\mu+q\,\alpha^\mu+\beta_0\,\xi^\mu,
  \label{eq:o1}\\
  &\text{order 0:}\quad (\lambda+q)F=q\,uF'+\gamma\,u,\qquad
    \gamma:=\alpha^\nu\partial_\nu q+\beta_0q-\Box q.
  \label{eq:o0}
\end{align}
Here $\lambda(x)$ is introduced by \eqref{eq:o2} as the coefficient of
the metric; taking the trace of \eqref{eq:o2} gives
\begin{equation}\label{eq:trace}
  2\Box g=4\lambda+(\alpha\cdot\xi)\qquad\text{when }\xi=\nabla g.
\end{equation}
The three groups are coupled only through $\lambda$, $q$, $\gamma$,
and this is what makes the order-by-order analysis effective:
\eqref{eq:o2} is a condition on the geometry of $\xi$ alone,
\eqref{eq:o0} a condition on $F$ alone, and \eqref{eq:o1} the link.
\Sm\ (the expansion \eqref{eq:o2}--\eqref{eq:o0} was rederived
independently.)

\begin{remark}[what is invariant]\label{rem:invariants}
For constraint functions of the form \eqref{eq:constraint}, the three conditions
\eqref{eq:o2}--\eqref{eq:o0} together are equivalent to \eqref{eq:LDE}; each of them
is necessary, and none is sufficient alone. The only freedom is $h\mapsto fh$
with $f(x)\neq0$, under which
\[
  \xi\mapsto f\xi,\quad q\mapsto fq,\quad \lambda\mapsto f\lambda,\quad \gamma\mapsto f\gamma,\quad \alpha\mapsto\alpha+2\nabla\ln f .
\]
Hence the ratios $\lambda/q$ and $\gamma/q$, which alone enter the condition
\eqref{eq:o0} on $F$, and the form $\alpha-2\nabla\ln q$ (where $q\neq0$) are invariant;
all statements below about these quantities are independent of the
normalisation of $\xi$.
\end{remark}
\begin{proposition}[order one in conformal form]\label{prop:star}
For any $\xi$ satisfying \eqref{eq:o2}, with $\alpha$ lowered and
$(i_\xi d\alpha)_\nu=\xi^\mu(\partial_\mu\alpha_\nu-\partial_\nu\alpha_\mu)$, the vector
$\mathcal R_\nu:=\alpha^\mu\partial_\mu\xi_\nu-\Box\xi_\nu$ satisfies
\[
  \mathcal R=d\lambda-\tfrac12\lambda\,\alpha-\tfrac12\,i_\xi d\alpha+(\dots)\,\xi .
\]
Consequently order one \eqref{eq:o1} is equivalent, for
$\psi:=q-\tfrac12\lambda$, to
\begin{equation}\label{eq:star}
  2\,d\psi=\psi\,\alpha-\tfrac12\,i_\xi d\alpha+\hat\beta\,\xi
\end{equation}
for some function $\hat\beta$. \Sm
\end{proposition}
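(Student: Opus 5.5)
The plan is a direct computation in index notation. It uses only the order-two condition \eqref{eq:o2}, written with lowered indices as
\[
\partial_\mu\xi_\nu+\partial_\nu\xi_\mu=\lambda\,\eta_{\mu\nu}+\tfrac12(\alpha_\mu\xi_\nu+\alpha_\nu\xi_\mu),
\]
together with its trace $2\,\partial\cdot\xi=4\lambda+\alpha\cdot\xi$. This trace is the general form of \eqref{eq:trace}; gradient form of $\xi$ is not needed. The idea is to express $\Box\xi_\nu$ through first derivatives of $\lambda$, $\alpha$ and $\xi$ by the usual Killing-type trick: take a divergence of $S$ and subtract a gradient of its trace. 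Once $\mathcal R$ is in the stated form, the equivalence with \eqref{eq:star} is a rearrangement.

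\emph{Step 1.} Take $\partial^\mu$ of \eqref{eq:o2}. On the left we get $\Box\xi_\nu+\partial_\nu(\partial\cdot\xi)$. On the right we get
\[
\partial_\nu\lambda+\tfrac12\bigl[(\partial\cdot\alpha)\xi_\nu+\alpha_\nu(\partial\cdot\xi)+\xi^\mu\partial_\mu\alpha_\nu+\alpha^\mu\partial_\mu\xi_\nu\bigr].
\]
We then substitute $\partial\cdot\xi=2\lambda+\tfrac12\alpha\cdot\xi$ in both places.

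\emph{Step 2.} The gradient $\partial_\nu(\alpha\cdot\xi)$ that appears contains the term $\alpha^\mu\partial_\nu\xi_\mu$ with the ``wrong'' index placement. We rewrite it with \eqref{eq:o2} a second time:
\[
\alpha^\mu\partial_\nu\xi_\mu=\alpha^\mu S_{\mu\nu}-\alpha^\mu\partial_\mu\xi_\nu=\lambda\alpha_\nu+\tfrac12\alpha^2\xi_\nu+\tfrac12(\alpha\cdot\xi)\alpha_\nu-\alpha^\mu\partial_\mu\xi_\nu .
\]
The remaining piece of $\partial_\nu(\alpha\cdot\xi)$ is $\xi^\mu\partial_\nu\alpha_\mu$. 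It combines with $\tfrac12\xi^\mu\partial_\mu\alpha_\nu$ from Step 1 into $\tfrac12(i_\xi d\alpha)_\nu$.

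\emph{Step 3.} Collect terms. The two contributions $\pm\tfrac14(\alpha\cdot\xi)\alpha_\nu$, one from $\tfrac12\alpha_\nu\,\partial\cdot\xi$ and one from Step 2, cancel. The $\lambda$-gradients combine to $-d\lambda$. The $\lambda\alpha$ terms combine to $+\tfrac12\lambda\alpha$. The terms $\alpha^\mu\partial_\mu\xi_\nu$ add up to coefficient one. All terms proportional to $\xi_\nu$, namely $\tfrac12(\partial\cdot\alpha)-\tfrac14\alpha^2$, go into $(\dots)\xi$. This gives
\[
\Box\xi_\nu=-\partial_\nu\lambda+\alpha^\mu\partial_\mu\xi_\nu+\tfrac12\lambda\alpha_\nu+\tfrac12(i_\xi d\alpha)_\nu+(\dots)\xi_\nu ,
\]
which is the claimed formula for $\mathcal R$. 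This bookkeeping is where errors are most likely: the signs of the $\lambda$ terms and the exact cancellation of the $(\alpha\cdot\xi)\alpha$ terms. Getting these right is the main obstacle, and it is the part flagged \Sm.

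\emph{Step 4.} Lower the index in \eqref{eq:o1} and write it as $2\,dq-\mathcal R=q\alpha+\beta_0\xi$. Insert the formula for $\mathcal R$ from Step 3, absorb $\beta_0$ and the $(\dots)$ coefficient into a single $\hat\beta$, and set $\psi=q-\tfrac12\lambda$. This gives exactly \eqref{eq:star}. Every step is reversible, since $\hat\beta$ is arbitrary and determines $\beta_0$. Hence \eqref{eq:o1} and \eqref{eq:star} are equivalent for any $\xi$ satisfying \eqref{eq:o2}.
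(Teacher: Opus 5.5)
Your proposal is correct and takes the same route as the paper, which also obtains $\mathcal R$ by a direct computation from the divergence and the trace of \eqref{eq:o2}. Your bookkeeping checks out: the $-d\lambda$, the $+\tfrac12\lambda\alpha$, the cancellation of the $(\alpha\cdot\xi)\alpha$ terms and the $\xi$-coefficient $\tfrac12\partial\cdot\alpha-\tfrac14\alpha\cdot\alpha$ in $\Box\xi_\nu$ are all right, and the rearrangement to \eqref{eq:star} with $\hat\beta=\beta_0+(\dots)$ is reversible, as you say.
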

\noindent The proof is a direct computation from the divergence and the trace of
\eqref{eq:o2} (supplementary material, Section~S13).

\subsection{Parameters depending on a characteristic phase}\label{sec:develop}
Most families of solutions with arbitrary functions in this paper come
from one elementary observation. It explains how parameters of a known
family of solutions can be made to depend on a characteristic variable,
and which of them may.

\begin{lemma}[parameters depending on a characteristic phase]\label{lem:develop}
Let $U(x;\tau)$ be a family of solutions of $\Box U=F(U)$ depending
smoothly on a parameter $\tau$, and let $\tau(x)$ be a function with
$(\nabla\tau)^2=0$. Then $u(x)=U\bigl(x;\tau(x)\bigr)$ satisfies
\[
  \Box u-F(u)=2\,\nabla\tau\cdot\nabla h+(\Box\tau)\,h,\qquad
  h:=\partial_\tau U\big|_{\tau=\tau(x)} ,
\]
where $\nabla$ acts on $x$ with $\tau$ fixed. Hence $u$ is a solution if and
only if $h$ obeys the transport law $\nabla\tau\cdot\nabla h+\tfrac12(\Box\tau)h=0$ along the
rays of $\nabla\tau$, i.e.\ the null constraint \eqref{eq:constraint} with
$\xi=\nabla\tau$ and weight $q=\tfrac12\Box\tau$. \Sm
\end{lemma}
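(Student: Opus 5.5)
The plan is a direct chain-rule computation, done to second order. Write $U_\mu:=\partial_\mu U(x;\tau)$ for the partial derivative at fixed $\tau$, and $U_\tau:=\partial_\tau U$. Differentiating the composite $u(x)=U(x;\tau(x))$ once gives
\[
\partial_\mu u=U_\mu+U_\tau\,\partial_\mu\tau .
\]
Differentiating again gives
\[
\partial_\nu\partial_\mu u=U_{\mu\nu}+U_{\mu\tau}\partial_\nu\tau+U_{\nu\tau}\partial_\mu\tau+U_{\tau\tau}\,\partial_\mu\tau\,\partial_\nu\tau+U_\tau\,\partial_\mu\partial_\nu\tau ,
\]
with all terms evaluated at $\tau=\tau(x)$.

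Contracting with $\eta^{\mu\nu}$ produces five pieces.
\begin{itemize}
\item $\eta^{\mu\nu}U_{\mu\nu}$ is $\Box U$ at fixed $\tau$, which equals $F(U)=F(u)$ by hypothesis, since the equation holds for every fixed value of the parameter.
\item The two mixed terms combine to $2\,\nabla\tau\cdot\nabla(U_\tau)$, with the gradient taken at fixed $\tau$.
\item The $U_{\tau\tau}$ term is multiplied by $(\nabla\tau)^2$, which vanishes because the phase is characteristic.
\item The last term is $(\Box\tau)\,U_\tau$.
\end{itemize}
With $h:=U_\tau|_{\tau=\tau(x)}$, and reading $\nabla h$ in the statement as the fixed-$\tau$ gradient, this is exactly
\[
\Box u-F(u)=2\,\nabla\tau\cdot\nabla h+(\Box\tau)h .
\]

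The only delicate point is this interpretation of $\nabla h$. The statement specifies that $\nabla$ acts with $\tau$ fixed, so I will state that convention explicitly. I will also check that nothing changes if one instead uses the total gradient along rays. In that case the extra term is $U_{\tau\tau}\,\nabla\tau\cdot\nabla\tau$, which again vanishes by nullity. So along the rays, $\nabla\tau\cdot\nabla h$ is the same derivative under either reading. This also justifies the phrase ``transport law along the rays'': only the derivative of $h$ in the direction $\nabla\tau$ enters.

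The equivalence then follows at once: $u$ solves the equation if and only if the right-hand side vanishes, i.e. if and only if $\nabla\tau\cdot\nabla h+\tfrac12(\Box\tau)h=0$. This is a first-order expression of the form \eqref{eq:constraint} with $\xi=\nabla\tau$, $q=\tfrac12\Box\tau$ and $q_0=0$. Here it is applied to $h$ viewed as a function of $x$ through $U_\tau$.

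I expect no real obstacle. The main care is bookkeeping: keep the fixed-$\tau$ partials separate from the total derivatives, and make sure the hypothesis $\Box U=F(U)$ is used pointwise in $\tau$ before the substitution $\tau=\tau(x)$. Smoothness in $\tau$ is exactly what permits the mixed partials $U_{\mu\tau}$ and their symmetry.
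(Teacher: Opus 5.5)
Your proof is correct and is the paper's argument: expand $\Box u$ by the chain rule to second order, kill the $\partial_\tau^2U$ term with $(\nabla\tau)^2=0$, and use $\Box U=F(U)$ at fixed $\tau$. Your extra observation is a worthwhile addition: $\nabla\tau\cdot\nabla h$ is the same whether $\nabla$ is taken at fixed $\tau$ or as the total gradient, since the two differ by $\partial_\tau^2U\,(\nabla\tau)^2=0$. This is what justifies reading the condition as the null constraint \eqref{eq:constraint} on the function $h(x)$.
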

\begin{proof}
By the chain rule,
$\Box u=\Box U+2\nabla\tau\cdot\nabla(\partial_\tau U)+(\Box\tau)\partial_\tau U+(\nabla\tau)^2\partial_\tau^2U$.
The last term vanishes because $\tau$ is characteristic, and $\Box U=F(U)=F(u)$.
\end{proof}

The second derivative $\partial_\tau^2U$ drops out because $(\nabla\tau)^2=0$; for a
non-characteristic $\tau$ the term $(\nabla\tau)^2\partial_\tau^2U$ survives and the
substitution generally fails. The lemma does not involve $F$. Its use is to let
the parameters of a known family depend on $\tau$:
\begin{itemize}
\item \emph{Null hyperplanes.} If $\tau$ is a null coordinate, $\Box\tau=0$ and
  $h$ need only be constant along the rays. Any family of solutions that
  do not depend on the other null coordinate qualifies. This gives the
  lifting of \S\ref{sec:lifting}, for every $F$ (Proposition~\ref{prop:lifting}).
\item \emph{Light cones.} For a curve in the symmetry group acting on the field
  $W(\mathbf n)/r$ of a source at rest, the transport law holds for rotations, boosts and
  dilations about the vertex and for time translation along the worldline, but
  not for spatial translations: the source may move on an arbitrary worldline
  with an arbitrarily rotating axis (Remark~\ref{rem:conesol}), and the weight
  $q=\tfrac12\Box\tau=1/\rho$ admits only the cubic (details in the supplementary material).
\end{itemize}
The lemma is not a licence to modulate an arbitrary family: a parameter
may be made to depend on $\tau$ only when $\partial_\tau U$ satisfies the transport law.
Every modulated family in this paper is checked against it, namely
Proposition~\ref{prop:lifting} (where $\Box\tau=0$ and $h$ is constant along the rays),
Remark~\ref{rem:conesol} and Proposition~S1.1 (where $q=1/\rho$), and
the cylinder of \S\ref{sec:par}. The Liouville field of Theorem~\ref{thm:liouville} is
obtained directly from two constraints and does not use the lemma.

\section{Order zero: which nonlinearities}\label{sec:order0}
Equation \eqref{eq:o0} is, at each point $x$, a linear first-order
ordinary differential equation in the variable $u$,
\[
  q(x)\,u\,F'(u)-\bigl(\lambda(x)+q(x)\bigr)F(u)=-\gamma(x)\,u .
\]
Its solution $F$ must be the same function of $u$ at every point, so the
ratios of the coefficients must be constant. This is a condition on the
geometry of $\xi$ as much as on $F$, and it is the main tool of the
following sections (\S\S\ref{sec:null}--\ref{sec:rank2}). For example, on
the light cone $\lambda=2/r$ and $q=1/r+C/r^2$ (Lemma~\ref{lem:weight-null}); the ratio
$\lambda/q=2r/(r+C)$ is constant only for $C=0$. Solving the equation gives
the following list: an arbitrary $F$, the pure powers (massless $\phi^4$ for
$k=3$), powers with a linear term $cu^k+c'u$ (which include $\phi^4$,
$F=u-u^3$), and the logarithmic nonlinearity of the ``gausson''. The
linear term, although allowed by the list, is excluded on every branch
with $\lambda\neq0$ (Corollary~\ref{cor:nolinear}).

\begin{proposition}\label{prop:dichotomy}
Let $\lambda,q,\gamma$ be such that \eqref{eq:o0} admits a solution
$F$ depending on $u$ only. Then, up to the scaling of $u$, exactly one
of the following holds. \Sm
\begin{center}
\begin{tabular}{cccl}
\toprule
$q$ & $\lambda$ & $\gamma$ & $F(u)$\\
\midrule
$=0$ & $=0$ & $=0$ & arbitrary\\
$=0$ & $\neq0$ & any & $F=(\gamma/\lambda)\,u$ (linear)\\
$\neq0$ & $\neq0$ & $=0$ & $c\,u^k$, $k=1+\lambda/q$\\
$\neq0$ & $\neq0$ & $\neq0$ & $c\,u^k+(\gamma/\lambda)\,u$, $k=1+\lambda/q$\\
$\neq0$ & $=0$ & $\neq0$ & $Au+Bu\ln u$, $B=-\gamma/q$\\
$\neq0$ & $=0$ & $=0$ & $F=cu$ (linear)\\
\bottomrule
\end{tabular}
\end{center}
In the third and fourth rows the constancy of $k$ forces $\lambda/q$
to be constant on the domain; in the fourth, $\gamma/\lambda$ must be
constant as well; in the fifth, $\gamma/q$.
\end{proposition}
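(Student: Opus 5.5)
The plan is to treat \eqref{eq:o0}, at each fixed point $x$, as a linear ODE in $u$,
\[
  q\,uF'-(\lambda+q)F=-\gamma u ,
\]
and to split into cases according to whether $q$, $\lambda$, $\gamma$ vanish. The hypothesis is that a single function $F(u)$ solves this equation at every $x$ of the (connected) domain, with coefficients depending on $x$. We pick two generic points, or differentiate in $x$, and compare. Throughout we assume each of $q,\lambda,\gamma$ is either identically zero or nowhere zero on the domain, which is the local, generic-point convention of \S\ref{sec:conv}.

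\emph{Case $q=0$.} The equation becomes algebraic: $\lambda F=\gamma u$. If $\lambda=0$, then $\gamma u=0$ for all $u$ forces $\gamma=0$, and $F$ is unconstrained; this is row one. If $\lambda\neq0$, then $F=(\gamma/\lambda)u$. Since $F$ is a function of $u$ alone, $\gamma/\lambda$ is constant; this is row two.

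\emph{Case $q\neq0$.} Divide by $q$ to get
\[
  uF'-kF+\mu u=0,\qquad k:=1+\lambda/q,\quad \mu:=\gamma/q .
\]
Suppose first $k\neq1$, i.e.\ $\lambda\neq0$. The general solution is $F=cu^k+\frac{\mu}{k-1}u$, and $\frac{\mu}{k-1}=\gamma/\lambda$. Two things must now be shown. First, $k$ is constant: if $k$ varied, the family $\{u^{k(x)}\}$ together with $u$ would have to reproduce one fixed $F$. Comparing two points with $k_1\neq k_2$ gives $c_1u^{k_1}-c_2u^{k_2}+(a_1-a_2)u=0$. Since $u^{k_1},u^{k_2},u$ are linearly independent when $k_1,k_2\neq1$ are distinct, this forces $c_1=c_2=0$ and $a_1=a_2$. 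The surviving case $F$ linear is the degenerate case $c=0$ of the same row. Second, $\gamma/\lambda$ is constant by the same comparison. Row three versus row four is then just $\gamma=0$ versus $\gamma\neq0$, and scaling $u$ normalises $c$.

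Now suppose $k=1$, i.e.\ $\lambda=0$. The equation is $uF'-F=-\mu u$, that is $(F/u)'=-\mu/u$, so $F=Au-\mu u\ln u$. With $B=-\gamma/q$, independence of $u$ and $u\ln u$ forces $\gamma/q$ to be constant; this is row five. If $\gamma=0$, we get $F=Au$; this is row six.

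\emph{Exclusivity and the main difficulty.} Exclusivity is automatic: the rows are distinguished by the zero pattern of $(q,\lambda,\gamma)$. The only delicate point is the constancy argument. It uses linear independence of $\{u,\,u^{k_1},\,u^{k_2},\,u\ln u\}$ on an interval of $u>0$ (or $|u|^k$ for real $k$), together with the handling of the degenerate sub-case $c=0$. In that sub-case $k$ need not be constant, but $F$ is linear and is covered by the stated form. This is the step where I would be careful about what ``exactly one holds'' means when $F$ happens to be linear.
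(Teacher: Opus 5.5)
Your proposal is correct and follows the paper's proof: both treat \eqref{eq:o0} at each point as a linear first-order ODE in $u$ and split on whether $q$ and $\lambda$ vanish. In both, $\lambda\neq0$ gives $cu^k+(\gamma/\lambda)u$ as homogeneous plus particular solution, the resonant case $\lambda=0$ gives $Au-(\gamma/q)\,u\ln u$, and $q=0$ gives $\lambda F=\gamma u$. Your two-point linear-independence comparison spells out the constancy of $\lambda/q$, $\gamma/\lambda$ and $\gamma/q$, which the paper only asserts; your caveat that $k$ need not be constant when $c=0$ (so $F$ is linear) is accurate and consistent with the paper discarding the linear rows.
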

\begin{proof}
For $q\neq0$ write \eqref{eq:o0} as $uF'-\kappa F=-(\gamma/q)u$
with $\kappa=1+\lambda/q$. The homogeneous solution is $cu^\kappa$.
If $\lambda\neq0$ then $\kappa\neq1$ and a particular solution is
$(\gamma/\lambda)u$; if $\lambda=0$ then $\kappa=1$ and the resonant
particular solution is $-(\gamma/q)u\ln u$. For $q=0$,
\eqref{eq:o0} reads $\lambda F=\gamma u$, which is linear unless
$\lambda=\gamma=0$, in which case it is vacuous.
\end{proof}

The two rows with linear $F$ are recorded for completeness and
discarded from here on.

\begin{theorem}[when $F$ is arbitrary]\label{thm:rigidity}
For $q_0=0$, $F$ is arbitrary if and only if $q=\lambda=0$ (for $q_0\neq0$ it never is,
Remark~\ref{rem:inhom}). In that case one of
the following holds.
\begin{enumerate}
\item[(a)] $\xi$ is null (characteristic). Then $\Box g=0$; the light rays
  along $\xi$ neither spread, nor deform, nor rotate (the congruence is
  expansion, shear and twist free); the level sets of $g$ are null
  hyperplanes $n^\mu(g)x_\mu+c(g)=0$, $n\cdot n=0$, with arbitrary $n(g)$, $c(g)$;
  and the constraint is $\xi^\mu u_\mu=0$: $u$ is constant along the light
  rays (Smirnov--Sobolev \cite{SmSo}).
\item[(b)] $\xi$ is not null. If $\xi$ is proportional to a gradient, the
  level sets are pieces of non-null hyperplanes. In general $\xi$ need not
  be a gradient: this case contains the constraints $K^\mu u_\mu=0$ with $K$
  any generator of a Poincar\'e transformation (a Killing vector:
  translation, rotation, boost, or a combination), admissible for every $F$.
\end{enumerate}
\end{theorem}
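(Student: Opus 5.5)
The equivalence $F$ arbitrary $\iff q=\lambda=0$ comes straight from the order-zero condition \eqref{eq:o0} and Proposition~\ref{prop:dichotomy}. If \eqref{eq:o0} is to hold for every $F$, insert $F=1$ and $F=u^2$. The first gives $\lambda+q=0$ pointwise. The second gives $(\lambda+q)u^2=2qu^2+\gamma u$, hence $q=0$ and $\gamma=0$, and therefore $\lambda=0$. Conversely, if $q=\lambda=0$ then $\gamma=\alpha^\nu\partial_\nu q+\beta_0 q-\Box q=0$, and \eqref{eq:o0} is vacuous. This is the first row of the table. The case $q_0\ne0$ is deferred to Remark~\ref{rem:inhom}: the extra term $-F'(u)q_0$ in \eqref{eq:LDEexpl} would force $F'$ to be affine in $u$.

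With $q=\lambda=0$, the remaining conditions are \eqref{eq:o2} in the form $S=\alpha\odot\xi$, and \eqref{eq:o1} in the form $\Box\xi^\mu=\alpha^\nu\partial_\nu\xi^\mu+\beta_0\xi^\mu$. Equivalently, by Proposition~\ref{prop:star} with $\psi=0$, the condition is $i_\xi d\alpha\propto\xi$. I then split according to whether $\xi$ is null.

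\emph{Case (a), $\xi$ null.} First I contract $S^{\mu\nu}=\alpha^{(\mu}\xi^{\nu)}$ with $\xi_\mu\xi_\nu$. This gives $\xi^\mu\xi^\nu\partial_\mu\xi_\nu=0$ up to the normalisation, which is automatic. Contracting instead with $\xi_\nu$ only gives $\xi^\nu\partial_\nu\xi^\mu+\tfrac12\partial^\mu(\xi\cdot\xi)=\tfrac12(\alpha\cdot\xi)\xi^\mu$. Since $\xi\cdot\xi=0$, the integral curves of $\xi$ are null geodesics (straight rays). Next, the restriction of $S$ to the screen orthogonal to $\xi$ vanishes, because $\alpha\odot\xi$ kills vectors $v$ with $\xi\cdot v=0$. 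This means the symmetric part of $\nabla\xi$ on the screen is zero, so expansion and shear vanish. The trace \eqref{eq:trace} then gives $\Box g\propto\alpha\cdot\xi$; the rescaling freedom $h\mapsto fh$ should let me set this to zero once I know $\xi$ is a gradient.

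Twist-freeness is the step I expect to be the main obstacle. My plan is to use the order-one condition $i_\xi d\alpha\propto\xi$, together with the antisymmetric part of $\nabla\xi$, to show that the twist $\omega$ satisfies a linear homogeneous transport equation along the rays. From that, either $\omega=0$, or the twist blows up at finite affine parameter, which excludes it locally by Sachs-type optical equations, with shear zero forcing $\omega^2$ to enter the Raychaudhuri equation with the wrong sign. With expansion, shear and twist all zero, $\xi=f\nabla g$. Normalising $f=1$ gives a null gradient whose Hessian annihilates the screen, so $\partial_\mu\partial_\nu g$ is proportional to $\xi_\mu\xi_\nu$ along with its mixed terms. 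That makes $\nabla g$ constant on each level set, so the level sets are null hyperplanes $n(g)\cdot x+c(g)=0$, and $\Box g=0$. The constraint is then $\xi^\mu u_\mu=0$. I would also check directly that every such $g$ satisfies \eqref{eq:o2} and \eqref{eq:o1}, which reproduces Smirnov--Sobolev.

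\emph{Case (b), $\xi$ non-null.} If $\xi=\nabla g$, the condition $S=2\,\mathrm{Hess}\,g=\alpha\odot\xi$ restricted to the tangent space $\xi^\perp$ of a level set vanishes. So the second fundamental form of the level set is zero, and the level sets are totally geodesic, i.e. pieces of non-null hyperplanes. For the general (non-gradient) statement it suffices to exhibit the examples: a Killing vector $K$ has $S=0$ and $\Box K=0$, so \eqref{eq:o2}--\eqref{eq:o0} hold with $\alpha=\beta_0=0$ for every $F$.
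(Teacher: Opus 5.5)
Your architecture matches the paper's. Order zero forces $q=\lambda=0$; your test functions $F=1$ and $F=u^2$ do the job of Proposition~\ref{prop:dichotomy}. Order two then makes the rays geodesic and removes expansion and shear. Twist is excluded next. Finally the order-two analysis turns the level sets into hyperplanes, and Killing vectors supply the non-gradient examples in (b).

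The gap is in the step you flag yourself: the route you propose for excluding twist does not work. A linear homogeneous transport law $\xi\cdot\partial\omega=c\,\omega$ along the rays, with smooth $c$, is compatible with $\omega\neq0$. It also cannot produce a blow-up at finite affine parameter, so the dichotomy ``$\omega=0$ or blow-up'' is false. Even a blow-up somewhere along a ray would say nothing in a statement that is local. In fact, with $\theta\equiv0$ the twist equation of the Sachs system is exactly such a law: the twist is merely propagated along the rays. No argument built on it alone can exclude twist, and the order-one condition $i_\xi d\alpha\propto\xi$ is not the relevant input.

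What you need is the ingredient you mention only in your last clause, used directly and at order two. Because $\lambda$ vanishes identically on an open set, so does $\xi\cdot\partial\lambda$. The Raychaudhuri identity of Lemma~\ref{lem:raych} follows from \eqref{eq:o2} alone, in any normalisation of $\xi$, and it gives $0=\xi\cdot\partial\lambda=\kappa\lambda-\tfrac12\lambda^2+\tfrac12|d\xi(m,\bar m)|^2=\tfrac12|d\xi(m,\bar m)|^2$. Equivalently, the real part of the Sachs equation in flat space with $\theta\equiv\sigma\equiv0$ leaves $\omega^2=0$. This is the paper's argument (the Sachs equation for $\lambda=0$).

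Three smaller corrections:
\begin{enumerate}
\item Once $\xi=\nabla g$, the contraction \eqref{eq:ds} with $s=\lambda=0$ gives $\alpha\cdot\xi=0$ automatically. So $\Box g=0$ follows from \eqref{eq:trace} with no further rescaling.
\item $H=\tfrac12\alpha\odot\xi$ with $\alpha\cdot\xi=0$ makes only the \emph{direction} of $\nabla g$ constant along a level set, not $\nabla g$ itself. For $n(g)\cdot x+c(g)=0$ one has $\nabla g=-n/(n'\cdot x+c')$, whose length varies along the level set. Constancy of the direction is all you need to conclude that the level sets are hyperplanes.
\item For $q_0\neq0$ it is not true that $F'$ must be affine; the exponential row of Remark~\ref{rem:inhom} is a counterexample. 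It is enough to note that $F=u^3$ produces the term $-3q_0u^2$ at order zero, which nothing else balances.
\end{enumerate}
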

In both cases the solutions have a simple meaning. In (b) they are the
Lie-invariant solutions: for $K=\partial_x$ they do not depend on $x$. In (a) the equation reduces, on each null
hyperplane, to the two-dimensional equation in the transverse
variables, and the family of hyperplanes, chosen arbitrarily, carries
arbitrary functions into the solution; this is the source of the lifted
solutions of \S\ref{sec:lifting}.
\noindent The proof (supplementary material, Section~S13) uses the
Sachs equation to exclude twist when $\lambda=0$ and the order-two analysis of
\S\ref{sec:order2} to identify the level sets.

\begin{corollary}[no linear term]\label{cor:nolinear}
On every branch with $\lambda\neq0$ classified below, $\gamma=0$ follows
from the system rather than being imposed; in particular the absence
of a linear term on the shearing branch $k=5$ is a corollary, not a
separate analysis. \Sm
\end{corollary}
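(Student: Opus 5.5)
The argument is structural: $\gamma$ is produced by the order-one equation, and there it is forced to vanish. By \eqref{eq:o0}, $\gamma=\alpha^\nu\partial_\nu q+\beta_0q-\Box q$. The order-one condition, in the conformal form \eqref{eq:star} of Proposition~\ref{prop:star}, determines $d\psi$ with $\psi=q-\tfrac12\lambda$ in terms of $\alpha$, $\xi$ and the free multiplier $\hat\beta$. Taking a further divergence of \eqref{eq:star} expresses $\Box q$ through $\alpha$, $\lambda$, $\hat\beta$ and their derivatives. The plan is to substitute this into $\gamma$ and reduce it to a combination of quantities that vanish on each branch. The order-two identity \eqref{eq:o2} supplies the divergence $\partial_\mu S^{\mu\nu}$, and the trace \eqref{eq:trace} eliminates $\alpha\cdot\xi$ in favour of $\Box g$ and $\lambda$.

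\emph{Step 1.} Use the order-zero classification to fix what must be shown. Proposition~\ref{prop:dichotomy} with $\lambda\neq0$, $q\neq0$ leaves rows three and four. The branch has $k=1+\lambda/q$ constant, and in row four $\gamma/\lambda$ is also constant. So it suffices to rule out $\gamma=c\lambda$ with a constant $c\neq0$, where $q=\lambda/(k-1)$. This is a strong restriction: $q$ is a constant multiple of $\lambda$, so $\psi=\tfrac{3-k}{2(k-1)}\lambda$ is also a multiple of $\lambda$.

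\emph{Step 2.} Go through the branches in turn. These are the null branch with light-cone fronts (Lemma~\ref{lem:weight-null}, with $\lambda=2/r$ and $q=1/r$ after normalisation) and the shearing branch with $k=5$ (cylindrical fronts). On each branch I would compute $\gamma$ directly from the explicit $\xi,\lambda,q,\alpha,\beta_0$ that the order-two and order-one analysis produces. On the cone, $q=1/r$ is $\tfrac12\Box\tau$ for the retarded time. The transport structure of Lemma~\ref{lem:develop} suggests that $\Box q-\alpha\cdot\nabla q-\beta_0q$ is the conformal-weight defect, which vanishes because $\Box(1/r)=0$ away from the vertex and $1/r$ is transported with exactly the weight $q$. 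For the general moving source I would use the covariant form with $\rho$ and the retarded time. For $k=5$ I would write $\gamma=c\lambda$, substitute $\psi\propto\lambda$ into \eqref{eq:star}, and compare the $\xi$-transverse components. Because $d\alpha$ is antisymmetric, $i_\xi d\alpha$ has no $\xi$-component, so the $\xi$-component of the differentiated identity gives an ODE along the rays of the form $\xi\cdot\nabla(\gamma/\lambda)=(\text{const})\cdot(\gamma/\lambda)\cdot\Theta$, where $\Theta$ is the expansion. Since $\gamma/\lambda$ is constant, this forces $c\,\Theta=0$, and $\Theta\neq0$ on these branches (nonzero $\lambda$ is expansion by \eqref{eq:trace}). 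Therefore $c=0$.

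\emph{Main obstacle.} The hard step is computing $\Box q$ from \eqref{eq:star} uniformly. It requires controlling $\operatorname{div}(i_\xi d\alpha)$ and $\nabla\hat\beta$, which are not fixed by the order-two data alone. My first attempt would be a general identity $\gamma=\Theta\cdot(\text{expression in }\psi,\lambda)$ valid whenever $q\propto\lambda$, derived from the trace \eqref{eq:trace}. If that does not close, I would fall back to the branch-by-branch explicit normal forms and verify $\gamma=0$ symbolically, consistent with the \Sm\ marker on the statement.
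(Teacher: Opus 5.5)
Your fallback, computing $\gamma$ explicitly on each branch, is presumably what the \Sm\ marker stands for. The argument you actually propose, however, has a genuine gap, and it fails on exactly the branch the corollary singles out.

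The $k=5$ branch is not a rank-one branch in $3+1$ dimensions. Take $\xi=\nabla(t-\rho)$ with $\rho=\sqrt{x^2+y^2}$. The screen is spanned by the angular direction and $\partial_z$, and on it $H|_W=\mathrm{diag}(-\rho^{-1},0)$ while $G|_W=-\mathrm{diag}(1,1)$. So \eqref{eq:umb} fails and no single constraint is admissible (Theorem~\ref{thm:par} and the remarks after it). The branch exists only for the pair $\{h_1,u_z\}$, governed by \eqref{eq:R1}--\eqref{eq:R3}. Proposition~\ref{prop:dichotomy}, the trace \eqref{eq:trace} and \eqref{eq:star} are therefore not available there. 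Indeed, insert $\psi=-\tfrac14\lambda$ (your $\tfrac{3-k}{2(k-1)}\lambda$ at $k=5$) into the $\xi$-component of \eqref{eq:star}. You are then rerunning the proof of Theorem~\ref{thm:nullk}, and Lemma~\ref{lem:raych} gives $-\tfrac14\bigl[\tfrac12\lambda^2+\tfrac12|d\xi(m,\bar m)|^2\bigr]=0$. That is a contradiction, not an equation for $\gamma/\lambda$.

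Independently of this, the law $\xi\cdot\nabla(\gamma/\lambda)=(\mathrm{const})(\gamma/\lambda)\Theta$ is asserted, not derived. $\gamma=\alpha\cdot\nabla q+\beta_0q-\Box q$ contains $\beta_0$ and second derivatives of $q$. The $\xi$-component of \eqref{eq:star} only transports $\psi$. One more derivative brings in $\xi\cdot\nabla\hat\beta$ and $\partial\cdot(i_\xi d\alpha)$, which, as you note yourself, nothing controls. Even granting the ODE, you would need its constant to be nonzero. Your list of branches also omits the twisting (Robinson) branch and the hyperboloids.

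What closes the proof is an explicit computation on each branch, and the mechanism is homogeneity rather than a transport law.
\begin{itemize}
\item \emph{Light cone.} Lemma~\ref{lem:weight-null}(3) already gives $\gamma=\tfrac{2C}{\rho^5}\bigl[\rho(1+\dot C)+C(1-A)\bigr]$, which vanishes because $\lambda/q=\mathrm{const}$ forces $C=0$. Directly: $\Box\rho^{-1}=0$ and $\alpha\cdot\nabla\rho^{-1}=(6A-2)\rho^{-3}=-\beta_0\rho^{-1}$.
\item \emph{Robinson and moving-centre branches.} The constraint is a conformal symmetry, and $\gamma=0$ by Proposition~\ref{prop:conformal}.
\item \emph{Concentric hyperboloids.} The constraint is the dilation $x\cdot\nabla u+mu$, with $\alpha=0$, $\beta_0=0$ and $q$ constant, so $\gamma=0$.
\item \emph{Cylinder.} Use the rank-two order-zero condition of Proposition~\ref{prop:weights}, $\Box A=c'A+A^5\gamma(\omega)$, with $A=\rho^{-1/2}$; here $\gamma(\omega)$ is the coefficient in \eqref{eq:R3}, not the $\gamma$ of order zero. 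Since $\Box A=-\tfrac14A^5$, the condition reads $c'=-\rho^{-2}\bigl(\tfrac14+\gamma(\sigma,\varphi)\bigr)$. As $\rho$ is not a function of $(\sigma,\varphi)$, this forces $c'=0$. Equivalently, in the $2+1$ problem left by $u_z=0$: $q=1/(2\rho)$, $\beta_0=\rho^{-2}$, $\alpha\cdot\nabla q=-\rho^{-3}$, $\Box q=-\tfrac12\rho^{-3}$, hence $\gamma=0$.
\end{itemize}
In every case $\Box A/A$ is $A^{k-1}$ times a function of $\omega$, while $A^{k-1}$ itself varies along the rays because $\lambda\neq0$. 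A mass term therefore has the wrong homogeneity.
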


\begin{remark}[inhomogeneous constraint]\label{rem:inhom}
With $q_0\neq0$ and $q\neq0$, order zero gives $F=A(u+q_0/q)^k+\dots$: the constraint
only shifts the argument of $F$. With $q=0$ it gives $F=Ce^{(\lambda/q_0)u}$ up to linear
and constant terms for $\lambda\neq0$, and a quadratic polynomial for $\lambda=0$ \Sm. The
exponential row, which contains the Liouville equation, is realised by the
pair of constraints of Theorem~\ref{thm:liouville}. For a single constraint we
set $q_0=0$ from here on.
\end{remark}

The row $q\neq0$, $\lambda=0$ is realised by the multiplicative separation of the
logarithmic nonlinearity, the gausson mechanism \cite{BBM} (supplementary
material, Proposition~S10.2).

\section{Order two: the shape of the level sets}\label{sec:order2}
For $\xi=\nabla g$ the order-two condition \eqref{eq:o2} is a condition on
the matrix of second derivatives $\partial_\mu\partial_\nu g$. Its content is
elementary: restricted to directions tangent to a level set $g=\mathrm{const}$,
this matrix must be a multiple of the scalar product. A surface in
$\mathbb R^3$ with this property bends equally in all directions and is a
plane or a sphere; in $M^{1,3}$ the possibilities are hyperplanes,
hyperboloids $(x-x_0)\cdot(x-x_0)=\mathrm{const}$ and light cones
(Theorem~\ref{thm:umbclass}). The position and size of the level set may
depend on the level $g$; how they may depend on it is decided at orders
one and zero (\S\S\ref{sec:null}--\ref{sec:nonnull}). In differential
geometry such hypersurfaces are called totally umbilic. The quantity
$s=\xi\cdot\xi$ distinguishes characteristic level sets ($s=0$) from timelike
or spacelike normals ($s\neq0$), and $W$ denotes the directions tangent to
the level set and transverse to $\xi$ (the screen, \S\ref{sec:conv}).

Throughout this section $\xi=\nabla g$ and $s:=\xi\cdot\xi$. This is an
assumption on both branches: twisting non-null constraints exist
(Remark~S10.1), and so do twisting null ones
(supplementary material, Proposition~S14.3); the twisting cases are treated in
\S\ref{sec:twist}. By the multiplier freedom of \S\ref{sec:det} any hypersurface-orthogonal
$\xi$ may be normalised to $\nabla g$; the remaining freedom is
$g\mapsto G(g)$. Let $H_{\mu\nu}=\partial_\mu\partial_\nu g$, so that
$S^{\mu\nu}=2H^{\mu\nu}$ and $H^{\mu\nu}\xi_\nu=\tfrac12\partial^\mu s$.

\begin{proposition}\label{prop:order2}
Contracting \eqref{eq:o2} with $\xi$ gives
\begin{equation}\label{eq:ds}
  \partial^\mu s=\Bigl[\lambda+\tfrac12(\alpha\cdot\xi)\Bigr]\xi^\mu+\tfrac12\,s\,\alpha^\mu .
\end{equation}
Consequently: \Sm
\begin{enumerate}
\item[(a)] Null branch, $s\equiv0$: $\lambda=-\tfrac12(\alpha\cdot\xi)$, and
  \eqref{eq:trace} gives $\lambda=\Box g$ for a constraint of rank one.
  At higher rank the trace relation acquires defect terms (supplementary material).
\item[(b)] Non-null branch, $s\neq0$: \eqref{eq:o2} is equivalent to
  the total umbilicity \eqref{eq:umb} of the level sets, with no
  condition on $s$. The coefficients are determined uniquely,
  \[
    \lambda=\tfrac13\bigl(2\Box g-\sigma\bigr),\qquad
    \alpha=\frac2s\,\partial s-\frac{\lambda+\sigma}{s}\,\xi,\qquad
    \sigma:=\xi\cdot\partial\ln|s| .
  \]
  For $s=\varepsilon=\pm1$ they reduce to $\alpha=-(\lambda/\varepsilon)\xi$,
  $2\Box g=3\lambda$.
\end{enumerate}
\end{proposition}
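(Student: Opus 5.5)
The plan is to contract the order-two condition \eqref{eq:o2} with $\xi_\nu$ and use $H^{\mu\nu}\xi_\nu=\tfrac12\partial^\mu s$. Since $S^{\mu\nu}=2H^{\mu\nu}$, the left side becomes $\partial^\mu s$. On the right, $\lambda\eta^{\mu\nu}\xi_\nu=\lambda\xi^\mu$, and
\[
\alpha^{(\mu}\xi^{\nu)}\xi_\nu=\tfrac12\bigl(\alpha^\mu s+\xi^\mu(\alpha\cdot\xi)\bigr).
\]
Together these give \eqref{eq:ds} at once.

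\emph{Null branch.} If $s\equiv0$, then $\partial s=0$ and the term $s\alpha$ drops out. What remains is $[\lambda+\tfrac12(\alpha\cdot\xi)]\xi=0$, and since $\xi\neq0$ we get $\lambda=-\tfrac12(\alpha\cdot\xi)$. Substituting into the trace \eqref{eq:trace} gives $2\Box g=4\lambda-2\lambda$, that is, $\lambda=\Box g$. For higher rank I would only remark that \eqref{eq:o2} is modified by extra symmetrised terms $\alpha_i\odot\xi_i$, and these add defect terms to the trace; the details are deferred as in the statement.

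\emph{Non-null branch.} I would decompose \eqref{eq:o2} with respect to $\xi$ and its orthogonal complement $W$, which is the tangent space of the level set. The decomposition has three parts.
\begin{itemize}
\item The $W\times W$ block of $\alpha\odot\xi$ vanishes. So \eqref{eq:o2} restricted to tangent vectors says $2H|_W=\lambda\,\eta|_W$, which is the total umbilicity condition \eqref{eq:umb}.
\item The mixed and $\xi\xi$ components are exactly the contraction \eqref{eq:ds}. Because $s\neq0$, this can always be solved for $\alpha$ with no condition on $s$. Conversely, umbilicity together with a suitable choice of $\alpha$ reconstructs \eqref{eq:o2}. This gives the claimed equivalence.
\item To fix the coefficients, I would contract \eqref{eq:ds} with $\xi$ to get $\xi\cdot\partial s=s\lambda+s(\alpha\cdot\xi)$, so $\alpha\cdot\xi=\sigma-\lambda$. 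The trace \eqref{eq:trace} then gives $2\Box g=3\lambda+\sigma$, hence $\lambda=\tfrac13(2\Box g-\sigma)$. Solving \eqref{eq:ds} for $\alpha$ gives $\alpha=\tfrac2s\partial s-\tfrac{2\lambda+(\alpha\cdot\xi)}{s}\xi=\tfrac2s\partial s-\tfrac{\lambda+\sigma}{s}\xi$, as stated.
\end{itemize}
Uniqueness holds because $\lambda$ is fixed by the trace once $\alpha\cdot\xi$ is known, and $\alpha$ is then fixed by \eqref{eq:ds}. For $s=\varepsilon$ constant we have $\partial s=0$ and $\sigma=0$, so the formulas reduce to $\alpha=-(\lambda/\varepsilon)\xi$ and $2\Box g=3\lambda$.

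The main obstacle is the converse direction of the equivalence in (b). One has to check that, with the $\alpha$ defined above, every component of \eqref{eq:o2} holds, and not only the $W$ block and the contraction with $\xi$. The $(\xi,w)$ components with $w\in W$ read $H(\xi,w)=\tfrac12 w\cdot\partial s=\tfrac14 s\,(\alpha\cdot w)$, and this holds by construction of $\alpha$. The $(\xi,\xi)$ component must be compatible with the $W$-trace through the definition of $\lambda$. Here I would check carefully that \eqref{eq:trace} and \eqref{eq:ds} together encode exactly these remaining components.
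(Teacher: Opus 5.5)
Your proof is correct and follows the paper's route: you contract \eqref{eq:o2} with $\xi$, split it into screen, mixed and $(\xi,\xi)$ components, and fix $\lambda$ and $\alpha$ from the trace together with the $\xi$-contraction of \eqref{eq:ds}. The converse you flag as the main obstacle is already settled by your own bullets: \eqref{eq:ds} contains both the mixed and the $(\xi,\xi)$ components and holds for your $\alpha$ with any $\lambda$, so the only requirement is that $\lambda$ be the umbilicity factor, and this equals $\tfrac13(2\Box g-\sigma)$ because the trace splits over $\xi$ and $W$ as $\Box g=\tfrac32\lambda+\tfrac12\sigma$.
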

\noindent The proof splits \eqref{eq:o2} into its screen, mixed and $(\xi,\xi)$
components (supplementary material).

\begin{proposition}[umbilicity]\label{prop:umb}
Contracting \eqref{eq:o2} with two screen vectors gives
\begin{equation}\label{eq:umb}
  2H\big|_W=\lambda\,G\big|_W,
\end{equation}
where $G$ is the induced metric: the level sets of $g$ are totally
umbilic. \Sm
\end{proposition}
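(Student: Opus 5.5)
The plan is to contract the order-two condition \eqref{eq:o2} with two screen vectors and read off the result. With $\xi=\nabla g$ we have $S^{\mu\nu}=2\partial^{(\mu}\xi^{\nu)}=2H^{\mu\nu}$, so \eqref{eq:o2} reads
\[
  2H^{\mu\nu}=\lambda\,\eta^{\mu\nu}+\alpha^{(\mu}\xi^{\nu)} .
\]
Take $v,w\in W$, that is, vectors tangent to a level set $g=\mathrm{const}$, so that $\xi\cdot v=\xi\cdot w=0$, and transverse to $\xi$. Lower indices and contract with $v^\mu w^\nu$. The symmetrised term gives
\[
  \tfrac12\bigl[(\alpha\cdot v)(\xi\cdot w)+(\alpha\cdot w)(\xi\cdot v)\bigr]=0,
\]
because both factors $\xi\cdot v$ and $\xi\cdot w$ vanish. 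What remains is $2H(v,w)=\lambda\,\eta(v,w)$. Since $\eta$ restricted to $W$ is the induced metric $G|_W$, this is exactly \eqref{eq:umb}.

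To conclude total umbilicity I would then relate $H|_W$ to the second fundamental form of the level set. Take a curve in the level set with tangent $v$. Differentiating $\xi_\mu v^\mu=0$ along it gives $H(v,v)=-\xi\cdot\nabla_v v$. So, up to the normalisation by $\xi$, $H$ restricted to tangent directions is the second fundamental form.

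\textbf{Non-null branch, $s\neq0$.} Here $W$ is the full tangent space $\xi^\perp$, which is three-dimensional. The relation \eqref{eq:umb} therefore says that the shape operator is $\lambda/(2\sqrt{|s|})$ times the identity, which is the classical definition of a totally umbilic hypersurface.

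\textbf{Null branch, $s=0$.} Here $\xi$ lies in its own orthogonal complement, and the tangent space $\xi^\perp$ is degenerate. The screen $W$ is then a two-dimensional complement of $\xi$ inside $\xi^\perp$. Umbilicity in the null sense means that the null second fundamental form on the screen is proportional to the induced (positive-definite, up to sign) metric. This is precisely \eqref{eq:umb}, i.e. the shear-free condition.

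There is one more point to handle on the null branch: the statement should not depend on the choice of screen. If a screen vector is shifted, $v\mapsto v+c\xi$, then $H(\xi,\cdot)=\tfrac12\partial s=0$ on this branch by \eqref{eq:ds}, and $\eta(\xi,\cdot)$ vanishes on $\xi^\perp$. Hence neither side of \eqref{eq:umb} changes.

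The main obstacle is this null case. There I would make sure the interpretation uses $H(\xi,\cdot)|_{\xi^\perp}=0$, which follows from $s\equiv0$; the algebraic contraction itself is immediate.
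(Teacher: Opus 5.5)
Your proof is correct and is the paper's own argument: contracting \eqref{eq:o2} with two vectors orthogonal to $\xi$ kills the term $\alpha^{(\mu}\xi^{\nu)}$ and leaves $2H|_W=\lambda\,G|_W$. Your extra remarks on the second fundamental form and on screen independence in the null case are sound additions, although the vanishing of $H(\xi,\cdot)$ there follows directly from $H^{\mu\nu}\xi_\nu=\tfrac12\partial^\mu s$ with $s\equiv0$, not from \eqref{eq:ds}.
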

For example, $g=|\vec x|$ in $\mathbb R^3$ has
$\partial_i\partial_jg=(\delta_{ij}-x_ix_j/|\vec x|^2)/|\vec x|$, which on vectors tangent to the
sphere $|\vec x|=\mathrm{const}$ equals $|\vec x|^{-1}$ times the identity; the
factor $\lambda/2$ in \eqref{eq:umb} plays the role of the curvature
$1/|\vec x|$.

\begin{theorem}\label{thm:umbclass}
A totally umbilic hypersurface of $M^{1,3}$ with constant umbilicity
factor on a connected component is a hyperplane, a pseudosphere
$(x-x_0)\cdot(x-x_0)=\mathrm{const}$, or a light cone. \Qm\
(non-degenerate case: \cite[Ch.~4]{ONeill}; null case: \cite{DB})
\end{theorem}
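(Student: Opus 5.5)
The plan is to treat the non-degenerate and the null case separately, since the tools differ. Throughout I work locally on a connected component $\Sigma$ and write the umbilicity condition as $\mathrm{II}=\kappa\,G$ on the tangent (or screen) directions, with $\kappa$ constant.

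\emph{Non-degenerate case} ($\Sigma$ spacelike or timelike, so the unit normal $N$ with $N\cdot N=\varepsilon=\pm1$ is defined).
\begin{itemize}
\item The shape operator is $dN(X)=-\kappa X$ for every tangent $X$.
\item If $\kappa=0$, then $N$ is constant along $\Sigma$. Hence $N\cdot x$ is constant on $\Sigma$, and $\Sigma$ lies in a non-null hyperplane.
\item If $\kappa\neq0$, consider the candidate centre $x_0:=x+\kappa^{-1}N$. Differentiating along $\Sigma$ gives $X+\kappa^{-1}dN(X)=X-X=0$, so $x_0$ is constant.
\item Then $(x-x_0)\cdot(x-x_0)=\kappa^{-2}N\cdot N=\varepsilon\kappa^{-2}$ is constant, and $\Sigma$ is a pseudosphere.
\item This is the classical argument of \cite[Ch.~4]{ONeill}. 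The one point to check is that constancy of $\kappa$ is what makes $x_0$ constant. In dimension $\ge3$ it would follow from Codazzi, but here it is simply assumed.
\end{itemize}

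\emph{Null case} ($\Sigma$ is null, with null normal $\ell$ tangent to $\Sigma$ and generating straight null geodesics).
\begin{itemize}
\item Umbilicity now means that the null second fundamental form restricted to the screen $W$ is $\kappa$ times the screen metric: the generators are shear-free. The normalisation of $\ell$ along each generator is a gauge freedom, and "constant factor" is understood after fixing it, for instance by an affine parameter together with $\ell\cdot T=1$ for a fixed timelike $T$.
\item Suppose $\kappa=0$. Then $\nabla_X\ell\propto\ell$ for all tangent $X$, so the null direction $[\ell]$ is constant. Hence $\Sigma$ is a piece of a null hyperplane $n\cdot x=c$.
\item Suppose $\kappa\neq0$. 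Mimicking the non-degenerate case, set $p:=x-\kappa^{-1}\ell$ with the rescaling of $\ell$ chosen so that $\nabla_X\ell=\kappa X$ (mod $\ell$) on the screen. Then $dp(X)=0$ for screen $X$.
\item Along the generator, $dp(\ell)=\ell-\kappa^{-1}\nabla_\ell\ell$. For $x$ moving along a straight ray $x=p+s\ell$, the natural choice $\ell\mapsto\ell/s$ makes $p$ constant there too.
\item Consequently every generator passes through the single point $p=x_0$, and $\Sigma$ lies in the cone $(x-x_0)\cdot(x-x_0)=0$.
\item This is the content quoted from \cite{DB}.
\end{itemize}

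\emph{Main obstacle.} The delicate step is the null case. The null second fundamental form is defined only up to the rescaling of $\ell$, so "constant umbilicity factor" must be made precise before the vertex $x_0$ can be shown to be independent of the point. I would first fix the normalisation by requiring $\theta=\mathrm{tr}_W\nabla\ell=2\kappa$ along affinely parametrised generators. I would then use the Raychaudhuri (Sachs) equation in flat space, $\dot\theta=-\tfrac12\theta^2$ (shear and twist vanish), to get $\theta=2/(s-s_0)$. Finally I would show that $s_0$, after the affine reparametrisation, defines a single point on all generators, by differentiating the vertex $x-(s-s_0)\ell$ in screen directions and using $\nabla_X\ell=(\theta/2)X$.
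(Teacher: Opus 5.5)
Your non-degenerate case is correct: it is the centre-of-curvature argument behind the O'Neill reference. The null case with $\kappa=0$ is also correct.

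\textbf{The gap is the null case with $\kappa\neq0$.} Umbilicity constrains only the screen part of $\nabla\ell$. For every tangent $Y$ one has $\nabla_Y\ell=\kappa Y+\tau(Y)\,\ell$, where $\tau$ is a one-form (the rotation one-form) about which umbilicity says nothing. For $p=x-\kappa^{-1}\ell$ this gives $dp(Y)=\kappa^{-2}\bigl(Y\kappa-\kappa\,\tau(Y)\bigr)\ell$. So your claim $dp(X)=0$ holds only modulo $\ell$, and nothing you wrote prevents $p$ from drifting along null directions. There are three further problems with the null case as written.
\begin{itemize}
\item The step along the generator is circular: writing $x=p+s\ell$ presupposes that the ray passes through $p$. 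Moreover, once $\ell$ is normalised so that $\kappa$ is constant, $\ell$ is fixed up to a constant factor, so there is no rescaling $\ell\mapsto\ell/s$ left to choose.
\item The gauge you first propose (affine parameter, $\ell\cdot T=1$) cannot be the intended reading of the hypothesis. On a light cone it gives $\kappa=1/s$, which is not constant (compare $\lambda=2/\rho$ in Lemma~\ref{lem:weight-null}).
\item The Raychaudhuri equation in your last paragraph is only the $\ell$-component of what is needed, namely $\ell(\kappa)=\kappa\,\tau(\ell)$. It gives $\theta=2/(s-s_0)$ on each generator, but says nothing about how $s_0$ and the affine parameter vary from generator to generator. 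The identity $\nabla_X\ell=(\theta/2)X$ you would use there is again true only modulo $\ell$.
\end{itemize}

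\textbf{The missing ingredient is the Codazzi identity that comes from flatness.} This is what the paper alludes to when it says that constancy of the factor is automatic for $\dim W\ge2$. Differentiate $\nabla_Y\ell=\kappa Y+\tau(Y)\ell$ once more and use that the connection of $M^{1,3}$ is flat and torsion free. For all tangent $X,Y$ this gives
\[
  \bigl(X\kappa-\kappa\,\tau(X)\bigr)Y-\bigl(Y\kappa-\kappa\,\tau(Y)\bigr)X+d\tau(X,Y)\,\ell=0 .
\]
Take $X,Y$ independent in the two-dimensional screen, and then $X\in W$, $Y=\ell$; you get $d\kappa=\kappa\,\tau$ on the hypersurface. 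The conclusion then follows:
\begin{itemize}
\item $\kappa$ vanishes identically or nowhere on a connected component; the first case is your null hyperplane.
\item Where $\kappa\neq0$, one has $\nabla_Y(\ell/\kappa)=Y$ for every tangent $Y$. Hence $x-\ell/\kappa$ is a constant point $x_0$, and the hypersurface lies in the light cone of $x_0$.
\end{itemize}
This also resolves the normalisation problem you flag: $\ell/\kappa$ is precisely the gauge in which the factor is $1$ and $\tau=0$. The screen--screen components of the identity are where $\dim W=2$ enters. In $M^{1,2}$ they are absent, and every null surface is umbilic.

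Alternatively, cut the hypersurface by $t=\mathrm{const}$. With $\ell=(1,\mathbf n)$, the null second fundamental form on the section is the section's shape operator in $\mathbb R^3$. So the section is a piece of a plane or a sphere, and its null normals sweep a null hyperplane or a light cone.
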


Constancy of the umbilicity factor on a connected component is
automatic for $\dim W\ge2$ by the Codazzi equation
(supplementary material); for the shearing branch, where the screen is
one-dimensional, it is an extra hypothesis, and this is one reason that
branch needs rank two.

\begin{proposition}[expansion and shear]\label{prop:shear}
On the null branch, $\lambda=0$ if and only if the congruence generated
by $\xi$ is expansion free, and then it is shear free as well. In words:
$\lambda$ measures how fast a thin bundle of light rays spreads, and order
two forbids any anisotropic deformation of the bundle.
\end{proposition}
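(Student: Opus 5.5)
We work on the null branch with $\xi=\nabla g$, $s=\xi\cdot\xi\equiv0$, and read the expansion and shear off the Hessian $H_{\mu\nu}=\partial_\mu\partial_\nu g$ restricted to the screen. Since $H^{\mu\nu}\xi_\nu=\tfrac12\partial^\mu s=0$, the vector $\xi$ is geodesic ($\xi^\nu\partial_\nu\xi^\mu=0$) and $H$ annihilates $\xi$. We complete $\xi$ to a null frame $(\xi,\ell,e_1,e_2)$ with $\ell$ null, $\xi\cdot\ell=1$, and $e_a$ orthonormal spacelike vectors orthogonal to both. We identify $W$ with the span of the $e_a$.

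The optical quantities of the congruence are encoded in the $2\times2$ screen matrix $B_{ab}:=H(e_a,e_b)$, which is symmetric because $\xi$ is a gradient. Its trace is the expansion, up to sign and the factor $\tfrac12$ fixed by the conventions, and its trace-free part is the shear. The twist vanishes identically since $\xi$ is a gradient.

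Step 1: relate $\operatorname{tr}B$ to $\Box g$. Decompose $\eta^{\mu\nu}=\xi^\mu\ell^\nu+\ell^\mu\xi^\nu-\sum_a e_a^\mu e_a^\nu$. Then
\[
\Box g=2H(\xi,\ell)-\textstyle\sum_a B_{aa}.
\]
Because $H\xi=0$, the first term vanishes, so $\Box g=-\operatorname{tr}B$. By Proposition~\ref{prop:order2}(a), $\lambda=\Box g$ for a rank-one constraint. Hence $\lambda=0$ holds if and only if $\operatorname{tr}B=0$, that is, if and only if the congruence is expansion free. This gives the equivalence.

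Step 2: show that the shear vanishes. Contract \eqref{eq:o2} with two screen vectors, which is Proposition~\ref{prop:umb}:
\[
2B_{ab}=\lambda\, G(e_a,e_b)=-\lambda\,\delta_{ab}.
\]
The term $\alpha^{(\mu}\xi^{\nu)}$ drops out because $\xi\cdot e_a=0$. So $B$ is pure trace whatever $\lambda$ is: the shear vanishes on the whole null branch, and in particular when $\lambda=0$. Then $B=0$ entirely. This is the sense in which order two forbids anisotropic deformation.

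The main obstacle is conventions rather than substance. One has to check that the screen components of $H$ are well defined independently of the choice of $\ell$. This holds because changing $\ell$ shifts $e_a$ by multiples of $\xi$, and $H\xi=0$. One also has to match the sign and normalisation of "expansion" to $\operatorname{tr}B$. A possible subtlety is whether $\lambda=\Box g$ holds only at rank one. The statement concerns a single constraint, so Proposition~\ref{prop:order2}(a) suffices. Even without it, the trace of the umbilicity relation gives $\operatorname{tr}B=-\lambda$ directly, so $\lambda=0$ is equivalent to vanishing expansion independently of \eqref{eq:trace}.
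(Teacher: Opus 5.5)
Your proof is correct and is essentially the paper's argument: contracting \eqref{eq:o2} with two screen vectors kills the $\alpha\odot\xi$ term, so the screen part of the deformation is $\lambda$ times the screen metric, whose trace is the expansion and whose trace-free part is the shear. Your null-frame computation $\Box g=-\operatorname{tr}B$ is only a consistency check with Proposition~\ref{prop:order2}(a), and your remark that the shear vanishes on the whole null branch, not only at $\lambda=0$, is correct; the paper works directly with $S$ and so does not need the normalisation $\xi=\nabla g$.
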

\begin{proof}
For $\lambda=0$, \eqref{eq:o2} reads $S=\alpha\odot\xi$, whose contraction
with two screen vectors vanishes. The trace of that contraction is the
expansion $\theta$ and its trace-free part the shear $\sigma$; both
vanish simultaneously. The statement ``$\theta\equiv0\Rightarrow
\sigma\equiv0$'' is therefore not an additional lemma but the single
condition $\lambda=0$ read twice.
\end{proof}

\section{One constraint: characteristic hyperplanes and lifting}\label{sec:lifting}
On characteristic surfaces the wave operator loses its derivatives in
the direction of the phase $g$; the reduced equation then involves $g$
only as a parameter, and any parameter of a family of its solutions may
be replaced by an arbitrary function of $g$. Physically, the solution
is a two-dimensional configuration (a wall, a soliton, a vortex
lattice) whose parameters vary along a wave front moving at the speed
of light. The reduced equations are:
\begin{align*}
  &\text{null planes (plane):}&& \Box U(g,y^1,y^2)=-\Delta_\perp U,\\
  &\text{null planes (curved \cite{ZP}):}&& \Box U(g,r,z)=-\bigl(U_{rr}+\tfrac2rU_r+U_{zz}\bigr),\\
  &\text{light cone:}&& \Box\bigl(S(g,\theta^A)/\rho\bigr)=-\rho^{-3}\Delta_{S^2}S,\\
  &\text{shearing cylinder:}&& \Box\bigl(\rho^{-1/2}\Lambda(g,\varphi)\bigr)=-\rho^{-5/2}\bigl(\Lambda_{\varphi\varphi}+\tfrac14\Lambda\bigr).
\end{align*}
In the curved case $r=n'(g)\cdot x+c'(g)$ and $z=b(g)\cdot x$ are built from the
phase data in an adapted Frenet frame of the null curve $n(g)$ \cite{ZP}. Hence:

\begin{proposition}[lifting]\label{prop:lifting}
Let $U(\,\cdot\,;c_1,\dots,c_N)$ be any family of solutions of the reduced
equation of one of these branches (on the cone and cylinder branches
$F=cu^3$ and $F=cu^5$ respectively), depending smoothly on parameters $c$
in an open set $\mathcal U\subset\mathbb R^N$. Then for arbitrary smooth functions
$c_a(g)$ with values in $\mathcal U$ the function $U(\,\cdot\,;c(g))$ solves
$\Box u=F(u)$. The family contributes $N$ arbitrary functions of $g$, one
for each parameter. Additional functional freedom may enter
independently through the choice of the characteristic phase itself
(for example the null curve $n(g)$ and $c(g)$ of a family of null
hyperplanes); the two sources are not to be added in a single count.
\end{proposition}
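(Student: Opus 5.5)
The plan is to reduce Proposition~\ref{prop:lifting} to Lemma~\ref{lem:develop}, applied once to each parameter. Fix a branch with characteristic phase $g$, so $(\nabla g)^2=0$, and write the ansatz as $u=U(g,y;c)$, where $y$ denotes the transverse variables of that branch. The reduced equation is the statement that $U(\,\cdot\,;c)$ solves $\Box U=F(U)$ for every fixed $c$.

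\textbf{One arbitrary function.} First suppose only $c_1=c_1(g)$ varies and set $\tau=g$. Lemma~\ref{lem:develop} then gives
\[
\Box u-F(u)=2\nabla g\cdot\nabla h+(\Box g)h,\qquad h=\partial_{c_1}U\,c_1'(g).
\]
So the whole proof comes down to showing that $h$ satisfies the transport law $\nabla g\cdot\nabla h+\tfrac12(\Box g)h=0$.

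\textbf{Verifying the transport law branch by branch.} In each case $h$ is obtained by differentiating, in a parameter, a function that already has the branch's ansatz form.
\begin{itemize}
\item \emph{Flat null planes.} Here $\Box g=0$. The transverse coordinates $y^1,y^2$ and the phase $g$ are constant along the rays of $\nabla g$, so $\nabla g\cdot\nabla y^A=0$ and $\nabla g\cdot\nabla g=0$. Hence $h$, a function of $(g,y)$, is constant along the rays.
\item \emph{Curved null planes.} This holds for the same reason, provided the adapted coordinates $r=n'(g)\cdot x+c'(g)$ and $z=b(g)\cdot x$ satisfy $\nabla g\cdot\nabla r=\nabla g\cdot\nabla z=0$ and $\Box g=0$. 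I would check these from the Frenet-frame relations $n\cdot n'=0$ and $n\cdot b=0$ of \cite{ZP}, together with the formula for $\nabla g$ obtained by differentiating $n(g)\cdot x+c(g)=0$.
\item \emph{Light cone.} Here $h=\partial_c S(g,\theta;c)/\rho$, and the weight is $q=\tfrac12\Box g=1/\rho$. The law $\nabla g\cdot\nabla(\tilde S/\rho)+\rho^{-1}\tilde S/\rho=0$ holds because $\theta^A$ is constant along the rays and $\nabla g\cdot\nabla\rho=-1$ in the normalisation where $\Box g=2/\rho$. This is the $1/\rho$ transport noted after Lemma~\ref{lem:develop}.
\item \emph{Shearing cylinder.} Here $h=\rho^{-1/2}\partial_c\Lambda$ and $\Box g=1/\rho$, so the same computation applies with exponent $-1/2$.
\end{itemize}
In every case the ansatz prefactor is precisely the transport solution, because the reduced equation was derived by requiring that $\Box$ of the ansatz contain no $\partial_g$ terms. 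I would make this observation the uniform argument: the absence of $\partial_g$ in the displayed reduced operators is exactly the identity $2\nabla g\cdot\nabla(\cdot)+(\Box g)(\cdot)=0$ on ansatz functions. Applying it to the $g$-derivative of the ansatz, with $c$ held fixed in the ansatz and $\partial_g$ acting only through $c(g)$, gives the claim.

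\textbf{Several parameters and the count.} For $N$ parameters I would not iterate the lemma. Instead I apply the chain rule directly, as in its proof, with $\tau$ replaced by $c(g)$:
\[
\Box u=\Box U+2\sum_a c_a'\,\nabla g\cdot\nabla\partial_{c_a}U+\sum_a c_a'(\Box g)\,\partial_{c_a}U+(\nabla g)^2(\cdots).
\]
The last term vanishes, and each middle term vanishes by the transport law above. Values of $c(g)$ in $\mathcal U$ keep $U$ defined. The functions $c_a(g)$ are independent, giving $N$ of them. The remark on counting is interpretive and needs no proof beyond noting that reparametrising the phase can absorb freedom into $c_a(g)$.

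\textbf{Main obstacle.} I expect the hardest step to be the curved-null-plane branch, namely checking $\nabla g\cdot\nabla r=0$ and $\Box g=0$ in the Frenet frame of \cite{ZP}. I would first try the flat reparametrisation argument: differentiating $n(g)\cdot x+c(g)=0$ gives $\nabla g=-n/(n'\cdot x+c')$, which is proportional to $n$, and the needed orthogonalities then follow from the frame relations.
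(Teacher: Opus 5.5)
Your proposal is correct and is essentially the paper's proof. The paper observes that the displayed identities hold for an arbitrary $g$-dependence of the profile and contain no $g$-derivatives, so $g$ is a dummy parameter and $c_a\mapsto c_a(g)$ maps solutions to solutions; your route through Lemma~\ref{lem:develop} and the transport law, which you yourself reduce to exactly this observation, is the same fact unpacked. One sign slip needs fixing: on the cone, and likewise on the cylinder, $\nabla g\cdot\nabla\rho=\ell\cdot V=+1$, not $-1$, and it is the value $+1$ that makes $-\tilde S/\rho^2$ cancel against $\tfrac12(\Box g)\,\tilde S/\rho=\tilde S/\rho^2$.
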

\begin{proof}
By the displayed identities the phase enters the reduced equation only as
a dummy parameter, so the substitution $c_a\mapsto c_a(g)$ maps solutions to
solutions.
\end{proof}

No integrability of the transverse equation is required, only a
parametrised family of its solutions. The following families lift in
this way (the transverse equation being $\Delta_\perp\psi=\omega(\psi)$, i.e.\ $F=-\omega$, on
the plane branch with $\varsigma=t+x$).
\begin{itemize}
\item \emph{Elliptic sine-Gordon} $\Delta_\perp\psi=\sin\psi$: the Hirota $N$-soliton
  $\psi=4\arctan(G/F)$, $F+iG=\tau=(1+if_1)*\dots*(1+if_N)$,
  $f_j=\exp(y\cos\theta_j+z\sin\theta_j+\eta_j)$, $f_j*f_l=s_{jl}f_jf_l$,
  $s_{jl}=-\tan^2\frac{\theta_j-\theta_l}2$, lifts with $2N$ arbitrary functions $\theta_j(\varsigma),\eta_j(\varsigma)$ \cite{Hir72,K23}; for $N=1$ it is a domain wall whose
  orientation and offset vary arbitrarily along the light front; with real
  data these solutions are smooth (\S\ref{sec:smoothsg}). The doubly periodic families $w=\tan(\psi/4)=G/F$ with elliptic $F$, $G$
\cite{BK88,K23} lift with each of their parameters (translations,
rotation, elliptic moduli) replaced by an arbitrary function.
\item \emph{Generalised separation and Tzitz\'eica solitons}: the separable
  solutions $\psi=a(f(y)+g(z))$ of \cite[Chap.~5]{AKPR} and the elliptic $N$-soliton
  solutions of the Tzitz\'eica equation \cite{K98v} lift in the same way, each
  parameter becoming an arbitrary function of $\varsigma$ (supplementary material,
  Section~S10). For the Tzitz\'eica equation both families are
  singular: the real solutions have logarithmic singularities on curves of the
  transverse plane, so, unlike the sine-Gordon case, no smooth solutions arise.
\item \emph{Liouville} $\Delta_\perp\psi+e^\psi=0$ \cite{Liou53}: $u=\ln\bigl(8|\partial_\zeta h|^2/(1+|h|^2)^2\bigr)$,
  $\zeta=y+iz$, with $h(\zeta,\varsigma)$ subject to $\partial_{\bar\zeta}h=0$ and arbitrary in $\varsigma$. The freedom is
of a different kind from the finitely many functions $c_a(\varsigma)$ above: it is
a function of the two real variables $(\zeta,\varsigma)$, restricted by the
holomorphy condition in $\zeta$, and it is available because the
transverse equation is linearisable. The simplest smooth example, a lump of
arbitrary width and position, is written out in \S\ref{sec:morelifted}.
\item \emph{Gaussons} for $F=Au+Bu\ln u$: the centre of the Gaussian solution lifts
  to an arbitrary trajectory in the transverse plane, a Gaussian packet that
  moves at the speed of light and sways sideways arbitrarily (\S\ref{sec:morelifted}).
\item \emph{Kinks, solitons and profiles}: $\tanh$ and $\sqrt2\,\mathrm{sech}$ walls for
  $F=\pm(u-u^3)$ (\S\ref{sec:morelifted}), the Aubin--Talenti profile \cite{Aubin,Talenti} with $\lambda(g)$, $z_0(g)$, the
  cone mode with $e(g)$ (Remark~\ref{rem:conesol}), the elliptic cylinder
  profiles with $E(g)$ (\S\ref{sec:par}).
\end{itemize}

For lifted sine-Gordon solutions the topological charge $\oint d_\perp u\in8\pi\mathbb Z$
along a closed transverse curve does not depend on $\varsigma$ (supplementary
material, Proposition~S13.2).

\subsection{Smooth solutions of the sine-Gordon equation}\label{sec:smoothsg}
For $F(u)=-\sin u$ the equation $\Box u=F(u)$ is the sine-Gordon equation
$u_{tt}-\Delta u+\sin u=0$, and on the plane branch the transverse equation is the
elliptic sine-Gordon equation $\Delta_\perp\psi=\sin\psi$, which describes, for example,
stationary vortex flows of an ideal fluid with stream function $\psi$
\cite{K23}. Its Hirota solutions with real data are smooth on the whole
plane \cite{K23}, and the lift turns them into smooth solutions in $3+1$
dimensions with arbitrary functions.

\begin{proposition}[smooth lifted jets and vortices]\label{prop:smoothsg}
Let $\theta_j(\varsigma)$, $\eta_j(\varsigma)$, $j=1,\dots,N$, be arbitrary smooth real functions of
$\varsigma=t+x$ such that $\theta_j(\varsigma)-\theta_l(\varsigma)\neq\pi\pmod{2\pi}$ for all $j\neq l$ and \emph{every} $\varsigma$
(so that all $s_{jl}$ are finite), and put
\[
  f_j=\exp\bigl(y\cos\theta_j+z\sin\theta_j+\eta_j\bigr),\qquad
  s_{jl}=-\tan^2\frac{\theta_j-\theta_l}2,
\]
\[
  \mathcal F+i\mathcal G=\sum_{S\subset\{1,\dots,N\}}\ i^{|S|}\prod_{j\in S}f_j\prod_{j<l\in S}s_{jl}.
\]
Then
\begin{equation}\label{eq:smoothsg}
  u=4\arctan\frac{\mathcal G}{\mathcal F}
\end{equation}
solves $u_{tt}-\Delta u+\sin u=0$ and is smooth on all of $M^{1,3}$, with $|u|<2\pi$.
\end{proposition}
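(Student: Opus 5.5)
The plan has three parts: show the formula solves the equation, show it is smooth, and show $|u|<2\pi$.

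\emph{The equation.} This follows from Proposition~\ref{prop:lifting} on the plane branch with $g=\varsigma=t+x$.

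First I verify the transverse statement with constant parameters. For constant $\theta_j,\eta_j$, the function $\psi=4\arctan(\mathcal G/\mathcal F)$ solves $\Delta_\perp\psi=\sin\psi$ in the variables $(y,z)$. This is the Hirota $N$-soliton of the elliptic sine-Gordon equation quoted from \cite{Hir72,K23}. I would recall its bilinear form,
\[
(D_y^2+D_z^2)(\mathcal F\cdot\mathcal F-\mathcal G\cdot\mathcal G)=0,\qquad (D_y^2+D_z^2)\,\mathcal F\cdot\mathcal G=\mathcal F\mathcal G .
\]
For $N\le2$ it can be checked directly. For general $N$ the phase factors $s_{jl}$ come from the dispersion relation $\cos^2\theta+\sin^2\theta=1$: one needs $(k_j-k_l)^2/(k_j+k_l)^2=-\tan^2\frac{\theta_j-\theta_l}2$ for unit vectors $k_j$, and this holds because $|k_j-k_l|^2/|k_j+k_l|^2=\tan^2\frac{\theta_j-\theta_l}2$, the sign flip coming from the factor $i$ in $1+if_j$.

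Next I lift. The solution does not depend on $t-x$, and the parameters enter only through $\theta_j$ and $\eta_j$. With $\Box=4\partial_\varsigma\partial_{\bar\varsigma}-\Delta_\perp$, where $\bar\varsigma=t-x$, letting the parameters depend on $\varsigma$ still kills the first term. Hence $\Box u=-\Delta_\perp u=-\sin u$, which is $u_{tt}-\Delta u+\sin u=0$. Equivalently, Lemma~\ref{lem:develop} applies with $\Box\tau=0$ and $\nabla\tau\cdot\nabla h=\partial_{\bar\varsigma}h=0$. The hypothesis $\theta_j-\theta_l\neq\pi$ is what keeps every $s_{jl}$ finite, so the coefficients are smooth functions of $\varsigma$.

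\emph{Smoothness and the bound $|u|<2\pi$.} This is the step I expect to be the real obstacle. The arctangent must be read as a continuous branch, namely $u=4\arg(\mathcal F+i\mathcal G)$ continued from the region where all $f_j\to0$. Then $u$ is smooth wherever $\mathcal F+i\mathcal G\neq0$, since $\arg$ is smooth away from zero.

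So the key claim is that $\tau=\mathcal F+i\mathcal G$ never vanishes for real data. Because $s_{jl}\le0$, I would factor $s_{jl}=-\mu_{jl}^2$ and rewrite
\[
\tau=\sum_S\prod_{j\in S}(if_j)\prod_{j<l\in S}(-\mu_{jl}^2).
\]
I would show that $\tau$ is a product-like expression of Pfaffian or Cauchy-determinant type, with $\mathrm{Re}\,\tau$ or another suitable combination strictly positive. This is the structure that makes the elliptic solutions regular in \cite{K23}. If that fails, I would argue pointwise in $\varsigma$: the map $(y,z)\mapsto\psi$ is the known smooth $N$-soliton for the frozen data, which gives $\tau\neq0$ with a bound uniform on compact $\varsigma$-sets. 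Joint smoothness in $(\varsigma,y,z)$ then follows from smoothness of $\tau$ in all variables together with $\tau\neq0$.

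For the bound, I would track the winding of $\arg\tau$. I expect $\arg\tau\in(-\pi/2,\pi/2)$, which gives $|u|<2\pi$. The plan is to show that $\mathcal F>0$ fails only where $|\mathcal G|>0$, keeping $\arg\tau$ within a half-turn. I would argue by continuity from the asymptotic regions, where $\tau$ is dominated by a single monomial of argument $|S|\pi/2$ modulo sign, together with $\tau\neq0$. I would first check $N=1,2$ explicitly as a sanity test. For $N=1$, $u=4\arctan f_1\in(0,2\pi)$.
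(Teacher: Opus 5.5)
Your treatment of the equation matches the paper's. You freeze $\varsigma$, use the Hirota solution of $\Delta_\perp\psi=\sin\psi$, and lift by Proposition~\ref{prop:lifting}, with finiteness of $s_{jl}$ making the coefficients smooth in $\varsigma$. The gap is in the second half. You correctly call it the crux, but you do not prove it. Your three routes are a hoped-for Pfaffian or Cauchy-determinant structure, an appeal to frozen-data regularity from the literature (close to assuming the conclusion), and a winding argument. At best these give $\tau=\mathcal F+i\mathcal G\neq0$, which only makes $4\arg\tau$ smooth.

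For the stated formula $4\arctan(\mathcal G/\mathcal F)$ to be smooth with $|u|<2\pi$, you need $\mathcal F>0$ everywhere, i.e.\ $\arg\tau\in(-\pi/2,\pi/2)$. Your criterion, that $\mathcal F>0$ fails only where $\mathcal G\neq0$, does not give this. It would let $\tau$ enter $\operatorname{Re}\tau<0$, where the principal arctangent jumps by $4\pi$ and the continuous branch exceeds $2\pi$. Your asymptotic argument is also inconclusive, because you leave the signs of the dominant monomials undetermined (``modulo sign'').

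The missing idea is a one-line sign count, which is the paper's entire regularity argument. A subset $S$ with $|S|=2m$ contributes to $\mathcal F$ the term $i^{2m}\prod_{j\in S}f_j\prod_{j<l\in S}s_{jl}$. It has $\binom{2m}{2}=m(2m-1)$ factors $s_{jl}\le0$, so its sign is $(-1)^{m}(-1)^{m(2m-1)}=(-1)^{2m^2}=+1$. Every term of $\mathcal F$ is therefore non-negative, and the empty set contributes $1$, so $\mathcal F\ge1$. Hence $\mathcal G/\mathcal F$ is smooth on all of $M^{1,3}$, no branch discussion is needed, and $|u|<2\pi$ is immediate.

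The same count for $|S|=2m+1$, with $m(2m+1)$ pairs, shows every term of $\mathcal G$ is $i$ times a non-negative number. So $\mathcal G\ge\sum_jf_j>0$, and in fact $0<u<2\pi$. In your asymptotic picture, every dominant monomial has argument $0$ or $\pi/2$, never $\pi$ or $-\pi/2$.
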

\begin{proof}
For fixed $\varsigma$, \eqref{eq:smoothsg} is the Hirota $N$-soliton solution of
$\Delta_\perp\psi=\sin\psi$ \cite{Hir72,K23}, so $u$ solves the equation by
Proposition~\ref{prop:lifting}. $\mathcal F$ is the sum over subsets $S$ of even size
$2m$, with coefficient $i^{2m}\prod s_{jl}$ over $m(2m-1)$ pairs. Since every $s_{jl}\le0$, its
sign is $(-1)^{m+m(2m-1)}=(-1)^{2m^2}=+1$, so every term of $\mathcal F$ is non-negative
and $\mathcal F\ge1$. Hence $\mathcal G/\mathcal F$ is smooth and $u$ is smooth with $|u|<2\pi$. \Sm
\end{proof}

\paragraph{One wall and two walls, explicitly.} Put
\[
  \xi_j=y\cos\theta_j(t+x)+z\sin\theta_j(t+x)+\eta_j(t+x),\qquad j=1,2,
\]
with arbitrary smooth real functions $\theta_j$, $\eta_j$ of the single variable $t+x$.
For $N=1$, \eqref{eq:smoothsg} is
\begin{equation}\label{eq:sg1}
  u=4\arctan\exp\bigl(y\cos\theta(t+x)+z\sin\theta(t+x)+\eta(t+x)\bigr).
\end{equation}
This is a domain wall across which $u$ increases from $0$ to $2\pi$. At a fixed time
the wall is the surface $y\cos\theta(t+x)+z\sin\theta(t+x)+\eta(t+x)=0$: in each plane
$x=\mathrm{const}$ it is a straight line with normal $(\cos\theta,\sin\theta)$ at signed
distance $-\eta$ from the $x$-axis, and this line turns and shifts arbitrarily as $x$
varies. For $\theta=k(t+x)$, $\eta=0$ the wall is a helicoid; for constant $\theta$ and
arbitrary $\eta$ it is a plane carrying a bend of arbitrary shape. In every case the
whole pattern travels along $-x$ at the speed of light, and for constant $\theta,\eta$
it reduces to the static planar kink.

For $N=2$, $\mathcal F=1-s_{12}f_1f_2$ and $\mathcal G=f_1+f_2$, so
\begin{equation}\label{eq:sg2}
  u=4\arctan\frac{e^{\xi_1}+e^{\xi_2}}{1+\tan^2\dfrac{\theta_1(t+x)-\theta_2(t+x)}2\;e^{\xi_1+\xi_2}} .
\end{equation}
The denominator is at least $1$, so $u$ is smooth, with $0<u<2\pi$. Away from the
crossing it reduces to single walls: where $\xi_2\to-\infty$, $u\approx4\arctan e^{\xi_1}$;
where $\xi_2\to+\infty$, $u\approx4\arctan e^{-\xi_1-\delta}$ with $\delta=2\ln\bigl|\tan\frac{\theta_1-\theta_2}2\bigr|$, the
same wall reversed and shifted by $\delta$; and symmetrically in $\xi_1$. In the
fluid picture the level lines in a plane $x=\mathrm{const}$ are two jets crossing at
a saddle point \cite{K23}. In $3+1$ dimensions the two walls cross along the curve
$\xi_1=\xi_2=0$, that is $y=Y_0(t+x)$, $z=Z_0(t+x)$; this curve, like the angle
$\theta_1-\theta_2$ between the walls, can be prescribed arbitrarily, and it travels
along $-x$ at the speed of light. Both formulas were checked against the
$3+1$ equation with sample arbitrary functions (supplementary material). \Sm

For $N\ge3$ the
configuration may contain closed level lines (vortices) between the walls \cite{K23}.
The solutions are bounded but not of finite energy.

\subsection{Further explicit lifted solutions}\label{sec:morelifted}
The solutions below are built in the same way as \eqref{eq:sg1}--\eqref{eq:sg2}. Each plane
$x=\mathrm{const}$ carries a known two-dimensional solution in the variables $(y,z)$,
whose parameters are arbitrary smooth functions of $t+x$; the field depends
on $t$ and $x$ only through $t+x$, so the whole configuration slides along $-x$
at the speed of light (Proposition~\ref{prop:lifting}). At a fixed time a wall becomes a
ruled surface of arbitrary shape, and a localised two-dimensional lump becomes
a tube around the curve $y=c_1(t+x)$, $z=c_2(t+x)$, whose shape and thickness are
arbitrary. All three formulas are smooth on the whole of $M^{1,3}$.

\paragraph{Liouville lumps.} For $\Box u=e^u$ the choice $h=(\zeta-c(\varsigma))/w(\varsigma)$, $\zeta=y+iz$,
$c=c_1+ic_2$, $w>0$, in the lifted Liouville solution of \S\ref{sec:lifting} gives
\begin{equation}\label{eq:liouvlump}
  u=\ln\frac{8\,w(t+x)^2}{\Bigl(w(t+x)^2+\bigl(y-c_1(t+x)\bigr)^2+\bigl(z-c_2(t+x)\bigr)^2\Bigr)^2},
\end{equation}
with three arbitrary real functions $w$, $c_1$, $c_2$ of $t+x$. The solution is smooth,
$u\le\ln(8/w^2)$, and the intensity $e^u$ is a lump of radius $w$ centred at
$(c_1,c_2)$ in each plane $x=\mathrm{const}$, with $\int e^u\,dy\,dz=8\pi$ independently of $t+x$:
the lump moves and changes its size arbitrarily while its total weight is
conserved. A rational $h$ of degree $n$ in $\zeta$ gives $n$ such lumps with
$\int e^u\,dy\,dz=8\pi n$, whose positions and sizes are again arbitrary functions of
$t+x$; then $e^u$ is still smooth, but $u=-\infty$ on the lines where $\partial_\zeta h=0$. For
$\Box u=Ce^u$ replace $u$ by $u-\ln C$. The family follows from
Liouville's general solution \cite{Liou53} by the null lift \cite{GHW,FB96};
\eqref{eq:liouvlump} is its simplest smooth member.

\paragraph{Two lumps.} Up to the factor $2$, $e^u=8|\partial_\zeta h|^2/(1+|h|^2)^2$ is the energy
density of the planar $CP^1$ lump given by the rational map $h$, but here the
lumps have no dynamics of their own: their positions and sizes are
arbitrary functions of $t+x$, there is no force between them, and every
path in the space of parameters is an exact solution of $\Box u=e^u$
(Figure~\ref{fig:lumps}). For $h=\lambda/(\zeta^2-a^2)$, $\lambda\neq0$, the lumps sit at $\pm a$ with
radius $|\lambda/(2a)|$; at $a^2=0$ they merge into a ring of radius $\sqrt{|\lambda|}$, and if
$a^2(t+x)$ changes sign they separate along the perpendicular direction, as in
the right-angle scattering of planar lumps \cite{Ward85}; if $a^2$ only touches zero
they return along the same line. Throughout, $\int e^u\,dy\,dz=16\pi$, and between the
lumps $e^u$ vanishes at the critical point $\zeta=0$ of $h$, where $u=-\infty$. For
$h=2w\zeta/(\zeta^2-a^2)$, whose poles have residues of equal sign, the merger is
different: as $a\to0$ the two lumps fuse into one of radius $2w$, while a
second lump of radius $a^2/(2w)$ shrinks at the origin, with peak value
$e^u\approx32w^2/a^4$. At $a=0$ the degree of $h$ drops to one and a weight $8\pi$
concentrates in a point, so a function $a(t+x)$ that passes through zero
produces a point singularity moving at the speed of light. The outcome of a
merger is thus fixed by the data, not by an interaction of the lumps.

\begin{figure}[t]
\centering\includegraphics[width=\textwidth]{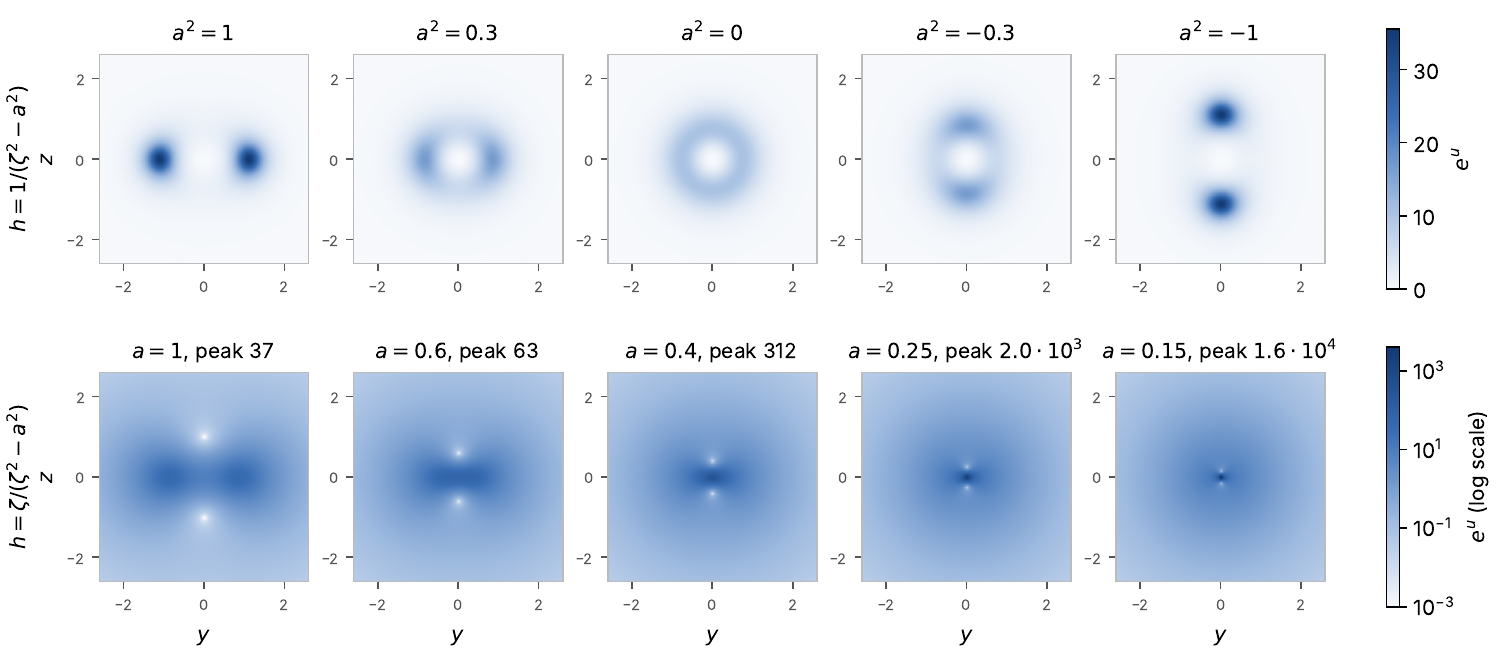}
\caption{Two Liouville lumps: the intensity $e^u$ in a plane $x=\mathrm{const}$ for a
sequence of values of the parameter, i.e.\ of $t+x$. Top: $h=1/(\zeta^2-a^2)$; the lumps
merge into a ring at $a^2=0$ and separate at a right angle. Bottom (logarithmic
scale): $h=\zeta/(\zeta^2-a^2)$; as $a\to0$ a lump of radius $a^2$ collapses at the origin.}
\label{fig:lumps}
\end{figure}

\paragraph{$\phi^4$ walls.} For $\Box u=u-u^3$, the $\phi^4$ model with vacua $u=\pm1$,
\begin{equation}\label{eq:phi4wall}
  u=\tanh\frac{y\cos\theta(t+x)+z\sin\theta(t+x)+\eta(t+x)}{\sqrt2}
\end{equation}
is a domain wall between the two vacua with the same geometry as the
sine-Gordon wall \eqref{eq:sg1}: in each plane $x=\mathrm{const}$ it is a straight line with
normal $(\cos\theta,\sin\theta)$, turning and shifting arbitrarily along $x$. For constant
$\theta,\eta$ it is the static kink, which is linearly stable
(Proposition~\ref{prop:stability}(ii)). For $\Box u=u^3-u$ the same substitution in
$\sqrt2\,\mathrm{sech}$ gives a lifted bright soliton, which is unstable already for constant
parameters (Proposition~\ref{prop:stability}(i)).

\paragraph{Gaussian light bullets.} For the logarithmic equation $\Box u=Au+Bu\ln u$,
$B>0$, of Bia{\l}ynicki-Birula and Mycielski \cite{BBM},
\begin{equation}\label{eq:gausson}
  u=\exp\Bigl(\frac{B-A}B-\frac B4\Bigl[\bigl(y-c_1(t+x)\bigr)^2+\bigl(z-c_2(t+x)\bigr)^2\Bigr]\Bigr)
\end{equation}
is a positive Gaussian packet of fixed amplitude and fixed radius $2/\sqrt B$ (at which
$u$ has dropped by the factor $e$), whose
centre $(c_1,c_2)$ follows an arbitrary path: at a fixed time it is a Gaussian
beam around an arbitrary curve, sliding along $-x$ at the speed of light. At a
fixed $x$ the centre moves in the $(y,z)$-plane with the arbitrary velocity $c'(t+x)$;
this is the speed of a pattern, not of a signal (supplementary material).

\paragraph{Origin of the formulas.} The $3+1$-dimensional solutions of this
subsection and of \S\ref{sec:smoothsg} are obtained from known two-dimensional solutions
by the lift of Proposition~\ref{prop:lifting}. The lift along null hyperplanes as a
mechanism is known \cite{GHW,FB96,Yur97}, and the single walls \eqref{eq:sg1} and
\eqref{eq:phi4wall}, which have the form $u=\varphi(\omega)$ with
$\omega=y\cos\theta(t+x)+z\sin\theta(t+x)+\eta(t+x)$, $(\nabla\omega)^2=-1$, $\Box\omega=0$, are also of the
d'Alembert--Hamilton type \cite{FZY91,FZR91}. What we add is the explicit form of
these solutions in $3+1$ dimensions and their properties: the geometry of the
walls, the smoothness of the multi-wall solutions with independent angles
$\theta_j(t+x)$ (Proposition~\ref{prop:smoothsg}), the conserved weight of the Liouville
lumps, and the linear stability of the unmodulated walls (\S\ref{sec:stability}).

\noindent Formulas \eqref{eq:liouvlump}--\eqref{eq:gausson} were checked by substitution
(supplementary material). \Sm

\section{One constraint: light cones and the cubic}\label{sec:null}
In this section $\xi$ is null, i.e.\ characteristic, and the constraint
prescribes the rate of change of $u$ along light rays. By
Proposition~\ref{prop:order2}(a) and Proposition~\ref{prop:shear} the rays are
straight lines whose bundles spread at the rate $\lambda=\Box g$ without
deforming. If $\lambda=0$ the rays on each level set are parallel and we are in the
Smirnov--Sobolev case of Theorem~\ref{thm:rigidity}. If $\lambda\neq0$ the rays
spread out, and two situations occur: the rays issue from a curve
(light cones of a point source, \S\S\ref{sec:cones}--\ref{sec:nullcurve}), or they twist around
each other (\S\ref{sec:twist}). In both the nonlinearity is forced to be the
cubic (\S\ref{sec:exponent}).

\subsection{Light cones of an arbitrary vertex curve}\label{sec:cones}
\paragraph{Retarded geometry.}
Think of a point source moving along a curve $z(\tau)$ and emitting
signals at the speed of light. A signal that reaches the point $x$ was
emitted at the retarded time $\tau(x)$, from the point $z(\tau)$; the
level sets of $\tau$ are the light cones of the source. Let $z(\tau)$ be a
timelike worldline (a source moving slower than light) parametrised by
proper time,
$V=\dot z$ its four-velocity, $V\cdot V=1$, and $a=\dot V$ its acceleration. For $x$ off the worldline let
$\tau(x)$ be the retarded time, $R=x-z(\tau)$, $R\cdot R=0$, and
\[
  \rho=R\cdot V>0,\qquad \ell=R/\rho,\qquad A=R\cdot a=\rho\,(\ell\cdot a).
\]
Here $\rho$ is the distance from the source in its rest frame at the
moment of emission, $\ell$ the direction of the light ray, and $A$ the
projection of the acceleration onto the ray. Along each ray $R=\rho\ell$
with $\ell$ fixed, so $A$ is linear in $\rho$. The standard identities are
\begin{gather*}
  \partial_\mu\tau=\ell_\mu,\qquad
  \partial_\mu\rho=V_\mu+(A-1)\ell_\mu,\qquad (\nabla\rho)^2=2A-1,\qquad
  \Box\tau=\frac2\rho,\qquad \Box\rho=\frac{4A-2}{\rho},\\
  \partial_\mu \ell_\nu=\frac1\rho\bigl[\eta_{\mu\nu}-\ell_\mu V_\nu-V_\mu \ell_\nu-(A-1)\ell_\mu \ell_\nu\bigr],
\end{gather*}
and in particular $\Box\rho^{-1}=-\rho^{-2}\Box\rho+2\rho^{-3}(\nabla\rho)^2=0$.

\begin{lemma}[transport weight, null branch]\label{lem:weight-null}
Let $g=\tau$, $\xi=\nabla\tau=\ell$. Then: \Sm
\begin{enumerate}
\item \eqref{eq:o2} holds with the unique solution
  $\lambda=2/\rho=\Box g$,
  $\alpha_\mu=-\frac4\rho V_\mu-\frac2\rho(A-1)\ell_\mu$, $\alpha\cdot\xi=-2\lambda$;
\item the general solution of \eqref{eq:o1} is
  \[
    q=\frac1\rho+\frac{C(\tau)}{\rho^2},\qquad
    \beta_0=\frac{2-6A}{\rho^2}+\frac{2\dot C}{\rho^2}-\frac{2C(A-1)}{\rho^3},
  \]
  with $C$ an arbitrary function of the phase;
\item then
  \[
    \gamma=\frac{2C}{\rho^5}\Bigl[\rho\,(1+\dot C)+C\,(1-A)\Bigr],\qquad
    \frac\lambda q=\frac{2\rho}{\rho+C}.
  \]
\end{enumerate}
Each of the conditions $\gamma=0$ and $\lambda/q=\mathrm{const}$ forces
$C\equiv0$. Hence $q=\tfrac12\Box g=1/\rho$ and $k=3$ for every vertex
worldline.
\end{lemma}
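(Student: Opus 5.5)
The plan is to verify the three orders in turn, using only the retarded identities listed above. \emph{Order two.} With $\xi=\ell$ and $H_{\mu\nu}=\partial_\mu\ell_\nu$, the displayed formula gives
\[
S^{\mu\nu}=2H^{\mu\nu}=\tfrac2\rho\,\eta^{\mu\nu}-\tfrac4\rho\,\ell^{(\mu}V^{\nu)}-\tfrac{2(A-1)}\rho\,\ell^\mu\ell^\nu .
\]
This is already of the form \eqref{eq:o2}, with $\lambda=2/\rho$ and $\alpha=-\tfrac4\rho V-\tfrac2\rho(A-1)\ell$. Uniqueness holds because $\eta$ and tensors of the form $\alpha\odot\ell$ are linearly independent: contracting with two screen vectors fixes $\lambda$, and contracting with $\ell$ then fixes $\alpha$ modulo nothing, since $\ell$ is null and $\ell\cdot V=1$. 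The checks $\alpha\cdot\xi=-4/\rho=-2\lambda$ and $\lambda=\Box\tau$ follow from Proposition~\ref{prop:order2}(a).

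\emph{Order one.} I would use Proposition~\ref{prop:star}: with $\psi=q-\tfrac12\lambda=q-1/\rho$, equation \eqref{eq:o1} becomes
\[
2\,d\psi=\psi\alpha-\tfrac12 i_\ell d\alpha+\hat\beta\,\ell .
\]
First compute $d\alpha$. Since $d\ell=0$, $dV=\dot V\wedge d\tau=a\wedge\ell$ (as $V=V(\tau)$) and $d\rho=V+(A-1)\ell$, one expects $i_\ell d\alpha$ to be proportional to $\ell$ and hence absorbed in $\hat\beta$. This uses $\ell\cdot\ell=0$, $\ell\cdot V=1$ and $i_\ell(a\wedge\ell)=(\ell\cdot a)\ell$. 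I will verify this explicitly, since it is the crux.

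The equation then reduces to $2d\psi=\psi\alpha$ mod $\ell$. With the ansatz $\psi=f\rho^{m}$ and $d\rho=V$ mod $\ell$, we get $2m\psi V/\rho=-4\psi V/\rho$ mod $\ell$, so $m=-2$. The remaining $\ell$-component holds for any $f=C(\tau)$, because $d\tau=\ell$. Uniqueness follows because any two solutions have a ratio $\phi$ with $d\phi\propto\ell$, so $\phi=\phi(\tau)$. This gives $q=1/\rho+C/\rho^2$. Finally $\beta_0$ is read off from the $\ell$-coefficient of \eqref{eq:o1}, which requires $\Box V$-type terms: $\Box\xi$ needs $\Box\ell=\partial^\mu\partial_\mu\ell$. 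I would compute this from the $\partial\ell$ formula together with $\partial A$.

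\emph{Order zero.} Compute $\gamma=\alpha\cdot\partial q+\beta_0q-\Box q$. This needs $\Box\rho^{-1}=0$ (given) and
\[
\Box\rho^{-2}=-2\rho^{-3}\Box\rho+6\rho^{-4}(\nabla\rho)^2=(4A-2)\rho^{-4}...
\]
together with $\ell\cdot\partial\rho=1$, $V\cdot\partial\rho=A$, and $\partial C=\dot C\ell$ (so $\ell\cdot\partial C=0$ and $V\cdot\partial C=\dot C$). The mixed term $\Box(C\rho^{-2})$ includes $2\partial C\cdot\partial\rho^{-2}=-2\dot C\rho^{-3}$ and $C'' (\nabla\tau)^2=0$, $\dot C\Box\tau$. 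Collecting terms should give the stated $\gamma$, and $\lambda/q=2\rho/(\rho+C)$ is immediate.

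\emph{Consequences.} For the ratio $\lambda/q$ to be constant, $\rho/(\rho+C)$ must be constant while $\rho$ varies along each ray at fixed $\tau$; this forces $C=0$. If instead $\gamma=0$ with $C\neq0$, then $\rho(1+\dot C)+C(1-A)=0$. Here $A$ is linear in $\rho$ with zero constant term, $A=\rho(\ell\cdot a)$, so the constant term in $\rho$ is $C$, which must vanish; contradiction. Then $q=1/\rho=\tfrac12\Box\tau$, and $k=1+\lambda/q=3$ by Proposition~\ref{prop:dichotomy}.

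The main obstacle is the bookkeeping at order one: the $\ell$-component (for $\beta_0$) and the second derivatives of $\rho$ and $\ell$ entering $\gamma$. I would organise every expression in the frame $\{V,\ell,\text{screen}\}$ and check the coefficients symbolically, as the \Sm\ marker indicates.
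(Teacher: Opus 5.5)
Your proposal is correct and is essentially the paper's proof, a direct computation with the retarded identities. Routing order one through Proposition~\ref{prop:star} is legitimate: one finds $d\alpha=\ell\wedge\bigl(\tfrac{2(A-1)}{\rho^2}V-\tfrac{2}{\rho}a\bigr)$, so $i_\ell d\alpha=\tfrac{2}{\rho^2}\ell$ is absorbed in $\hat\beta$, exactly as you expect.

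Fix these slips when you write it out:
\begin{itemize}
\item $dV=\ell\wedge a$, not $a\wedge\ell$. This is a sign only and does not affect the proportionality to $\ell$.
\item $2\,\partial C\cdot\partial\rho^{-2}=-4\dot C\rho^{-3}$, not $-2\dot C\rho^{-3}$. Hence $\Box(C\rho^{-2})=C(4A-2)\rho^{-4}-2\dot C\rho^{-3}$. You need this value for the $\dot C\rho^{-3}$ terms of $\gamma$ to cancel and for the stated $\gamma$ to come out.
\item At order two, uniqueness of $\alpha$ comes from contracting with $V$ (using $\ell\cdot V=1$). Contracting with $\ell$ only yields $\alpha\cdot\ell$.
\end{itemize}
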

\noindent The proof is a direct computation with the identities above
(supplementary material, Section~S13).

In words: the amplitude factor of the constraint must be $q=1/\rho$, so
that $u\sim1/\rho$ decays like a spherical wave. The extra term $C(\tau)/\rho^2$
allowed by order one is a near-field correction, and order zero
forbids it; with $q=1/\rho$ one has $\lambda/q=2$ and $k=3$, whatever the
acceleration of the source.

\begin{remark}[explicit solutions on the cone]\label{rem:conesol}
With $q=1/\rho$ the ansatz is $u=S(\theta^A;\tau)/\rho$ and the reduced
equation is $\Delta_{S^2}S+cS^3=0$ for each $\tau$; the aberration terms
cancel because the acceleration enters as the dipole harmonic
$\kappa=a\cdot\hat n$, annihilated by $\Delta_{S^2}+2$.
Integrating the equation and its product with $S$ over the sphere gives
$c>0$ and $\int_{S^2}S^3=0$: the profile changes sign. The axially
symmetric mode $W(\mu)=4\mu/(3-\mu^2)$ gives, for an arbitrary timelike
worldline and an arbitrary unit spacelike field $e(\tau)\perp\dot z$,
\[
  u=\frac{4\sigma}{3\rho^2-\sigma^2},\qquad \sigma=-R\cdot e,
\]
a solution of $\Box u=u^3$ with five arbitrary functions of one
variable. Verified here in exact arithmetic for an
accelerated, turning worldline with a rotating field $e(\tau)$. \Sm

\emph{Physical meaning.} For a timelike curve, $u=4\sigma/(3\rho^2-\sigma^2)$ is the
field radiated by a point source moving arbitrarily and carrying a dipole
whose axis $e(\tau)$ may turn in an arbitrary way: $u=W(\mu)/\rho$ exactly, with
$W(\mu)=4\mu/(3-\mu^2)$ and $\mu=\sigma/\rho$ the cosine of the angle between the ray and the
axis in the rest frame of the source, an odd, dipole-like angular pattern
decaying like a spherical wave. As for a classical point charge, the field is
singular on the worldline and its total energy is infinite.
\end{remark}

\paragraph{Explicit forms of the cone solutions.} The retarded time is defined
implicitly, but the solutions are explicit in adapted coordinates, and for
simple worldlines also in Minkowski coordinates.
\begin{itemize}
\item \emph{Any worldline.} Let $E_1,E_2,E_3=e$ be orthonormal spacelike vectors along
  the worldline, orthogonal to $V$. Every point off the worldline is
  $x=z(\tau)+\rho\bigl(V(\tau)+m^iE_i(\tau)\bigr)$ with $\rho>0$ and $m\in S^2$ (retarded coordinates). Then
  $\sigma=\rho\,m^3$, and the field of Remark~\ref{rem:conesol} is
  \begin{equation}\label{eq:coneparam}
    u=\frac{4\,m^3}{\rho\,\bigl(3-(m^3)^2\bigr)} ,
  \end{equation}
  for an arbitrary worldline $z(\tau)$ and an arbitrary rotation of the frame $E_i(\tau)$.
\item \emph{Source at rest}, axis along $z$:
  \begin{equation}\label{eq:conestatic}
    u=\frac{4z}{3x^2+3y^2+2z^2},
  \end{equation}
  a static solution of $-\Delta u=u^3$.
\item \emph{Uniformly accelerated source}, $z(\tau)=(\sinh\tau,0,0,\cosh\tau)$, the hyperbola
  $z^2-t^2=1$, $x=y=0$ (other accelerations follow by $u\mapsto\lambda u(\lambda x)$). Here the
  retarded time is found from $t\sinh\tau-z\cosh\tau=P$, and
  $\rho=t\cosh\tau-z\sinh\tau$, with
  \[
    P=\tfrac12\bigl(t^2-x^2-y^2-z^2-1\bigr),\qquad \rho^2=t^2-z^2+P^2 .
  \]
  For the axis along the acceleration, $e=(\sinh\tau,0,0,\cosh\tau)$, one finds
  $\sigma=\tfrac12(x^2+y^2+z^2-t^2-1)$; for a fixed transverse axis $e=\partial_x$, $\sigma=x$. In
  both cases
  \begin{equation}\label{eq:coneaccel}
    u=\frac{4\sigma}{3\,(t^2-z^2+P^2)-\sigma^2}
  \end{equation}
  is a rational solution of $\Box u=u^3$. Since $\rho^2-\sigma^2\ge0$ (for the first axis
  $\rho^2-\sigma^2=x^2+y^2$), the denominator is at least $2\rho^2$, and $u$ is smooth on all of
  $M^{1,3}$ except on the two branches of the hyperbola, the worldline and its
  mirror image, with $|u|\le2/\rho$.
\item \emph{Rotating waves.} For a source at rest and the axis $z$, the profiles
  invariant under rotations about the axis (supplementary material,
  Remark~S13.4) give
  \begin{equation}\label{eq:rotwave}
    u=\frac{f\bigl(\varphi-\alpha(t-r)\bigr)}{\sqrt{x^2+y^2}},\qquad
    f''+f+cf^3=0,\qquad \varphi=\arg(x+iy),
  \end{equation}
  with an arbitrary function $\alpha$ of the retarded time $t-r$: the angular
  pattern turns about the axis in an arbitrary way, and the turning
  propagates outwards at the speed of light. For $c>0$,
  $f=A\,\mathrm{cn}(\omega\theta,k)$ with $\omega^2=1+cA^2$, $k^2=cA^2/(2\omega^2)$; $u$ is single-valued when
  $4K(k)/\omega=2\pi/n$, which fixes $A$ for every $n\ge2$ ($\sqrt c\,A\approx2.0328$ for $n=2$) \Nm. These
  waves are singular on the whole $z$-axis; for nonconstant $\alpha$ they have no
  Lie point symmetry (\S\ref{sec:lie}). For constant $\alpha$, \eqref{eq:rotwave} is a static separated
  solution of the classical form $f(\varphi)/\sqrt{x^2+y^2}$; what is new is the turning by an
  arbitrary function of the retarded time.
\end{itemize}
Formulas \eqref{eq:conestatic}, \eqref{eq:coneaccel} and \eqref{eq:rotwave} were checked
by substitution (supplementary material). \Sm

The acceleration of the source drops out of the field equation, but not out
of the field: it enters the gradient of $u$, and with it the energy carried to
infinity. The next proposition makes this precise. We use the energy-momentum
tensor $T_{\mu\nu}=\partial_\mu u\,\partial_\nu u-\eta_{\mu\nu}\bigl(\tfrac12\,\partial u\cdot\partial u+\tfrac14u^4\bigr)$ of
$\Box u=u^3$; the rest-frame acceleration is the spacelike vector $a\perp V$, and the
rate at which the axis turns, measured against a non-rotating frame carried
by the source (a gyroscope), is the Fermi--Walker derivative $w=\dot e+(e\cdot a)V$,
which is orthogonal to $V$ and to $e$. For spacelike vectors $|a|^2=-a\cdot a$.

\begin{proposition}[radiation of a moving source]\label{prop:radiation}
Let $u=4\sigma/(3\rho^2-\sigma^2)$ be the field of Remark~\ref{rem:conesol} for a timelike
worldline and a unit field $e(\tau)\perp V$, and let $\mu=\sigma/\rho$, $W(\mu)=4\mu/(3-\mu^2)$. \Sm
\begin{enumerate}
\item $u=W(\mu)/\rho$ with $|\mu|\le1$, so $|u|\le2/\rho$; $u$ is smooth off the worldline.
\item Exactly,
  \[
    \partial_\nu u=\frac{f}{\rho}\,\ell_\nu+\frac{1}{\rho^2}\,N_\nu,\quad
    f=-(\ell\cdot a)\,(\mu W)'-(\ell\cdot\dot e)\,W',\quad
    N=-W'\bigl(e+\mu(V-\ell)\bigr)-W\,(V-\ell).
  \]
  The near field $N/\rho^2$ involves only the velocity and the axis; the
  radiation field $f\ell/\rho$ is linear in the acceleration and in $w$.
\item $T_{\mu\nu}=f^2\ell_\mu\ell_\nu/\rho^2+O(\rho^{-3})$, and the energy radiated per unit
  proper time, measured in the rest frame of the source at the moment of
  emission, is
  \[
    P=\oint_{S^2}f^2\,d\Omega
     =C_\parallel\,(e\cdot a)^2+C_\perp\bigl(|a|^2-(e\cdot a)^2\bigr)+C_w\,|w|^2,
  \]
  with $L=\ln(2+\sqrt3)$ and
  \[
    C_\parallel=\tfrac{16\pi}{9}\bigl(5\sqrt3\,L-9\bigr)\approx13.43,\quad
    C_\perp=\tfrac{16\pi}{3}\bigl(3-\sqrt3\,L\bigr)\approx12.05,\quad
    C_w=\tfrac{16\sqrt3\,\pi}{9}\,L\approx12.74 .
  \]
\end{enumerate}
\end{proposition}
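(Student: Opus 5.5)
The plan is to work entirely in the retarded geometry of \S\ref{sec:cones}, using the identities for $\partial\tau$, $\partial\rho$ and $\partial\ell$. The first step is to differentiate $\sigma=-R\cdot e$:
\[
\partial_\nu\sigma=-e_\nu-(\dot R\cdot e)\,\ell_\nu ,
\]
where $\partial_\nu R^\mu=\delta^\mu_\nu-V^\mu\ell_\nu$, and $R\cdot\dot e=\rho\,\ell\cdot\dot e$ comes from differentiating $e(\tau)$ along $\partial\tau=\ell$. Since $e\cdot V=0$, this gives $\partial_\nu\sigma=-e_\nu+\rho(\ell\cdot\dot e)\,(-\ell_\nu)$ (sign to be fixed in the computation). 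Then $\partial\mu=\partial\sigma/\rho-\sigma\,\partial\rho/\rho^2$, with $\partial\rho=V+(A-1)\ell$ and $A=\rho(\ell\cdot a)$. Substituting into $\partial u=W'(\mu)\partial\mu/\rho-W\partial\rho/\rho^2$ and sorting by powers of $\rho$ should give item 2 directly. The terms proportional to $\rho(\ell\cdot a)$ and $\rho(\ell\cdot\dot e)$ survive at order $1/\rho$ along $\ell$, and they combine into $f=-(\ell\cdot a)(W+\mu W')-(\ell\cdot\dot e)W'$, i.e.\ $(\mu W)'$. The remaining terms, $-W'(e+\mu(V-\ell))-W(V-\ell)$ over $\rho^2$, form $N$. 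For item 1, note that $|\mu|\le1$ because, in the rest frame, $\ell=V+\hat n$ with $\hat n$ a unit vector orthogonal to $V$, so $\mu=\hat n\cdot\hat e$. Then $W$ is smooth and $|W|\le2$ on $[-1,1]$, and $u$ is smooth off the worldline because $\rho>0$ there and $\tau$ is smooth.

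For the first part of item 3, $T_{\mu\nu}$, use that $\ell$ is null and that $\ell\cdot N=O(1)$. Then $\partial u\cdot\partial u=2f\,\ell\cdot N/\rho^3+O(\rho^{-4})$ and $u^4=O(\rho^{-4})$, so $T=f^2\ell\ell/\rho^2+O(\rho^{-3})$.

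For the power, integrate the flux $T^{\mu\nu}$ through the sphere $\rho=\mathrm{const}$ on the light cone of $z(\tau)$, in the rest frame at the moment of emission. The area element is $\rho^2d\Omega\,d\tau$ (the standard Liénard–Wiechert bookkeeping), and the $\rho^2$ cancels the $1/\rho^2$, giving $P=\oint f^2\,d\Omega$ as $\rho\to\infty$. The near-field terms drop out at order $\rho^{-1}$.

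To evaluate the integral, write $\ell=V+\hat n$. Then $\ell\cdot a=\hat n\cdot a$, since $a\perp V$, and $\ell\cdot\dot e=\hat n\cdot\dot e+V\cdot\dot e$. Differentiating $e\cdot V=0$ gives $V\cdot\dot e=-e\cdot a$, so $\ell\cdot\dot e=\hat n\cdot w'$ with $\dot e=w-(e\cdot a)V$. Hence $\ell\cdot\dot e=\hat n\cdot w-(e\cdot a)$; the constant term, combined with the $a$ term, is expected to produce the $(e\cdot a)$ structure.

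Next decompose $a=(a\cdot\hat e)\hat e+a_\perp$ in the rest frame, with $w\perp e$. Then $f$ is a linear combination of three kinds of angular functions: $\mu(\mu W)'$ together with $W'$ times a constant, which pairs with $e\cdot a$; $\hat n\cdot a_\perp\,(\mu W)'$; and $\hat n\cdot w\,W'$. The last two are odd under azimuthal reflection and orthogonal to the first and to each other unless $a_\perp\parallel w$. The cross term between $a_\perp$ and $w$ is $\int\sin^2\!\theta\,(\mu W)'W'\cos^2\!\phi$, and it is not obviously zero; I will check whether it cancels against the $-(e\cdot a)$ shift. Because the stated formula has no cross term, I expect that it does, or that it vanishes by a parity argument in $\mu$. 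Each coefficient then reduces to a one-dimensional integral over $\mu\in[-1,1]$ of rational functions of $\mu$ with denominator $(3-\mu^2)^k$. These produce $\sqrt3\ln\frac{\sqrt3+1}{\sqrt3-1}=2\sqrt3L$, and evaluating them exactly gives $C_\parallel,C_\perp,C_w$.

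The main obstacle is this last step: getting the sign conventions for $\dot e$ versus the Fermi–Walker $w$ right, showing that the cross terms vanish, and carrying out the exact rational-log integrals.
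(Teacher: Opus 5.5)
Your derivation of $\partial_\nu u$, the bound of item~1, the leading form $T=f^2\ell\ell/\rho^2+O(\rho^{-3})$, and $\ell\cdot\dot e=\ell\cdot w-(e\cdot a)$ follow the paper's route and are correct. The gap is in the flux step.

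The surface element of the tube $\rho=\mathrm{const}$ is $\rho^2\,\partial_\nu\rho\,d\tau\,d\Omega$. Its normal $\partial_\nu\rho=V_\nu+(A-1)\ell_\nu$ contains $A\ell_\nu$, and $A=\rho\,(\ell\cdot a)$ grows linearly along each ray. The flux density is therefore $\rho^2T^{\mu\nu}V_\nu+\rho^2(A-1)T^{\mu\nu}\ell_\nu$. From $T=f^2\ell\ell/\rho^2+O(\rho^{-3})$ alone you only get $T^{\mu\nu}\ell_\nu=O(\rho^{-3})$, so the second term is $O(1)$, the same order as the radiation term. Your claim that the near-field terms drop out at order $\rho^{-1}$ is thus unsupported exactly when $a\neq0$, which is the case the proposition is about.

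What is needed is $T^{\mu\nu}\ell_\nu=O(\rho^{-4})$. It follows from the exact value $\ell\cdot N=-W$ (use $\ell\cdot e=-\mu$, $\ell\cdot V=1$), i.e.\ $\ell\cdot\partial u=-W/\rho^2$. Then $(\ell\cdot\partial u)\,\partial^\mu u=-fW\ell^\mu/\rho^3+O(\rho^{-4})$ and $\tfrac12(\partial u)^2=-fW/\rho^3+O(\rho^{-4})$. So the $\rho^{-3}$ terms cancel in $T^{\mu\nu}\ell_\nu=(\ell\cdot\partial u)\,\partial^\mu u-\ell^\mu\bigl(\tfrac12(\partial u)^2+\tfrac14u^4\bigr)$. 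Equivalently, the $\rho^{-3}$ part $f\rho^{-3}(\ell^\mu N^\nu+N^\mu\ell^\nu+W\eta^{\mu\nu})$ of $T$ is annihilated by $\ell_\nu$. Only then does $\rho^2T^{\mu\nu}\partial_\nu\rho=f^2\ell^\mu+O(\rho^{-1})$ hold, all coefficients being constant along rays. Its $V$-component gives $P=\oint f^2\,d\Omega$.

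The angular step is also left open, and one of your two options cannot work. In the rest frame
$f=a_\parallel\bigl[\mu(\mu W)'-W'\bigr]+\mathbf n_\perp\cdot\bigl(\mathbf a_\perp(\mu W)'+\mathbf w\,W'\bigr)$, with $a_\parallel=\mathbf e\cdot\mathbf a$.
\begin{itemize}
\item The first term does not depend on the azimuth, so its cross terms with the other two vanish on azimuthal integration. Nothing involving $e\cdot a$ can therefore cancel the $\mathbf a_\perp$--$\mathbf w$ term.
\item That cross term survives the azimuthal integral as $2\pi(1-\mu^2)(\mathbf a_\perp\cdot\mathbf w)(\mu W)'W'$. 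It is absent when $\mathbf a_\perp\perp\mathbf w$, not ``unless $\mathbf a_\perp\parallel\mathbf w$''.
\item What kills it is the $\mu$-integral. $W$ is odd, so $W'$ is even and $(\mu W)'$ is odd, hence $\int_{-1}^1(1-\mu^2)(\mu W)'W'\,d\mu=0$.
\end{itemize}
Finally, $(\sqrt3+1)/(\sqrt3-1)=2+\sqrt3$, so $\sqrt3\ln\frac{\sqrt3+1}{\sqrt3-1}=\sqrt3L$, not $2\sqrt3L$. For instance, $\int_{-1}^1d\mu/(3-\mu^2)=L/\sqrt3$. With $(\mu W)'=24\mu/(3-\mu^2)^2$ and $W'=4(3+\mu^2)/(3-\mu^2)^2$, the constants are $C_\parallel=2\pi\int_{-1}^1[\mu(\mu W)'-W']^2d\mu$, $C_\perp=\pi\int_{-1}^1(1-\mu^2)[(\mu W)']^2d\mu$ and $C_w=\pi\int_{-1}^1(1-\mu^2)[W']^2d\mu$.
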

\noindent The proof, a direct computation with the identities of \S\ref{sec:cones}
and with the energy-momentum tensor along the rays, is given in Appendix~\ref{app:radiation}; the constants are also checked
symbolically in the supplementary material.

\emph{Examples.} For hyperbolic motion with proper acceleration $g$,
$z(\tau)=g^{-1}(\sinh g\tau,0,0,\cosh g\tau)$, one has $|a|=g$ and $\dot a=g^2V$. With the axis along
the acceleration, $e=a/g$, this gives $(e\cdot a)^2=g^2$ and $w=0$; with a fixed transverse
axis, $e=\partial_x$, it gives $e\cdot a=0$ and $w=0$. Hence
\[
  P=C_\parallel\,g^2\qquad\text{or}\qquad P=C_\perp\,g^2 :
\]
a uniformly accelerated source radiates at a constant rate, as a uniformly
accelerated charge does, and its field is the rational solution \eqref{eq:coneaccel}
(for $g=1$; other $g$ by $u\mapsto\lambda u(\lambda x)$). For uniform circular motion with speed $v$
on a circle of radius $R$ and the axis perpendicular to the plane of the orbit,
$e\cdot a=0$, $w=0$ and $|a|=\gamma^2v^2/R$, $\gamma=(1-v^2)^{-1/2}$, so that
\[
  P=C_\perp\,\frac{\gamma^4v^4}{R^2},
\]
the analogue of synchrotron radiation. In both cases the power depends on the
orientation of the axis through $C_\parallel\neq C_\perp$. \Sm

\emph{Comparison with the Li\'enard--Wiechert field.} For the linear equation
$\Box u=0$ the field of a scalar point charge is $u=q/\rho$, and the same
computation gives $f=q\,(\mathbf n\cdot\mathbf a)$ and $P=\tfrac{4\pi}{3}q^2|a|^2$, the scalar Larmor
formula. Proposition~\ref{prop:radiation} is its exact analogue for the cubic
equation, and it shows four differences. (a) The strength is not a free
parameter: the amplitude of $W$ is fixed by the nonlinearity (the scaling
$u\mapsto ku(kx)$ only maps one worldline to another). (b) There is no monopole:
the angular profile must change sign (Remark~\ref{rem:conesol}), and the simplest field
is dipole-like, with an axis $e$ that is part of the data. (c) The radiation
depends on the direction of the acceleration relative to the axis
($C_\parallel\neq C_\perp$), and a source moving uniformly radiates if its axis turns
($C_w>0$), as a rotating dipole does. (d) Fields of two sources do not
superpose. A source in uniform motion whose axis is carried without
rotation does not radiate; its field is the Lorentz image of the static
solution $4z/(3r^2-z^2)$. As in the linear case, the worldline is prescribed:
the radiated energy is supplied by whatever moves the source, and the back
reaction of the field on the source is not considered here. The energy
density of $\Box u=u^3$ is not positive, because of the term $-\tfrac14u^4$, but this
term decays like $\rho^{-4}$ and does not contribute to $P\ge0$. The formula is
checked in the supplementary material for an accelerated, non-planar
worldline with a turning axis.
\subsection{Sources moving at the speed of light}\label{sec:nullcurve}
The vertex curve may also be null or spacelike. With $\rho=R\cdot\dot z$ the same
construction gives $u=W/\rho$ and $\Box(W/\rho)=-\rho^{-3}\Delta_\Sigma W$, where $\Sigma$ is the set of
light rays issuing from a point of the curve: the round sphere for a timelike
curve, the Euclidean plane for a null curve (a source moving at the speed of
light, whose sphere of directions is flattened by aberration) and the
hyperbolic plane for a spacelike one. The reduced equation is $\Delta_\Sigma W+cW^3=0$
in all three cases, the acceleration again drops out, and only the cubic is
allowed. For a straight null line these are nonlinear analogues of Bateman's
progressing waves \cite{Bat15}. Statements and proofs are in the supplementary
material (Proposition~S1.1, Remarks~S1.2
and~S1.3).

\subsection{The exponent}\label{sec:exponent}
Why the cubic? Along a light ray the spreading rate $\lambda$ obeys a
Riccati-type transport equation. For the light cone of a point,
$\lambda=2/\rho$ and $d\lambda/d\rho=-\tfrac12\lambda^2$: the bundle area grows like $\rho^2$. On the
other hand the amplitude $q$ obeys a linear transport equation (order
one), and order zero requires $q=\mu\lambda$ with a constant $\mu$. A multiple
of a solution of a Riccati equation solves the linear transport
equation only for one value of $\mu$, which is $\mu=\tfrac12$, i.e.\
$k=1+\lambda/q=3$. The following lemma gives the transport equation for
$\lambda$ in general, including rotating (twisting) bundles; in the theory of
relativity it is known as the Raychaudhuri equation.

\begin{lemma}[Raychaudhuri identity]\label{lem:raych}
Let $\xi$ be null and satisfy \eqref{eq:o2}. Then $(\xi\cdot\partial)\xi=\kappa\xi$
for a function $\kappa$, and
\[
  \alpha\cdot\xi=2(\kappa-\lambda),\qquad \partial\cdot\xi=\lambda+\kappa,\qquad
  \xi\cdot\partial\lambda=\kappa\lambda-\tfrac12\lambda^2+\tfrac12\,|d\xi(m,\bar m)|^2 ,
\]
where $m$ is a complex null vector with $m\cdot\xi=0$, $m\cdot\bar m=-1$; the
value $|d\xi(m,\bar m)|$ is independent of the choice of $m$ and vanishes if
and only if $\xi$ is twist free. \Sm
\end{lemma}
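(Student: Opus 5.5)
The plan is to extract everything from the order-two condition \eqref{eq:o2} with $\xi$ null, $\xi\cdot\xi=0$, by contracting it with the vectors of a null tetrad $(\xi,n,m,\bar m)$. Here $n\cdot\xi=1$, $m\cdot\xi=m\cdot n=0$ and $m\cdot\bar m=-1$. Lowered, \eqref{eq:o2} reads
\[
\partial_\mu\xi_\nu+\partial_\nu\xi_\mu=\lambda\eta_{\mu\nu}+\tfrac12(\alpha_\mu\xi_\nu+\alpha_\nu\xi_\mu).
\]

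\emph{Geodesy.} Contract with $\xi^\nu$. Since $\xi^\nu\partial_\mu\xi_\nu=\tfrac12\partial_\mu(\xi\cdot\xi)=0$, this gives
\[
(\xi\cdot\partial)\xi_\mu=\lambda\xi_\mu+\tfrac12(\alpha\cdot\xi)\xi_\mu+\tfrac12\alpha_\mu(\xi\cdot\xi)=\bigl(\lambda+\tfrac12\alpha\cdot\xi\bigr)\xi_\mu .
\]
So $(\xi\cdot\partial)\xi=\kappa\xi$ with $\kappa:=\lambda+\tfrac12\alpha\cdot\xi$, which is the relation $\alpha\cdot\xi=2(\kappa-\lambda)$. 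Taking the trace gives $2\,\partial\cdot\xi=4\lambda+\alpha\cdot\xi=2\lambda+2\kappa$, which is the divergence formula.

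\emph{The screen part.} Contract \eqref{eq:o2} with $m^\mu m^\nu$ and with $m^\mu\bar m^\nu$. Since $m\cdot\xi=0$, the $\alpha$ terms drop, leaving
\[
m^\mu m^\nu\partial_\mu\xi_\nu=0,\qquad m^\mu\bar m^\nu\partial_{(\mu}\xi_{\nu)}=-\tfrac12\lambda .
\]
This is Proposition~\ref{prop:shear} again: the shear vanishes and the expansion is $\lambda$. The antisymmetric part $\omega:=d\xi(m,\bar m)$ is unconstrained and purely imaginary. I will check that it is independent of the choice of $m$. Under $m\mapsto e^{i\phi}m$ it is unchanged. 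Under a null rotation $m\mapsto m+c\xi$ the change is $c\,d\xi(\xi,\bar m)$, and $d\xi(\xi,\cdot)=\tfrac12\kappa\xi-\tfrac12\alpha(\xi\cdot\xi)+\dots$ is proportional to $\xi$ by the geodesy computation, so it is killed by $\bar m$. Frobenius gives $\xi\wedge d\xi=0$ if and only if $\omega=0$, since the components of $d\xi$ on $(\xi,\cdot)$ are already multiples of $\xi$.

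\emph{Raychaudhuri.} This is the main step. I set $B_{\mu\nu}:=\partial_\mu\xi_\nu$ and start from the commutator identity
\[
\xi^\mu\partial_\mu(\partial\cdot\xi)=\partial_\nu\bigl(\xi^\mu\partial_\mu\xi^\nu\bigr)-B_{\mu\nu}B^{\nu\mu}=\partial\cdot(\kappa\xi)-B_{\mu\nu}B^{\nu\mu}.
\]
This gives $\xi\cdot\partial(\lambda+\kappa)=\xi\cdot\partial\kappa+\kappa(\lambda+\kappa)-B_{\mu\nu}B^{\nu\mu}$. The $\xi\cdot\partial\kappa$ terms cancel, so the one remaining task is to evaluate $B_{\mu\nu}B^{\nu\mu}$ in the tetrad, and I expect this bookkeeping to be the obstacle.

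First I expand $B$ in the tetrad. Using $B\xi=0$, $\xi^\mu B_{\mu\nu}=\kappa\xi_\nu$ and the screen information above, only the screen block and the $\xi\otimes\xi$-type components contribute to the trace. The contraction that survives is then
\[
B_{\mu\nu}B^{\nu\mu}=\kappa^2+2\bigl(\tfrac12\lambda\bigr)^2-2\bigl|\tfrac12\omega\bigr|^2\cdot(\text{sign}),
\]
with the screen block being $\tfrac12\lambda$ times the screen metric plus $\tfrac12\omega$ times the screen area form. The mixed terms of the form $\xi\otimes(\cdot)$ pair with $(\cdot)\otimes n$ only through $\xi\cdot\xi=0$ or through $B\xi=0$, so they vanish. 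Substituting this expression leaves $\xi\cdot\partial\lambda=\kappa\lambda-\tfrac12\lambda^2+\tfrac12|\omega|^2$.

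I would fix the normalisation of $\omega$ and the sign of the twist term by testing two cases. Shear-free congruences of light cones satisfy $\lambda=2/\rho$, $\kappa=0$ and $\xi\cdot\partial\lambda=-\tfrac12\lambda^2$, which pins the coefficient of $\lambda^2$. A twisting Kerr–Robinson-type example confirms the $+|\omega|^2$ sign.
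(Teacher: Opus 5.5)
Your proposal follows the paper's proof. Contracting \eqref{eq:o2} with $\xi$ gives geodesy and $\alpha\cdot\xi=2(\kappa-\lambda)$, the trace gives $\partial\cdot\xi=\lambda+\kappa$, and the divergence of $(\xi\cdot\partial)\xi=\kappa\xi$ reduces the transport law to $B_{\mu\nu}B^{\nu\mu}=\kappa^2+\tfrac12\lambda^2-\tfrac12|d\xi(m,\bar m)|^2$. The paper obtains this value from $S_{\mu\nu}S^{\mu\nu}=2\lambda^2+2\kappa^2$ and $(d\xi)_{\mu\nu}(d\xi)^{\mu\nu}=-2\kappa^2+2|d\xi(m,\bar m)|^2$, and you obtain it from tetrad components.

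The sign you propose to fix by examples is already forced by your own data, so no example is needed. With $\omega=d\xi(m,\bar m)$ you have $B(m,\bar m)=\tfrac12(\omega-\lambda)$ and $B(\bar m,m)=-\tfrac12(\omega+\lambda)$. The screen block therefore contributes $\tfrac12(\lambda^2+\omega^2)=\tfrac12\lambda^2-\tfrac12|\omega|^2$, because $\omega$ is purely imaginary.
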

\begin{proof}
For $\kappa=0$ this is the Raychaudhuri equation of a shear-free null congruence
in flat space \cite[\S7.1]{PR}. In general, contracting \eqref{eq:o2} with $\xi$ defines $\kappa$ and
gives $\alpha\cdot\xi$; the trace gives $\partial\cdot\xi$; and the divergence of $(\xi\cdot\partial)\xi=\kappa\xi$ gives the
transport law, with $S_{\mu\nu}S^{\mu\nu}=2\lambda^2+2\kappa^2$ from \eqref{eq:o2} and
$\omega_{\mu\nu}\omega^{\mu\nu}=-2\kappa^2+2|d\xi(m,\bar m)|^2$ for $\omega=d\xi$. The identity was also verified
symbolically for the general shear-free congruence in the Kerr representation of
\S\ref{sec:twist} (script \texttt{kerr\_jets\_check.py}).
\end{proof}

The point of the lemma is that the transport law follows from \eqref{eq:o2} alone,
in any normalisation of $\xi$.

\begin{theorem}[the null exponent]\label{thm:nullk}
On the null branch of rank one, let $\lambda\not\equiv0$ and let $F$ be a power
nonlinearity, $F=cu^k$ or $F=cu^k+c'u$, $k\neq1$. Then $k=3$, for every
shear-free geodesic null congruence, twisting or not.
\end{theorem}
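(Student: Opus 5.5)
The plan is to combine the order-zero condition with the order-one condition and the Raychaudhuri identity of Lemma~\ref{lem:raych}. By Proposition~\ref{prop:dichotomy}, a power nonlinearity with $k\neq1$ forces $q\neq0$, $\lambda\neq0$, and $q=\mu\lambda$ with $\mu$ constant and $k=1+1/\mu$. The argument is local: I would work on an open set where $\lambda\neq0$ and differentiate along the rays. It then suffices to show $\mu=\tfrac12$.

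\emph{Step 1 (order one along the ray).} On the null branch, \eqref{eq:star} of Proposition~\ref{prop:star} holds with $\psi=q-\tfrac12\lambda=(\mu-\tfrac12)\lambda$. Contract \eqref{eq:star} with $\xi$. Because $\xi\cdot\xi=0$, the term $\hat\beta\,\xi$ drops out, and $\xi\cdot i_\xi d\alpha=\xi^\nu\xi^\mu(\partial_\mu\alpha_\nu-\partial_\nu\alpha_\mu)=0$ by antisymmetry. This leaves the linear transport law
\[
  2\,\xi\cdot\partial\psi=\psi\,(\alpha\cdot\xi)=2(\kappa-\lambda)\psi,
\]
where the last equality uses $\alpha\cdot\xi=2(\kappa-\lambda)$ from Lemma~\ref{lem:raych}.

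\emph{Step 2 (compare with Raychaudhuri).} Suppose $\mu\neq\tfrac12$. Dividing by the constant $\mu-\tfrac12$ gives $\xi\cdot\partial\lambda=(\kappa-\lambda)\lambda$. Lemma~\ref{lem:raych} gives instead $\xi\cdot\partial\lambda=\kappa\lambda-\tfrac12\lambda^2+\tfrac12|d\xi(m,\bar m)|^2$. Subtracting the two,
\[
  -\tfrac12\lambda^2=\tfrac12\,|d\xi(m,\bar m)|^2 .
\]
The left side is $\le0$ and the right side is $\ge0$, so both vanish. Hence $\lambda=0$ on the open set, a contradiction. Therefore $\mu=\tfrac12$ and $k=3$. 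This is exactly the ``Riccati versus linear'' mechanism described at the start of \S\ref{sec:exponent}, and the twist term only enters with the sign that helps.

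\emph{Main obstacle.} The delicate point is justifying the contraction in Step~1, i.e.\ that \eqref{eq:star} is valid with $\hat\beta\,\xi$ as the only term proportional to $\xi$, in whatever normalisation of $\xi$ is used. Under $h\mapsto fh$, $\psi$ scales by $f$ and $\alpha$ shifts by $2\nabla\ln f$, so the contracted law is covariant and we may normalise freely. As a fallback I would derive the contracted law directly from \eqref{eq:o1}. Contract \eqref{eq:o1} with $\xi_\mu$. Then use $\xi_\mu\Box\xi^\mu=-\partial^\nu\xi^\mu\partial_\nu\xi_\mu$, which follows from $\xi\cdot\xi=0$, together with the geodesic relation $(\xi\cdot\partial)\xi=\kappa\xi$ and the trace $\partial\cdot\xi=\lambda+\kappa$. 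In this route I would be careful that the $\kappa$-terms cancel.

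For the variant $F=cu^k+c'u$, the argument is unchanged, since only the ratio $\lambda/q$ enters. Corollary~\ref{cor:nolinear} is not needed here.
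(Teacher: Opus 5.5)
Your proof is correct and is the paper's argument. The paper also takes the $\xi$-component of \eqref{eq:star}, using $\xi\cdot\xi=0$, $d\alpha(\xi,\xi)=0$ and $\alpha\cdot\xi=2(\kappa-\lambda)$, to get $\xi\cdot\partial\psi=(\kappa-\lambda)\psi$; it then reads off from Lemma~\ref{lem:raych} that $\nu\bigl[\xi\cdot\partial\lambda-\lambda(\kappa-\lambda)\bigr]=0$ with bracket $\tfrac12\lambda^2+\tfrac12|d\xi(m,\bar m)|^2>0$ wherever $\lambda\neq0$. This is your subtraction, written directly rather than by contradiction. Your normalisation worry and fallback are unnecessary, since Proposition~\ref{prop:star} holds for any $\xi$ satisfying \eqref{eq:o2}, and a single point with $\lambda\neq0$ already suffices.
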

\begin{proof}
Order zero forces $q=\mu\lambda$ with $\mu$ constant, so $\psi=q-\tfrac12\lambda=\nu\lambda$
with $\nu=\mu-\tfrac12$. The $\xi$-component of \eqref{eq:star}, with
$\alpha\cdot\xi=2(\kappa-\lambda)$ and $d\alpha(\xi,\xi)=0$, is the transport law
$\xi\cdot\partial\psi=(\kappa-\lambda)\psi$. For $\psi=\nu\lambda$ it reads
$\nu\bigl[\xi\cdot\partial\lambda-\lambda(\kappa-\lambda)\bigr]=0$, and by Lemma~\ref{lem:raych} the bracket
equals $\tfrac12\lambda^2+\tfrac12|d\xi(m,\bar m)|^2$, which is $\ge\tfrac12\lambda^2>0$ at every point where
$\lambda\neq0$, whatever the twist. Hence $\nu=0$, $\mu=\tfrac12$, and $k=1+\lambda/q=3$. Since $k$ is a
constant, one point with $\lambda\neq0$ suffices: no assumption on connectedness or
on the set where $\lambda$ vanishes is needed. The sign of the twist term is
essential, and it is the standard one: twist opposes the focusing of rays
(the optical equations of Sachs \cite[\S7.1]{PR}); it was checked
symbolically on the general Kerr congruence (supplementary material, Lemma~S14.1), with $m$
renormalised to $m\cdot\bar m=-1$.
\end{proof}

\begin{corollary}[the cubic]\label{cor:cubic}
On the null branch with $\lambda\neq0$ and $F=cu^3$, order one is
equivalent to $q=\lambda/2$ together with
\[
  i_\xi d\alpha\propto\xi,\qquad\text{i.e.}\qquad d\alpha(\xi,m)=0 .
\]
\end{corollary}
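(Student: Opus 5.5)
The plan is to use the conformal form \eqref{eq:star} of order one (Proposition~\ref{prop:star}) and to split it into components along a null frame $\{\xi,n,m,\bar m\}$ adapted to $\xi$, with $n\cdot\xi=1$, $m\cdot\xi=m\cdot n=0$, $m\cdot\bar m=-1$. Order zero with $F=cu^3$ and $\lambda\neq0$ gives $k=1+\lambda/q=3$, i.e.\ $q=\lambda/2$, by Theorem~\ref{thm:nullk} (or directly from the third row of Proposition~\ref{prop:dichotomy}). Hence $\psi=q-\tfrac12\lambda=0$. The linear term is excluded by Corollary~\ref{cor:nolinear}, so $\gamma=0$ and order zero imposes nothing further. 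With $\psi=0$, \eqref{eq:star} reduces to
\[
  \tfrac12\,i_\xi d\alpha=\hat\beta\,\xi
\]
for some function $\hat\beta$, that is $i_\xi d\alpha\propto\xi$. Conversely, if $q=\lambda/2$ and $i_\xi d\alpha=2\hat\beta\xi$, then $\psi=0$ solves \eqref{eq:star}. Since Proposition~\ref{prop:star} states that \eqref{eq:star} is equivalent to \eqref{eq:o1}, this gives the equivalence with order one.

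It remains to rewrite $i_\xi d\alpha\propto\xi$ as $d\alpha(\xi,m)=0$. Expand the one-form $i_\xi d\alpha$ in the dual frame. Its $n$-component is $d\alpha(\xi,\xi)=0$ by antisymmetry. This is exactly the component that must vanish for the form to be proportional to $\xi$: contracting $\xi_\nu$ with $n$ gives $1$, and contracting it with $\xi$, $m$, $\bar m$ gives $0$. The $n$-contraction of the form is $d\alpha(\xi,n)$, which is free and plays the role of $2\hat\beta$. What is left are the $m$ and $\bar m$ contractions $d\alpha(\xi,m)$ and $d\alpha(\xi,\bar m)$. Because $\alpha$ is real, these are complex conjugates, so the single complex condition $d\alpha(\xi,m)=0$ is equivalent to $i_\xi d\alpha\propto\xi$.

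The only point needing care is the bookkeeping of which frame component of \eqref{eq:star} carries what. One must check that the $\xi$-component (the transport law used in Theorem~\ref{thm:nullk}) is identically satisfied once $\psi=0$, and that the $n$-component merely determines $\hat\beta$. Both follow because $\psi=0$ kills the left side and the $\psi\alpha$ term. So no condition beyond the screen components survives. The condition is also independent of the normalisation of $\xi$: under $h\mapsto fh$ one has $\alpha\mapsto\alpha+2\nabla\ln f$, so $d\alpha$ is invariant, and $\xi\mapsto f\xi$ only rescales $d\alpha(\xi,m)$, which is consistent with Remark~\ref{rem:invariants}.
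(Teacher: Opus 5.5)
Your proof is correct and follows the paper's route. With $F=cu^3$, order zero fixes $q=\lambda/2$, so $\psi=0$ in \eqref{eq:star} of Proposition~\ref{prop:star} and only $i_\xi d\alpha=2\hat\beta\,\xi$ remains; your null-frame bookkeeping then turns this into $d\alpha(\xi,m)=0$ (the $\xi$-contraction vanishes by antisymmetry, the $n$-contraction fixes $\hat\beta$, and reality of $\alpha$ makes the $\bar m$ condition the conjugate of the $m$ one). One correction: for $F=cu^3$ order zero itself demands $\gamma=0$ as well as $q=\lambda/2$, so drop the appeal to Corollary~\ref{cor:nolinear} and the phrase ``order zero imposes nothing further''. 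That corollary rests on the later branch-by-branch classification, which includes the twisting branch built on this very corollary, and neither point is needed, since the statement concerns order one only.
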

\noindent Theorem~\ref{thm:twist} below shows that for a twisting congruence this
condition forces a Robinson congruence. As an illustration only,
$d\alpha(\xi,m)$ is of order one for the Kerr congruence and for random quadratic
Kerr functions, while it vanishes for the Robinson congruence and, by
Lemma~\ref{lem:weight-null}, for the light cone (supplementary material). \Nm

For the light cone $d\alpha(\xi,m)=0$ but $d\alpha\neq0$, so the condition
of Corollary~\ref{cor:cubic} is weaker than $d\alpha=0$; the difference is
made by the twist.

\subsection{Twisting null constraints}\label{sec:twist}
If $\xi$ is not proportional to a gradient there are no surfaces
orthogonal to it, and the light rays twist around one another, like the
rays of a vortex. Order two forces the ray family to be geodesic and to
deform without shear, but says nothing about twist; for $\lambda=0$ twist is
excluded by the Sachs equation (proof of Theorem~\ref{thm:rigidity}), and for
$\lambda\neq0$ it is decided at order one by Corollary~\ref{cor:cubic}. All such ray
families are described by Kerr's theorem: one complex function $Y$ solving
$Y_v=YY_\zeta$, $Y_{\bar\zeta}=YY_w$ in the coordinates $w=t-z$, $v=t+z$, $\zeta=x+iy$
\cite[\S7.4]{PR}. A jet computation, given in the supplementary material
(Section~S14), proves:

\begin{theorem}[twisting null constraints]\label{thm:twist}
Let $\xi$ be null, and let $U$ be a connected open set on which $\xi$ twists and $\lambda\neq0$; let $F''\not\equiv0$. On $U$ the constraint $\xi^\mu u_\mu+q u=0$ is admissible for \eqref{eq:main}
if and only if $F=cu^3$, $q=\lambda/2$, and the rays form a Robinson
congruence, $Y=-(A_0+A_2\zeta+A_3w)/(A_1+A_2v+A_3\bar\zeta)$ with
constant $A_i$ (a linear Kerr function). The constraint is then the
conformal symmetry of Proposition~\ref{prop:conformal}, generated by a
null conformal Killing vector.
\end{theorem}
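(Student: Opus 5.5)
The plan is to reduce the theorem to the order-by-order conditions \eqref{eq:o2}--\eqref{eq:o0} and use the results already available for the null branch. The power law comes first. On $U$ we have $\lambda\neq0$, and Proposition~\ref{prop:dichotomy} with $F''\not\equiv0$ leaves only the power rows, with $\gamma=0$ by Corollary~\ref{cor:nolinear}. Theorem~\ref{thm:nullk}, which holds for twisting congruences, then gives $k=3$ and $q=\lambda/2$. The logarithmic row needs $\lambda=0$ and is excluded, and the inhomogeneous case is set aside by Remark~\ref{rem:inhom}. So $F=cu^3$ and $q=\lambda/2$ are necessary.

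By Corollary~\ref{cor:cubic}, order one now reduces to the single geometric condition $d\alpha(\xi,m)=0$. Here $\alpha$ is fixed by \eqref{eq:o2} up to a multiple of $\xi$, and $\xi$ is a shear-free geodesic null congruence: it is geodesic by Lemma~\ref{lem:raych} and shear-free by the proof of Proposition~\ref{prop:shear}, since the screen part of $S$ is pure trace. I would use Kerr's theorem to write $\xi$ locally through a complex function $Y$ with $Y_v=YY_\zeta$, $Y_{\bar\zeta}=YY_w$, given implicitly by $\Phi(Y,\,Yw+\bar\zeta,\,Yv+\ldots)=0$ or equivalently by its jets. I would normalise $\xi=\partial_w+\ldots$ in the standard null tetrad. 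In that frame I would compute $\lambda$, $\kappa$ and $\alpha$ from \eqref{eq:o2}: $\alpha\cdot\xi=2(\kappa-\lambda)$, the $m$-components of $\alpha$ come from the mixed components, and the $\xi$-component is free. Only the $m$-components matter, and the free part drops out of $d\alpha(\xi,m)$ modulo terms that vanish by the Kerr equations. This expresses $d\alpha(\xi,m)$ through the derivatives $Y_\zeta,Y_w,Y_{\zeta\zeta},\dots$ at each point. Because the Kerr equations fix all derivatives in $v,\bar\zeta$, the free jet of $Y$ at a point consists of the derivatives in $\zeta$ and $w$ alone.

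The key step, and the main obstacle, is showing that $d\alpha(\xi,m)=0$ on a connected open set where the twist is nonzero forces the second-order jet of the Kerr function to vanish, so that $\Phi$ is linear in the twistor variables. My first attempt would be to write $d\alpha(\xi,m)$ as a twist factor times a combination of second twistor derivatives of $\Phi$. I would then differentiate along $\xi$ and $\bar m$ to generate enough independent equations to kill all second derivatives where the twist $\operatorname{Im}$-part is nonzero. Analyticity is not available, so the argument has to be pointwise on jets, with connectedness of $U$ used to integrate $\partial^2\Phi=0$. That integration gives $Y=-(A_0+A_2\zeta+A_3w)/(A_1+A_2v+A_3\bar\zeta)$, a Robinson congruence.

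For the converse and the final identification, I would verify directly that for a linear Kerr function with $q=\lambda/2$ all three conditions \eqref{eq:o2}--\eqref{eq:o0} hold with $F=cu^3$. Then I would exhibit $\xi$, after a multiplier $f$, as a null conformal Killing vector built from the twistor $A_i$. Since conformal Killing vectors with weight $q=\tfrac12\Box g$-type (one quarter of the divergence) are symmetries of the cubic, this identifies the constraint with the conformal symmetry of Proposition~\ref{prop:conformal}.
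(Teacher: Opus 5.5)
\textbf{Verdict.} Your reduction is the paper's, but the one step that carries the theorem is left open, and the mechanism you sketch for it would not deliver it.

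\textbf{What matches the paper.} Theorem~\ref{thm:nullk} gives $k=3$ and $q=\lambda/2$. Corollary~\ref{cor:cubic} turns order one into $d\alpha(\xi,m)=0$. The converse is Proposition~\ref{prop:conformal}. The paper settles the remaining implication, $d\alpha(\xi,m)=0$ plus twist $\Rightarrow$ linear Kerr function, by a jet computation in the supplement.

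\textbf{Why your plan for the key step falls short.} Work in the Kerr gauge: $\xi_{AA'}=\bar\pi_A\pi_{A'}$ with $\pi_{A'}\iota^{A'}=1$ for a constant $\iota$. Shear-free and geodesic then means $\nabla_{CC'}\pi_{A'}=-\chi_C\pi_{C'}\iota_{A'}$. Take $m=\bar\pi_A\iota_{A'}$, $\bar m=\bar\iota_A\pi_{A'}$, $n=\bar\iota_A\iota_{A'}$, $a=\chi_C\bar\iota^C$ and $b=-\chi_C\bar\pi^C$. Then $\lambda=b+\bar b$, $d\xi(m,\bar m)=\bar b-b$, $[m,\bar m]=(b-\bar b)\,n$, and $\alpha=-2a\,m-2\bar a\,\bar m-2\lambda n$. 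Using $\xi(a)=-ab$ and $\bar m(b)=a(b-\bar b)$, a short computation gives $d\alpha(\xi,m)=2\,m(b)$.

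Since $\xi(b)=-b^2$ and $[\xi,m]=-b\,m$, you get $\xi\bigl(m(b)\bigr)=-3b\,m(b)$. So the condition is \emph{one} complex equation per ray. Differentiating along $\xi$ yields nothing new, and the twist is not a factor of $d\alpha(\xi,m)$.

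In twistor terms, let $Z(x)=(ix^{AA'}\pi_{A'},\pi_{A'})$ be the twistor of the ray through $x$. Then $\xi Z=\bar mZ=0$ and $m\,mZ=m(b)\,I$, where $I=(ix^{AA'}\iota_{A'},\iota_{A'})\notin E:=\mathrm{span}(Z,mZ,nZ)$. So $d\alpha(\xi,m)=0$ says only that the second fundamental form of the space of rays in twistor space vanishes on its complex tangent line. A single such equation cannot kill the three second-order jet components pointwise.

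\textbf{The missing idea.} Differentiating along $\bar m$ is the right instinct, but what it yields is not more equations on the jet at a point; it yields that the plane $E$ is parallel.
\begin{itemize}
\item $\xi$ and $\bar m$ preserve $E$ automatically: $\xi\,mZ=-b\,mZ$, $\xi\,nZ=a\,mZ$, $\bar m\,mZ=-(b-\bar b)\,nZ$, and $\bar m\,nZ=[\bar m,n]Z\in E$.
\item $m$ preserves $Z$ and $mZ$ by the hypothesis $m\,mZ\in E$.
\item For $nZ=-(b-\bar b)^{-1}\bar m\,mZ$, commute $m$ past $\bar m$: $m\bar m(mZ)=\bar m(mmZ)+(b-\bar b)\,n(mZ)$. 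Here $\bar m(mmZ)\in E$ and $n(mZ)=m(nZ)+[n,m]Z$, which gives $2\,m(nZ)\in E$. This is the only place $b\neq\bar b$ is used.
\item $n=(b-\bar b)^{-1}[m,\bar m]$ then preserves $E$ as well.
\end{itemize}
Hence $dE\subset E$, and the annihilator of $E$ is a dual twistor $A_\alpha$, constant up to scale. On the connected set $U$, $A_\alpha Z^\alpha(x)=0$ is exactly the linear Kerr function of the statement. The argument works for smooth congruences, so you need neither a Kerr function $\Phi$ nor analyticity.

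\textbf{Smaller points.}
\begin{itemize}
\item \eqref{eq:o2} fixes $\alpha$ uniquely: its component along $\xi$ is $S(n,n)$. A multiple of $\xi$ would nonetheless drop out of $i_\xi d\alpha$ modulo $\xi$, so this slip is harmless.
\item Quoting Corollary~\ref{cor:nolinear} for $\gamma=0$ is circular on this branch. Instead keep $F=cu^3+c'u$; neither Theorem~\ref{thm:nullk} nor order one sees $c'$. Then obtain $c'=0$ at the end from $\gamma=-\tfrac14\Box(\partial\cdot K)=0$ for the null conformal Killing vector $K$.
\end{itemize}
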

\noindent Theorem~\ref{thm:twist} is local; for a real-analytic congruence, e.g.\ one given by a
holomorphic Kerr function, it extends to the connected component of the
domain by analytic continuation.

\begin{proposition}[conformal symmetries]\label{prop:conformal}
On a rank-one branch (null or not), suppose that the $1$-form $\alpha$ of
\eqref{eq:o2} is closed, $\alpha=-2\nabla\ln f$. Then $K=f\xi$ is a
conformal Killing vector, $2\partial_{(\mu}K_{\nu)}=f\lambda\,\eta_{\mu\nu}$,
and the constraint with $q=\lambda/2$ is
\[
  K^\mu u_\mu+\tfrac14(\partial\cdot K)\,u=0,
\]
the conformal symmetry of $\Box u=c\,u^3$. Hence \eqref{eq:o2}--\eqref{eq:o0}
hold with $F=cu^3$, $\gamma=0$. Closedness of $\alpha$ is independent of the
normalisation of $h$, since $h\mapsto fh$ changes $\alpha$ by $2\nabla\ln f$.
\end{proposition}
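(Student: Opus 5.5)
The plan is to normalise by the multiplier freedom and read off the conformal Killing equation directly from \eqref{eq:o2}. Since $h\mapsto fh$ sends $\xi\mapsto f\xi$, $\lambda\mapsto f\lambda$, $\alpha\mapsto\alpha+2\nabla\ln f$, I choose $f$ with $\alpha=-2\nabla\ln f$, so that $\tilde h=fh$ has $\tilde\alpha=0$, $\tilde\xi=K=f\xi$ and $\tilde\lambda=f\lambda$. Here I use the invariance statement of Remark~\ref{rem:invariants}, and I note that closedness of $\alpha$ is exactly what makes this normalisation possible (locally, by Poincar\'e's lemma). In this normalisation \eqref{eq:o2} reads $2\partial^{(\mu}K^{\nu)}=\tilde\lambda\,\eta^{\mu\nu}$, which is the conformal Killing equation with conformal factor $f\lambda$. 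Taking the trace gives $2\,\partial\cdot K=4\tilde\lambda$, so $\tilde\lambda=\tfrac12\partial\cdot K$ and $q=\tilde\lambda/2=\tfrac14\partial\cdot K$. This produces the displayed constraint $K^\mu u_\mu+\tfrac14(\partial\cdot K)u=0$, the standard characteristic of the conformal symmetry of the conformally invariant cubic in $3+1$ dimensions.

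Next I verify orders one and zero with $\tilde\alpha=0$. Order one \eqref{eq:o1} becomes $\Box K^\mu+2\partial^\mu q=\beta_0K^\mu$. For a conformal Killing vector in dimension $n=4$ the standard identity is $\Box K_\mu=-\tfrac{n-2}{n}\partial_\mu(\partial\cdot K)=-\tfrac12\partial_\mu(\partial\cdot K)$. It follows from differentiating the Killing equation and using that $\partial\cdot K$ is affine, i.e.\ $\partial_\mu\partial_\nu(\partial\cdot K)=0$ in flat space for $n>2$. Since $2\partial_\mu q=\tfrac12\partial_\mu(\partial\cdot K)$, the left side vanishes identically, and \eqref{eq:o1} holds with $\beta_0=0$. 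Equivalently, in \eqref{eq:star} we get $\psi=0$ and $d\alpha=0$.

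For order zero, $\gamma=\alpha\cdot\partial q+\beta_0q-\Box q=-\Box q=-\tfrac14\Box(\partial\cdot K)=0$, again because $\partial\cdot K$ is affine. Then \eqref{eq:o0} reads $(\tilde\lambda+q)F=q\,uF'$ with $\tilde\lambda/q=2$, so $uF'=3F$ and $F=cu^3$, in agreement with the third row of Proposition~\ref{prop:dichotomy}. Dividing back by $f$ returns the original $h$, and since $\gamma/q$ and $\lambda/q$ are invariant, the conclusions hold in any normalisation.

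Finally, the independence statement is immediate: $d(\alpha+2\nabla\ln f)=d\alpha$. The only step needing care is the flat-space identity $\Box K_\mu=-\tfrac12\partial_\mu(\partial\cdot K)$ together with the affineness of $\partial\cdot K$. I expect this to be the main obstacle, though it is still routine. I would obtain it from $\partial_\rho\partial_\mu K_\nu=\tfrac14(\eta_{\mu\nu}\partial_\rho\phi+\eta_{\nu\rho}\partial_\mu\phi-\eta_{\rho\mu}\partial_\nu\phi)$ with $\phi=\partial\cdot K$, using the usual cyclic permutation trick on the Killing equation. Contracting then gives $\Box K_\nu=-\tfrac12\partial_\nu\phi$, and a further trace argument gives $\partial\partial\phi=0$. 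Nothing in the argument uses whether $\xi$ is null, so it covers both branches of rank one.
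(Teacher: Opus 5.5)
Your proof is correct and follows the paper's route. Normalising by $f$ removes the $\alpha\odot\xi$ term, so \eqref{eq:o2} becomes the conformal Killing equation for $K=f\xi$, and the trace gives $q=\tfrac12 f\lambda=\tfrac14\partial\cdot K$. The one difference is in the last step: the paper simply invokes the conformal invariance of the cubic (the constraint is then a Lie point symmetry characteristic, so $\alpha=\beta_0=0$). You instead verify orders one and zero by hand from $\Box K_\mu=-\tfrac12\partial_\mu(\partial\cdot K)$ and $\partial\partial(\partial\cdot K)=0$. Note that the first identity already follows from the divergence of the conformal Killing equation; affineness is needed only for $\gamma=0$.
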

\begin{proof}
$2\partial_{(\mu}(f\xi_{\nu)})=f\lambda\eta_{\mu\nu}+(f\alpha+2\nabla f)_{(\mu}\xi_{\nu)}$, and
the second term vanishes for $\alpha=-2\nabla\ln f$; then
$\tfrac14\partial\cdot K=\tfrac12f\lambda$. In four dimensions the scalar of
conformal weight one is invariant under conformal Killing flows exactly
for the cubic.
\end{proof}

Closedness of $\alpha$ is sufficient for admissibility for the cubic but
not necessary: on the light cone $d\alpha\neq0$ and the cubic is carried
nevertheless (Lemma~\ref{lem:weight-null}).

\section{One constraint: hyperboloids, and a summary}\label{sec:nonnull}
For non-null $\xi$ and $\lambda\neq0$ the level sets are hyperboloids $(x-c)\cdot(x-c)=\pm R^2$
(Proposition~\ref{prop:order2}(b)), and only classical symmetries arise: concentric
hyperboloids give the dilation reduction $u=g^{-2/(k-1)}S(\omega)$, hyperboloids with a
moving centre occur only for $k=3$ and give a conformal symmetry, and timelike
twisting constraints are Lie point symmetries by the Herglotz--Noether theorem
\cite{WE67,Hu11}; the spacelike twisting case is open (supplementary material,
Lemma~S8.1, Theorem~S8.2, Proposition~S8.3).

\paragraph{The exponent rule.} On the light-cone branch the exponent obeys
\begin{equation}\label{eq:weight}
  k=1+\frac{\lambda}{q}=1+\frac{4}{m},\qquad m=\frac{2\Box g}{\lambda},
\end{equation}
where $m$ is the number of transverse directions in which the light rays
spread: the amplitude $q$ is half the spreading rate, and the spreading is
shared equally among the $m$ directions. The rule fixes the nonlinearity in
two cases only: for the light cone $m=2$ and $k=3$ (Theorem~\ref{thm:nullk}), and for
a cylinder spreading in one direction, with a second constraint that
transports $u$ along the rays, $m=1$ and $k=5$ (Theorem~\ref{thm:par}). For
hyperboloids $m=3$, but \eqref{eq:weight} is then only a normalisation of $q$:
every $k$ occurs (dilation reduction), and the value $k=7/3$ merely corresponds to
the choice $q=\tfrac12\Box g$. For hyperplanes $\lambda=0$ and $k$ is not fixed.

\paragraph{Summary of rank one.} Table~\ref{tab:branches} collects the
branches. Apart from Poincar\'e symmetries, the dilation, the logarithmic
separation, the null hyperplanes and the light cones of a curve, every
single constraint found is a conformal symmetry of the cubic, detected by
the closedness of a $1$-form (Proposition~\ref{prop:conformal}).

\begin{table}[h]
\centering\small
\begin{tabular}{>{\raggedright}p{2.6cm}>{\raggedright}p{2.6cm}p{1.6cm}>{\raggedright\arraybackslash}p{6.2cm}}
\toprule
constraint & phase & $F$ & ansatz and reduced equation\\
\midrule
$\xi^\mu u_\mu=0$, $q=\lambda=0$ & null planes $n(g)\cdot x+c(g)=0$ & arbitrary &
 $u=U(g;y^1,y^2)$ constant along the rays; $\Delta_\perp U=-F(U)$; lifting (\S\ref{sec:lifting})\\
$q=1/\rho$ & light cone of a curve & $c\,u^3$ &
 $u=W/\rho$, $\Delta_\Sigma W+cW^3=0$ on $\Sigma=S^2$, $\mathbb R^2$ or $\mathbb H^2$; e.g.\ $u=4\sigma/(3\rho^2\mp\sigma^2)$\\
Robinson field & twisting rays & $c\,u^3$ & conformal symmetry; $-\Delta_HV+V=cV^3$ on $S^3$ (\S\ref{sec:cr})\\
$s=\pm1$ & concentric hyperboloids & $c\,u^k$, any $k$ &
 $u=g^{-2/(k-1)}S(\omega)$, dilation reduction\\
moving centre & pencil (S8.1) & $c\,u^3$ & conformal symmetry\\
$h=u_z+q(z)u$ or null & planes & $Au+Bu\ln u$ & multiplicative separation (Proposition~S10.2)\\
\bottomrule
\end{tabular}
\caption{One constraint. All reduced equations verified by substitution. \Sm}
\label{tab:branches}
\end{table}

\section{Two constraints}\label{sec:rank2}
\subsection{Normal form and the reduction conditions}\label{sec:rank2normal}
Let $h_a=\xi_a\cdot\nabla u+q_au+q_{0a}$, $a=1,\dots,r$, be constraint functions with linearly
independent fields $\xi_a$. The logic is the same as for one constraint: we
look for solutions $h_1,\dots,h_r$ of the system of linear determining equations \cite{K02}
\begin{equation}\label{eq:QDEsys}
  \Box h_a-F'(u)\,h_a=\sum_b\bigl(\alpha_{ab}^\nu D_\nu h_b+\beta_{ab}(x)\,h_b\bigr)\quad\text{on }[E],
\end{equation}
set all $h_a$ equal to zero, integrate the resulting first-order system and
substitute the ansatz into \eqref{eq:main}; compatibility of $\{E,h_1=\dots=h_r=0\}$ is
again not assumed but follows from the existence of solutions of the reduced
equation. As for a single constraint, the
principal part gives $S_a=\lambda_a\eta+\sum_b\alpha_{ab}\odot\xi_b$, so each field may
now be corrected by the others. We assume that the distribution spanned by
the $\xi_a$ is integrable, as it is in all configurations below; let
$\omega_1,\dots,\omega_{4-r}$ be independent functions with $\xi_a\cdot\nabla\omega_i=0$. Integrating the
constraints by characteristics gives
\begin{equation}\label{eq:ansatz-r}
  u=A(x)\,\varphi(\omega_1,\dots,\omega_{4-r})+B(x),
\end{equation}
with $A$, $B$ particular solutions of the transport equations, and
\[
  \Box u=A\,(\nabla\omega_i\cdot\nabla\omega_j)\,\varphi_{ij}+\bigl(A\,\Box\omega_i+2\nabla A\cdot\nabla\omega_i\bigr)\varphi_i
  +(\Box A)\,\varphi+\Box B .
\]
The ansatz reduces \eqref{eq:main} to an equation for $\varphi(\omega)$ if and only if,
after dividing by the appropriate power of $A$, all coefficients are
functions of $\omega$. For $F=cu^k$, $k\neq0,1,2$, this means $B=0$ and
\begin{align}
  &\text{(R1)}\quad \nabla\omega_i\cdot\nabla\omega_j=A^{k-1}\,\hat g_{ij}(\omega),\label{eq:R1}\\
  &\text{(R2)}\quad \Box\omega_i+2\nabla\ln A\cdot\nabla\omega_i=A^{k-1}\,b_i(\omega),\label{eq:R2}\\
  &\text{(R3)}\quad \Box A=A^k\,\gamma(\omega),\label{eq:R3}
\end{align}
with reduced equation $\hat g_{ij}\varphi_{ij}+b_i\varphi_i+\gamma\varphi=c\varphi^k$. These are the conditions of
orders two, one and zero of \eqref{eq:QDEsys}, written for the ansatz. For
$r=3$ there is one variable $\omega$ and \eqref{eq:R1}--\eqref{eq:R3} reduce $\Box u=cu^k$
to an ordinary differential equation (\S\ref{sec:rank3}).

\begin{proposition}[which nonlinearities admit a weight]\label{prop:weights}
Let \eqref{eq:ansatz-r} reduce $\Box u=F(u)$, and suppose that $A$ or $B$ is not a
function of $\omega$. Then, up to translations and scalings of $u$, $F$ is one
of the following, with the stated normalisation of the weights:
\begin{center}
\small
\begin{tabular}{lll}
\toprule
$F(u)$ & weights & order-zero condition\\
\midrule
$c\,u^k+c'u$ & $B=0$, $A^{k-1}=N$ & $\Box A=c'A+A^k\gamma(\omega)$\\
$C\,e^u+c''$ & $A=1$, $e^B=N$ & $\Box B=c''+e^B\gamma(\omega)$\\
$A_0u+B_0u\ln u$ & $B=0$, $N=1$ & $\Box A/A-B_0\ln A=\gamma(\omega)$\\
$c\,u^2$ & $N=A$, $A$ and $B$ free & two conditions\\
\bottomrule
\end{tabular}
\end{center}
Here $N$ is the common factor in \eqref{eq:R1}. For every other $F$, in particular
$F=\pm\sin u$ and the Tzitz\'eica nonlinearity $e^u-e^{-2u}$, one may take $A=1$,
$B=0$, and the reduction conditions are the d'Alembert--Hamilton system
$\nabla\omega_i\cdot\nabla\omega_j=\hat g_{ij}(\omega)$, $\Box\omega_i=b_i(\omega)$ \cite{FZY91,FZR91,Yeh06}. \Sm
\end{proposition}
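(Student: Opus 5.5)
We substitute the ansatz \eqref{eq:ansatz-r} into $\Box u=F(u)$ and require that the result be an equation for $\varphi(\omega)$ alone. Then we separate variables in $\varphi$, in the same way as the order-zero analysis of Proposition~\ref{prop:dichotomy}, but now with the pair $(A,B)$ in place of $(q,q_0)$.

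By the displayed formula for $\Box u$, reduction means that there is one nonvanishing function $N(x)$ such that
\[
  A\,\nabla\omega_i\cdot\nabla\omega_j=N\hat g_{ij}(\omega),\qquad
  A\Box\omega_i+2\nabla A\cdot\nabla\omega_i=N b_i(\omega),
\]
and
\[
  (\Box A)\varphi+\Box B-F(A\varphi+B)=N\,\Phi(\omega,\varphi).
\]
Here $\Phi$ depends only on $\omega$ and the value $\varphi$. The first two conditions are R1--R2. The whole classification comes from the third condition, which we regard as a functional equation in the free variable $p=\varphi$:
\[
  F(Ap+B)=N\Phi(\omega,p)-(\Box A)p-\Box B .
\]
Differentiating twice in $p$ gives $A^2F''(Ap+B)=N\Phi_{pp}(\omega,p)$. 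Applying a derivation transverse to the level sets of $\omega$ (any $x$-direction that changes $A$ or $B$ at fixed $\omega$, which exists by hypothesis) and eliminating $\Phi_{pp}$, we get a linear first-order ODE for $G=F''$ in the variable $v=Ap+B$:
\[
  (a_1v+a_0)\,G'(v)=a_2\,G(v).
\]
The coefficients are built from the transverse derivatives of $\ln A$, $B$ and $\ln N$. As in Proposition~\ref{prop:dichotomy}, these coefficients must have constant ratios, since $G$ is a fixed function of $v$.

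The cases then split. If $G\equiv0$, then $F$ is affine; this case is discarded, or absorbed into the first row. If $a_1\neq0$, translate $v$ to remove $a_0$; then $G=cv^{k-2}$, so $F=cu^k+c'u$ up to a translation of $u$, and $k=2$ is special. If $a_1=0$ and $a_0\neq0$, then $G=Ce^{\mu v}$ with $\mu$ scaled to $1$, so $F=Ce^u+c''u+c'''$. If $a_2=0$, $G$ is unconstrained, but then the transverse derivatives of $A$ and $B$ must vanish, which contradicts the hypothesis, unless $a_1=a_0=0$.

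For each surviving form of $F$ we return to the undifferentiated equation to fix the weights and the residual linear terms.
\begin{itemize}
\item For $F=cu^k+c'u$ with $k\neq0,1,2$, compare powers of $p$. The coefficient of $p^k$ gives $A^k=N\cdot(\text{function of }\omega)$, so $N=A^{k-1}$ after rescaling. The constant term forces $B=0$, and the linear term gives $\Box A=c'A+A^k\gamma$.
\item For the exponential, the $e^{p}$ term gives $A=1$ and $e^B=N$. A linear term $c''u$ with nonzero coefficient would produce $c''p$ with coefficient $1\neq N$, which is excluded unless $c''=0$. What remains is the additive constant $c''$ and $\Box B=c''+e^B\gamma$.
\item The limiting case $k\to1$ gives the logarithmic row $A_0u+B_0u\ln u$. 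Here $N=1$, the term $B_0Ap\ln A$ must be absorbed into $A\Phi$, and this gives the stated condition on $A$.
\item For $k=2$, the quadratic mixes $A$ and $B$ through the cross term $2cABp$. We obtain $N=A$ together with two conditions, on the $p$ and $p^0$ coefficients, and $B$ remains free.
\end{itemize}

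Finally, for every other $F$, meaning $a_1=a_0=0$, both $A$ and $B$ are functions of $\omega$. They can be absorbed into $\varphi$, giving $A=1$, $B=0$, $N=1$, and R1--R2 become the d'Alembert--Hamilton system.

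The main obstacle is the elimination step: producing a genuine ODE for $F''$ requires a transverse derivation under which $(A,B,N)$ actually vary. One must also handle the degenerate subcase in which only $B$ varies while $A$ is a function of $\omega$; here $a_1=0$, which leads to the exponential. I would treat this carefully by splitting into the cases $d(\ln A)|_{\omega}\neq0$ and $d(\ln A)|_{\omega}=0$, $dB|_{\omega}\neq0$, and then check the resulting table row by row with symbolic computation.
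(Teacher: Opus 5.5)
Your argument follows the route the paper indicates for this proposition: reduce the zero-order reduction condition to a functional equation for $F''$ along the level sets of $\omega$, obtain $(a_1v+a_0)G'(v)=a_2G(v)$ with $G=F''$, and fix the weights row by row by back-substitution (your derivation $X$ must be tangent to those level sets, i.e.\ $X\omega=0$, not transverse as you wrote). Two small repairs are needed. First, the branch $G\propto v^{-2}$ (your $k=0$) gives $F=c\ln u+c'u$, not a power, and must be excluded explicitly: there $a_0=0$ and your $N$ is a function of $\omega$, so the $p$-independent term $c\ln A$ forces $A$ to be a function of $\omega$, and this $F$ carries no weight. Second, the sentence on $a_2=0$ should say that $a_2=0$ with $(a_1,a_0)\neq0$ forces $G'=0$, which is the quadratic row; this is the only case with $F''\not\equiv0$ in which the ratios $a_1:a_0:a_2$ are not forced to be constant.
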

\noindent The proof reduces the reduction condition to a functional equation for
$F''$ along the level sets of $\omega$ (supplementary material,
Section~S13).

For a single constraint this is the content of Proposition~\ref{prop:dichotomy};
for several constraints it says that only one of them can carry a weight
when $F$ is a power, the exponential or $u\ln u$. For $\sin u$ and for Tzitz\'eica
no constraint can carry a weight; all new solutions that we found for
these equations come from the lifting mechanism of \S\ref{sec:lifting}.

\subsection{Three types}\label{sec:types}
For $r=2$ the metric $\hat g_{ij}$ of \eqref{eq:R1} on the plane of the variables
$(\omega_1,\omega_2)$ has one of three types.
\begin{itemize}
\item \emph{Hyperbolic}: two real characteristic phases, $(\nabla\sigma_1)^2=(\nabla\sigma_2)^2=0$.
  The reduced equation is a wave equation in $1+1$ dimensions.
\item \emph{Elliptic}: one complex characteristic phase $Y$, $(\nabla Y)^2=0$,
  $M:=\nabla Y\cdot\nabla\bar Y\neq0$. The reduced equation is elliptic in $(Y,\bar Y)$.
\item \emph{Parabolic}: one characteristic phase $\sigma$ and a second variable $\omega$
  with $\nabla\sigma\cdot\nabla\omega=0$. The reduced equation contains no second derivative in $\sigma$.
\end{itemize}

\subsection{The parabolic type}\label{sec:par}
\begin{theorem}[the characteristic phase is a parameter]\label{thm:par}
In the parabolic type, for $F=cu^k$ ($k\neq0,1$), $F=Ce^u$, or trivial weights
($A=1$, $B=0$), $b_\sigma=0$ in \eqref{eq:R2} and $aN^m$ is constant along every ray of $\sigma$,
$m=2/(k-1)$ ($m=0$ in the last two cases). Hence the reduced equation is an ordinary
differential equation in $\omega$ in which $\sigma$ enters only as a parameter, and
any family of its solutions may be modulated arbitrarily in $\sigma$. In particular,
if on an open set of rays exactly one principal curvature of the level sets of
$\sigma$ vanishes ($\kappa_1\neq0=\kappa_2$, e.g.\ $\sigma=t-\sqrt{x^2+y^2}$), then $\nabla\omega$ has no
component along the flat direction, $F=Ce^u$ and $A=1$ are excluded, and $F=cu^5$.
\end{theorem}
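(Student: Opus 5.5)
We work directly with the reduction conditions (R1)--(R3) for the ansatz \eqref{eq:ansatz-r} with $r=2$, variables $(\omega_1,\omega_2)=(\sigma,\omega)$, $(\nabla\sigma)^2=0$, $\nabla\sigma\cdot\nabla\omega=0$. We write $(\nabla\omega)^2=N a(\sigma,\omega)$, where $N=A^{k-1}$ for the power, $N=e^B$ for the exponential and $N=1$ for trivial weights (Proposition~\ref{prop:weights}). The metric is then $\hat g=\mathrm{diag}(0,a)$, so the reduced equation reads $a\varphi_{\omega\omega}+b_\sigma\varphi_\sigma+b_\omega\varphi_\omega+\gamma\varphi=c\varphi^k$. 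The first step is to show $b_\sigma=0$. The field $\xi=\nabla\sigma$ is a null geodesic generator, and $\nabla\omega$ is orthogonal to it. Differentiating $\nabla\sigma\cdot\nabla\omega=0$ along the rays and using $\partial^\mu\sigma\,\partial_\mu\partial_\nu\sigma=0$ gives $\nabla\sigma\cdot\nabla(\nabla\sigma\cdot\nabla\omega)=\nabla\omega\cdot(\nabla\sigma\cdot\nabla)\nabla\sigma+\dots$. I would instead evaluate (R2) for $\omega_1=\sigma$: the left side is $\Box\sigma+2\nabla\ln A\cdot\nabla\sigma$, the transport law of Lemma~\ref{lem:develop} for the weight. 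The constraint $\xi\cdot\nabla u+qu=0$ transporting $u$ along the rays fixes $\nabla\sigma\cdot\nabla\ln A=-q$. Comparing with (R2) then yields $\Box\sigma-2q=N b_\sigma(\sigma,\omega)$. The left side is determined by the rank-two determining system \eqref{eq:QDEsys} at order one, which (as in Lemma~\ref{lem:weight-null}) forces $q=\tfrac12\Box\sigma$ plus a term that order zero kills. This gives $b_\sigma=0$.

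The second step is the statement that $aN^m$ is constant along rays. Apply $\nabla\sigma\cdot\nabla$ to $(\nabla\omega)^2=Na$. The left side gives $2\,\partial^\mu\omega\,\partial^\nu\sigma\,\partial_\nu\partial_\mu\omega$. Commuting derivatives via $\partial_\mu(\nabla\sigma\cdot\nabla\omega)=0$ turns this into $-2\,\partial_\mu\partial_\nu\sigma\,\partial^\mu\omega\,\partial^\nu\omega$, the second fundamental form of the fronts evaluated on the screen direction $\nabla\omega$. The right side is $N\,\nabla\sigma\cdot\nabla a+a\,\nabla\sigma\cdot\nabla N$, with $\nabla\sigma\cdot\nabla N=(k-1)N\,\nabla\sigma\cdot\nabla\ln A=-(k-1)Nq$. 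The screen Hessian term also has to be expressed through $q$. Using the order-two relation $S=\lambda\eta+\sum\alpha\odot\xi$ restricted to screen vectors (Proposition~\ref{prop:umb}), it equals $\lambda$ times the screen component. The resulting relation is a linear ODE along the ray, $\partial_\sigma$-transport of $a$ with coefficient proportional to $\lambda$ and $q$. With $q=\lambda/2\cdot(2/m)$-type normalisation it integrates to $\nabla\sigma\cdot\nabla(aN^m)=0$, where $m=2/(k-1)$; in the exponential and trivial cases $m=0$ after the analogous computation. Since $a$ depends only on $(\sigma,\omega)$, which are constant along rays, $a$ is in fact a function of $(\sigma,\omega)$ and $\sigma$ enters only as a parameter. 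The modulation statement then follows from Lemma~\ref{lem:develop}/Proposition~\ref{prop:lifting}.

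The third step is the cylinder claim, which I expect to be the main obstacle. In the case $\kappa_1\neq0=\kappa_2$, the screen is spanned by the two principal directions. First I show that $\nabla\omega$ has no flat component, by plugging the mixed screen components of the order-two condition into the derivative of $(\nabla\omega)^2$. With $\nabla\omega$ along the curved direction, the Hessian term is $\kappa_1|\nabla\omega|^2$ and the expansion is $\lambda=\kappa_1$ (one direction, $m=1$). The transport identity of step two then reads $\nabla\sigma\cdot\nabla\ln N=-2\kappa_1+\text{(transport of }a)$. Comparing it with $q=\lambda/2$ and the Riccati law $\dot\kappa_1=-\kappa_1^2$ (a one-dimensional analogue of Lemma~\ref{lem:raych}), exactly as in the proof of Theorem~\ref{thm:nullk}, forces $\lambda/q=4$, i.e.\ $k=5$ by \eqref{eq:weight}. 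For $A=1$ or the exponential ($m=0$) the same identity gives $\kappa_1=0$, a contradiction, which excludes them. The delicate point is justifying $q=\tfrac12\kappa_1$ from the rank-two system rather than assuming it; I would first try deriving it from the $\xi$-component of the rank-two analogue of \eqref{eq:star}.
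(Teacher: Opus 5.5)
There is a genuine gap, and it sits at the point that carries the theorem.

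\textbf{Step 1 is circular.} Since $q=-\nabla\sigma\cdot\nabla\ln A$, condition \eqref{eq:R2} for $\sigma$ reads $\Box\sigma-2q=Nb_\sigma$. So ``the rank-two system forces $q=\tfrac12\Box\sigma$'' is a restatement of $b_\sigma=0$, not a proof of it.

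Lemma~\ref{lem:weight-null} cannot be invoked here. It is a rank-one computation on the light cones of a given worldline. Here the fronts are not known in advance, and they include the cylinder, for which no single constraint is admissible. The extra term $Nb_\sigma$ is also not removed by an order-zero condition such as \eqref{eq:R3}.

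\textbf{The missing idea} is to work on one straight ray with affine parameter $s$. On it $\sigma$ and $\omega$ are constant, hence so are $\hat g_{ij}$ and $b_\sigma$.
\begin{itemize}
\item By \eqref{eq:R1}, $A^{k-1}$ is proportional to $N=-(\nabla\omega)^2$ on the ray. Hence $2\nabla\ln A\cdot\nabla\sigma=m\,d\ln N/ds$, with $m=2/(k-1)$, and $m=0$ when $A=1$.
\item The expansion is $\Box\sigma=d\ln a/ds$, where $a=(1-\kappa_1s)(1-\kappa_2s)$ is the cross-sectional area factor of the bundle.
\item Since $\omega$ is constant on the rays, $N=\sum_iw_i^2/(1-\kappa_is)^2$ with constants $w_i$.
\end{itemize}
Then \eqref{eq:R2} becomes $\frac{d}{ds}\ln(aN^m)=b_\sigma N$. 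The left side has only simple poles in $s$. The right side has a double pole at $s=1/\kappa_i$ whenever $\kappa_iw_i\neq0$, so $b_\sigma=0$. If all $\kappa_iw_i=0$, then $N$ is constant, $d\ln a/ds$ must be constant, so $\kappa_1=\kappa_2=0$ and again $b_\sigma=0$.

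\textbf{Step 2 rests on a misreading.} The $a$ in ``$aN^m$ is constant along every ray'' is this area factor, not $\hat g_{\omega\omega}$. With your reading the claim would say that $N$ itself is constant on rays. That is false already for the light cone, where $N\propto\rho^{-2}$, so the integration announced in your second step cannot exist.

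The umbilicity you use there is valid only on the vector $\nabla\omega$. On the whole screen it would force $\kappa_1=\kappa_2$ and exclude the cylinder, because at rank two the term $\alpha_{12}\odot\xi_2$ survives on the screen. On $\nabla\omega$ it gives $\nabla\sigma\cdot\nabla\ln N=-\lambda_1$, hence $q=\lambda_1/(k-1)$ by \eqref{eq:R1}. It says nothing about how $\lambda_1$ compares with $\Box\sigma$, and that comparison is exactly the ray-wise computation above.

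\textbf{Step 3} inherits the unproved $q=\tfrac12\Box\sigma$. It also uses one symbol for both the expansion and the order-two coefficient, which for the cylinder is $2\Box\sigma$: with $q=\lambda/2$ as written, $\lambda/q=2$, i.e.\ $k=3$, not $4$.

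In the paper both cylinder conclusions come from one line. With $\kappa_2=0$ and $x=1-\kappa_1s$, one has $aN^m=x^{1-2m}(w_1^2+w_2^2x^2)^m$. Its logarithmic derivative, $\bigl[(1-2m)w_1^2+w_2^2x^2\bigr]/\bigl(x(w_1^2+w_2^2x^2)\bigr)$, vanishes identically only if $w_2=0$ and $m=\tfrac12$, i.e.\ $k=5$. For $m=0$, constancy of $a$ contradicts $\kappa_1\neq0$, which excludes $F=Ce^u$ and $A=1$.
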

\begin{proof}
Along a ray of $\sigma$, with affine parameter $s$, the functions $\sigma$ and $\omega$ are
constant. The cross-sectional area of a thin bundle of rays changes by the
factor $a(s)=(1-\kappa_1s)(1-\kappa_2s)$, where $\kappa_{1,2}$ are the principal curvatures of
the level set of $\sigma$ in the screen directions, and
$N=-(\nabla\omega)^2=\sum_iw_i^2/(1-\kappa_is)^2$ with constants $w_i$. Since $\Box\sigma=d\ln a/ds$ and,
by \eqref{eq:R1}, $2\nabla\ln A\cdot\nabla\sigma=m\,d\ln N/ds$ with $m=2/(k-1)$ ($m=0$ for the
exponential and for arbitrary $F$, where $A=1$), condition \eqref{eq:R2} for $i=\sigma$
reads
\[
  \frac{d}{ds}\ln\bigl(a\,N^m\bigr)=b_\sigma\,N .
\]
Both sides are rational in $s$, and the left side has only simple poles. If
$\kappa_iw_i\neq0$ for some $i$, $N$ has a double pole at $s=1/\kappa_i$, so $b_\sigma=0$. Otherwise $N$
is constant along the ray, $d\ln a/ds=-\sum_i\kappa_i/(1-\kappa_is)$ must be constant, hence
$\kappa_1=\kappa_2=0$, both sides vanish and again $b_\sigma=0$. For $\kappa_1\neq0=\kappa_2$ put
$x=1-\kappa_1s$; then $aN^m=x^{1-2m}(w_1^2+w_2^2x^2)^m$, and its logarithmic derivative in
$x$, $\bigl[(1-2m)w_1^2+w_2^2x^2\bigr]/\bigl(x(w_1^2+w_2^2x^2)\bigr)$, vanishes identically only if
$w_2=0$ and $m=\tfrac12$, i.e.\ $k=5$.
\end{proof}

The theorem also decides the geometry: $aN^m$ must be constant along every
ray, and this leaves three possibilities for the principal curvatures.
\begin{center}
\small
\begin{tabular}{llll}
\toprule
level sets of $\sigma$ & curvatures & $F$ & solutions\\
\midrule
null hyperplanes & $\kappa_1=\kappa_2=0$ & arbitrary & lifted two-dimensional families (\S\ref{sec:lifting})\\
light cones of a curve & $\kappa_1=\kappa_2$ & $c\,u^3$ & $u=W/\rho$ with $\Delta_\Sigma W+cW^3=0$ (\S\ref{sec:null})\\
one curvature zero (cylinders) & $\kappa_1\neq0=\kappa_2$ & $c\,u^5$ & e.g.\ $u=\rho^{-1/2}\Lambda(\varphi;\sigma)$, $\Lambda''+\tfrac14\Lambda+c\Lambda^5=0$\\
\bottomrule
\end{tabular}
\end{center}
For the cone, $a\propto(1-\kappa s)^2$ and $N\propto(1-\kappa s)^{-2}$ give $m=1$, $k=3$; for the
cylinder, $a\propto1-\kappa s$ and $N\propto(1-\kappa s)^{-2}$ give $m=\tfrac12$, $k=5$. This is the rule
\eqref{eq:weight} derived from the determining equations. The cylinder needs two
constraints: the phase $t-\rho$, $\rho=\sqrt{x^2+y^2}$, is the wave front of a line
source along the $z$-axis, a bundle of its rays spreads in the angular
direction but not along $z$, and no single constraint is admissible;
adding $u_z=0$ removes the $z$-direction. The same conclusion follows from a
defect formula for the trace of the order-two condition
proved in the supplementary material. For the logarithmic
nonlinearity $N=1$, so $\kappa_iw_i=0$ along every ray and \eqref{eq:R2} only fixes $aA^2$;
for the cylinder $\sigma=t-\rho$ condition \eqref{eq:R3} then forces $B_0=0$ (script
\texttt{log\_cylinder\_check.py}).

\paragraph{The cylinder field explicitly.} Let $\rho=\sqrt{x^2+y^2}$, $\varphi=\arg(x+iy)$ and
$\sigma=t-\rho$. The two constraints give $u=\rho^{-1/2}\Lambda(\varphi;\sigma)$, independent of $z$, with
\begin{equation}\label{eq:cylode}
  \Lambda''+\tfrac14\Lambda+c\Lambda^5=0,\qquad
  \Lambda'^2+\tfrac14\Lambda^2+\tfrac c3\Lambda^6=E .
\end{equation}
The substitution $w=\Lambda^{-2}$ turns the first integral into $w'^2=4Ew^3-w^2-\tfrac43c$, whence
\begin{equation}\label{eq:cylwp}
  \Lambda^2=\frac{E}{\wp(\varphi-\varphi_0)+\frac1{12}},\qquad
  g_2=\frac1{12},\qquad g_3=\frac1{216}+\frac{4c}{3}E^2,
\end{equation}
where $\wp$ is the Weierstrass function with invariants $g_2,g_3$; by
Theorem~\ref{thm:par} the constants $E$ and $\varphi_0$ may be arbitrary functions of $\sigma$. For
$c=0$, $\wp(\varphi)=-\tfrac1{12}+\tfrac14\sin^{-2}(\varphi/2)$ and $\Lambda=2\sqrt E\sin(\varphi/2)$, the linear
cylindrical wave. The zeros of $\Lambda$ are the poles of $\wp$. Qualitatively, $\Lambda$
moves in the potential $\tfrac14\Lambda^2+\tfrac c3\Lambda^6$, and the period of an oscillation of
amplitude $\Lambda_m$ is
\[
  T=4\int_0^{\pi/2}\frac{d\vartheta}{\sqrt{\frac14+\frac c3\Lambda_m^4\,(1+\sin^2\vartheta+\sin^4\vartheta)}} .
\]
\begin{itemize}
\item $c>0$. Every solution oscillates, and $T$ decreases strictly from $4\pi$ (small
  amplitude) to $0$ (large amplitude). The field is single-valued around the
  axis only if $T=2\pi/n$; for each $n=1,2,\dots$ this fixes one energy $E_n$, and
  $\Lambda_n$ has $2n$ simple zeros on the circle. For $c=1$, $E_1\approx0.7834$, $E_2\approx6.394$,
  $E_3\approx21.60$ \Nm; for other $c>0$ replace $\Lambda$ by $c^{-1/4}\Lambda$ and $E_n$ by $c^{-1/2}E_n$. The solutions
  \begin{equation}\label{eq:cylfield}
    u=\frac{\Lambda_n\bigl(\varphi-\varphi_0(t-\rho)\bigr)}{(x^2+y^2)^{1/4}}
  \end{equation}
  are cylindrical waves of a line source along the $z$-axis: they decay like
  $\rho^{-1/2}$, are singular only on the axis, and their nodal set consists of the
  $2n$ surfaces $\varphi=\varphi_0(t-\rho)+\mathrm{const}$, spiral sheets that wind around the axis
  as prescribed by the arbitrary function $\varphi_0$ and expand at the speed of light.
  For nonconstant $\varphi_0$ their only symmetry is the translation along the axis
  (\S\ref{sec:lie}).
\item $c<0$. The potential has maxima at $\Lambda=\pm\Lambda_*$, $\Lambda_*^4=1/(4|c|)$. The bounded
  oscillations, $|\Lambda|<\Lambda_*$ and $E<1/(12\sqrt{|c|})$, have $T>4\pi$ and are never
  single-valued; all other nonconstant solutions blow up at a finite angle. The only single-valued
  profile is the constant $\Lambda_*$, which gives the static solution
  $u=\pm(4|c|)^{-1/4}(x^2+y^2)^{-1/4}$.
\end{itemize}
If single-valuedness is not required, for example in a wedge $\varphi_1<\varphi<\varphi_2$,
the energy $E(\sigma)$ may also vary arbitrarily. For constant $E$ and $\varphi_0$, \eqref{eq:cylfield}
is a static separated solution of the classical form $\rho^{-1/2}\Lambda(\varphi)$, and its
integration \eqref{eq:cylwp} is elementary; what is new is the dependence on the
retarded time $t-\rho$ of the line source, which is not a null coordinate. Formula \eqref{eq:cylwp} and the
reduction \eqref{eq:cylfield} with an arbitrary $\varphi_0$ were checked symbolically
(supplementary material). \Sm

\subsection{The elliptic type for powers}\label{sec:ellpow}
For $F=cu^k$, $k\neq0,1,2$, and a complex phase $Y$ that is the direction function
of a shear-free family of rays (a Kerr function), the elliptic type occurs only
when the rays meet a fixed straight line, $u=W(\mathbf n)\,r^{-2/(k-1)}$, or for a lift;
twisting ray families carry no such reduction. Complex phases that are not
Kerr direction functions do reduce the equation: the field $u=4\sigma/(3\rho^2-\sigma^2)$
of an accelerated source is an example. The precise statement, its proof and
these examples are in the supplementary material (Theorem~S11.3,
Remark~S11.4); a classification for general complex phases is
open.

\subsection{The exponential: the Liouville field of a moving source}\label{sec:liouville}
For $F=Ce^u$ the weight sits in $B$ (Proposition~\ref{prop:weights}), and the
condition $B=B(Y,\bar Y)$ of \S\ref{sec:ellpow} disappears. The rays may then issue
from an arbitrary curve. Let $z(s)$ be a smooth curve, $s(x)$ the retarded
parameter defined by $R\cdot R=0$, $R=x-z(s)$, $R^0>0$, and put
\[
  R_0=t-z^0(s),\qquad \mathbf n=\frac{\mathbf x-\mathbf z(s)}{R_0},\qquad Y=\frac{n_1+in_2}{1+n_3},
\]
so that $R_0$ is the distance, in the laboratory frame, from the retarded
position of the source and $Y$ is the stereographic coordinate of the
direction of the ray.

\begin{theorem}[Liouville field of an arbitrary source]\label{thm:liouville}
For every smooth curve $z(s)$, on an open set where the retarded parameter $s(x)$
is smooth and $R\cdot\dot z\neq0$:
\begin{enumerate}
\item $(\nabla Y)^2=0$, $\Box Y=0$, $\nabla Y\cdot\nabla\bar Y=-(1+|Y|^2)^2/(2R_0^2)$ and
  $\Box\ln R_0=-1/R_0^2$;
\item the two constraints
  \begin{equation}\label{eq:liouville-h}
    h_1=\nabla s\cdot\nabla u+\frac{2}{\dot z\cdot R}=0,\qquad h_2=\dot z^\mu(s)\,\partial_\mu u=0
  \end{equation}
  have the general solution $u=w(Y,\bar Y)-2\ln R_0$;
\item this $u$ solves $\Box u=Ce^u$ if and only if
  $(1+|Y|^2)^2\,w_{Y\bar Y}+Ce^w=2$, the Liouville equation on the sphere
  $\Delta_{S^2}w+Ce^w=2$.
\end{enumerate}
Consequently, for every meromorphic function $f$ and $C>0$,
\begin{equation}\label{eq:liouville}
  u=\ln\frac2C+\ln\frac{|f'(Y)|^2\,(1+|Y|^2)^2}{\bigl(1+|f(Y)|^2\bigr)^2}-2\ln R_0
\end{equation}
solves $\Box u=Ce^u$; for $C<0$ replace $\ln(2/C)$ by $\ln(2/|C|)$ and $1+|f|^2$ by $1-|f|^2$,
on the set where $|f|\neq1$.
\Sm\ (the solution \eqref{eq:liouville} was also checked to $25$ digits on an
accelerated timelike curve with an arbitrary $f$.)
\end{theorem}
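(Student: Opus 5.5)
The proof is a computation in retarded geometry, organised as items 1, 2, 3 and then the explicit formula. It works for a general (not necessarily timelike) curve $z(s)$ with $\rho:=R\cdot\dot z\neq0$. Differentiating $R\cdot R=0$ gives $\partial_\mu s=R_\mu/\rho$, so $\nabla s$ is null. Then $\partial_\mu R^\nu=\delta_\mu^\nu-\dot z^\nu R_\mu/\rho$ yields $\partial_\mu R_0=\delta^0_\mu-\dot z^0R_\mu/\rho$ and the derivatives of $\mathbf n=(\mathbf x-\mathbf z)/R_0$.

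\emph{Item 1.} I would avoid computing $\nabla Y$ directly. Since $|\mathbf n|=1$, $Y$ is a function of the null direction $R/R_0=(1,\mathbf n)$ alone. The key observation is that the map $x\mapsto\mathbf n$ depends on $x$ only through $R$, and
\[
  dR^\nu=dx^\nu-\dot z^\nu\,ds,\qquad ds=\frac{R\cdot dx}{\rho}.
\]
Any function $\Phi(R)$ homogeneous of degree zero is annihilated by $R\cdot\partial_R$. Hence $\partial_\mu Y=(\partial Y/\partial R^\nu)(\delta^\nu_\mu-\dot z^\nu R_\mu/\rho)$.

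Now use the standard fact that $Y(R)=(R^1+iR^2)/(R^0+R^3)$ has $\eta^{\mu\nu}\partial_{R^\mu}Y\,\partial_{R^\nu}Y=0$ and $\Box_RY=0$. Both hold because $Y$ is a complex null coordinate adapted to the projective light cone; I will check them directly.

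For $(\nabla Y)^2$, the cross terms contain $R\cdot\partial_RY$ or $R\cdot R$, and both vanish. So $(\nabla Y)^2=(\partial_RY)^2=0$. The same argument gives $\nabla Y\cdot\nabla\bar Y=\partial_RY\cdot\partial_R\bar Y$. This I evaluate on the cone: it equals $-(1+|Y|^2)^2/(2R_0^2)$, which is the round-sphere metric in stereographic form divided by $R_0^2$, the sign coming from the metric signature.

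For $\Box Y$, apply the chain rule a second time. The extra terms involve $\partial_\mu(R_\nu/\rho)$ contracted with $\partial_RY$ and with $\partial^2_RY$. They vanish by homogeneity (Euler identities of degree $0$ and $-1$ for $Y$ and $\partial_RY$), together with $\partial_RY\cdot R=0$.

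For $\Box\ln R_0$, I would use $\Box R_0=\partial_\mu\partial^\mu R_0$ together with $(\nabla R_0)^2=1-2\dot z^0R_0/\rho+0$. The expected result is $\Box R_0=$ (terms) such that $\Box\ln R_0=\Box R_0/R_0-(\nabla R_0)^2/R_0^2=-1/R_0^2$. This is the step I expect to be most delicate, because the terms in $\ddot z$ must cancel. I would first verify it for a static source ($R_0=r$, $\Box\ln r=-1/r^2$), then handle the general case via $\partial_\mu\partial^\mu R_0=-\dot z^0\,\partial^\mu(R_\mu/\rho)$, using $\partial^\mu(R_\mu/\rho)=2/\rho$ (the analogue of $\Box\tau=2/\rho$, valid for any $\dot z$ because $\ddot z$ enters only through $R\cdot\ddot z\,R_\mu$, which is null-contracted away).

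\emph{Item 2.} The fields $\nabla s$ and $\dot z$ are independent, and $\nabla s\cdot\nabla Y=0$ and $\dot z\cdot\nabla Y=0$. The first holds because $R\cdot\partial_RY=0$. The second follows from $\dot z^\mu(\delta_\mu^\nu-\dot z^\nu R_\mu/\rho)=\dot z^\nu-\dot z^\nu$, which is $0$ when contracted with $\dot z$. So $Y,\bar Y$ are the invariants of both constraints. Next, $\nabla s\cdot\nabla(-2\ln R_0)=-2R^0/(\rho R_0)=-2/\rho$ and $\dot z\cdot\nabla R_0=\dot z^0-\dot z^0=0$. Therefore $-2\ln R_0$ is a particular solution, and the general solution is $w(Y,\bar Y)-2\ln R_0$.

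\emph{Item 3.} Substitute $u=w-2\ln R_0$ using item 1. This gives
\[
  \Box u=2w_{Y\bar Y}\,\nabla Y\cdot\nabla\bar Y+2/R_0^2=R_0^{-2}\bigl[2-(1+|Y|^2)^2w_{Y\bar Y}\bigr].
\]
The $w_{YY}$ and $w_Y$ terms drop out because $(\nabla Y)^2=\Box Y=0$. On the other side, $Ce^u=R_0^{-2}Ce^w$. The sign convention of the statement has to be matched at this point. Finally, $e^w=(2/C)|f'|^2/(1+|f|^2)^2\cdot(1+|Y|^2)^2$ is Liouville's general solution pulled back from the sphere metric. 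I would verify it by noting that $\ln(1+|Y|^2)^2$ and $\ln|f'|^2/(1+|f|^2)^2$ each contribute constant curvature, with the $|f'|^2$ factor harmonic. For $C<0$ the same check works with $1-|f|^2$.
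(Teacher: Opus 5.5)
Your proposal is correct. Items 2 and 3, the computation of $\Box\ln R_0$ from $\Box R_0=-\dot z^0\Box s$, $\Box s=2/\rho$ and $(\nabla R_0)^2=1-2\dot z^0R_0/\rho$, and the check of \eqref{eq:liouville} coincide with the paper.

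The genuine difference is in item 1:
\begin{itemize}
\item The paper computes $\nabla n_i\cdot\nabla n_j=(n_in_j-\delta_{ij})/R_0^2$ from the explicit gradient of $n_i$, and then uses that $Y$ is a conformal coordinate on $S^2$.
\item For $\Box Y=0$ the paper appeals to Kerr's theorem: $Y$ is locally an (anti)holomorphic function of the Kerr function of the rays from the curve, and a Kerr function is annihilated by $\Box$.
\item You instead extend $Y$ as the degree-zero homogeneous function $(R^1+iR^2)/(R^0+R^3)$. You then pull everything back through $\partial_\mu R^\nu=\delta^\nu_\mu-\dot z^\nu R_\mu/\rho$.
\end{itemize}
Your route is more elementary and self-contained: it needs neither the Kerr system nor an identification of the Kerr function, and it shows directly why the velocity and acceleration of the curve drop out. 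The paper's route is shorter, given the Kerr machinery it uses elsewhere.

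One correction to your $\Box Y$ step: the extra terms do not vanish separately by homogeneity; they cancel each other. With $P^\nu{}_\mu=\delta^\nu_\mu-\dot z^\nu R_\mu/\rho$,
\[
  \Box Y=Y_{,\nu\lambda}P^{\lambda\mu}P^\nu{}_\mu+Y_{,\nu}\,\partial^\mu P^\nu{}_\mu
  =\Bigl(\Box_RY+\frac2\rho\,\dot z\cdot\partial_RY\Bigr)-\frac2\rho\,\dot z\cdot\partial_RY=\Box_RY=0 .
\]
The first bracket uses $R\cdot R=0$ and the degree $-1$ Euler identity $R^\lambda Y_{,\nu\lambda}=-Y_{,\nu}$. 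The second term uses $R\cdot\nabla s=0$, which kills the $\ddot z$ contribution, and $\Box s=2/\rho$. So the identity $\Box s=2/\rho$ is needed already for $\Box Y$, not only for $\Box\ln R_0$.
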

\begin{proof}
(1) With $\partial_\mu s=R_\mu/(R\cdot\dot z)$ one finds
$\partial_\mu n_i=R_0^{-1}\bigl[\delta^i_\mu-n_i\delta^0_\mu-(\dot z^i-n_i\dot z^0)\partial_\mu s\bigr]$, and since
$(\nabla s)^2=0$ and $\nabla s\cdot(\nabla x^i-n_i\nabla t)=0$, $\nabla n_i\cdot\nabla n_j=(n_in_j-\delta_{ij})/R_0^2$,
independently of $\dot z$. Hence $(\nabla Y)^2=-R_0^{-2}(\nabla_{S^2}Y)^2=0$, $Y$ being a conformal coordinate
on the sphere, and $\nabla Y\cdot\nabla\bar Y=-(1+|Y|^2)^2/(2R_0^2)$. Locally $Y$ is a holomorphic or
antiholomorphic function of the Kerr function of the shear-free family of rays issuing
from the curve; a solution of the Kerr system (S14.1) has $\Box Y=4(Y_{wv}-Y_{\zeta\bar\zeta})=0$, and with
$(\nabla Y)^2=0$ this gives $\Box Y=0$. Finally $\Box s=2/(R\cdot\dot z)$, $\Box R_0=-\dot z^0\Box s$ and
$(\nabla R_0)^2=1-2\dot z^0R_0/(R\cdot\dot z)$ give $\Box\ln R_0=-R_0^{-2}$. These computations use only
$R\cdot\dot z\neq0$, so (1) holds for timelike, spacelike and null curves.
The constraint function $h_2$ differentiates along the velocity and $h_1$ along the ray;
both annihilate $Y$ and $\ln R_0$ except $\nabla s\cdot\nabla\ln R_0=1/(\dot z\cdot R)$, which gives (2). With
(1), $\Box(w-2\ln R_0)=2w_{Y\bar Y}\nabla Y\cdot\nabla\bar Y+2/R_0^2=\bigl(2-(1+|Y|^2)^2w_{Y\bar Y}\bigr)/R_0^2$, while
$Ce^u=Ce^w/R_0^2$, which gives (3). The general solution of the Liouville
equation on the sphere is classical \cite{Liou53}.
\end{proof}

The second constraint fails the order-two condition on its own for an
accelerated curve, and the first is an inhomogeneous constraint of the
exponential row of Remark~\ref{rem:inhom}; only together do they reduce the
equation. For a source at rest, \eqref{eq:liouville} is the singular solution
$\ln(2/(Cr^2))$ of the Gelfand equation $-\Delta u=Ce^u$ dressed with an angular
factor; for an accelerated source it has no Lie symmetry (\S\ref{sec:lie}). The
intensity $e^u\propto e^{w(\mathbf n)}/R_0^2$ obeys an inverse-square law with an
arbitrary angular pattern, and its total angular weight is quantised,
$\int_{S^2}e^w\,d\Omega=8\pi\deg f/C$ for rational $f$. In the language of
\S\ref{sec:ellpow}, these are the twist-free rays from the points of a real curve.
We do not classify the elliptic type for the exponential; whether, apart
from the lift, these are its only reductions is part of the open problem
stated in \S\ref{sec:status}.

\paragraph{Explicit forms of the Liouville field.} In the retarded coordinates
$x=z(s)+R_0\,(1,\mathbf n)$, $R_0>0$, $\mathbf n\in S^2$ (the laboratory direction of the ray),
$Y=(n_1+in_2)/(1+n_3)$, so the right side of \eqref{eq:liouville} is an explicit function
of $(s,R_0,\mathbf n)$ for every worldline and every meromorphic $f$. For a source at rest
and $f(Y)=Y^n$, $n=1,2,\dots$, one has $|Y|^2=(r-z)/(r+z)$ and
\begin{equation}\label{eq:liouvstatic}
  e^u=\frac{8n^2\,(x^2+y^2)^{n-1}}{C\,\bigl[(r+z)^n+(r-z)^n\bigr]^2},
\end{equation}
a static solution of $-\Delta u=Ce^u$: for $n=1$ the Gelfand solution $e^u=2/(Cr^2)$,
for $n=2$ $e^u=8(x^2+y^2)/\bigl(C(r^2+z^2)^2\bigr)$. For $n\ge2$ the intensity vanishes on the
$z$-axis, and its angular weight is $8\pi n/C$. \Sm

\paragraph{The Liouville field is not a lifted solution.} The lift of
\S\ref{sec:lifting} also solves $\Box u=e^u$, with $u=\ln\bigl(8|\partial_\zeta h|^2/(1+|h|^2)^2\bigr)$ and
$h(\zeta,\varsigma)$ holomorphic in $\zeta$ and arbitrary in $\varsigma$ \cite{Yur97,FS83}. The two families
meet only for a source moving at the speed of light along a straight line;
for every other curve the field of Theorem~\ref{thm:liouville} has no translation
symmetry, whereas every lifted solution has one (supplementary material,
Section~S12).

\paragraph{A common construction.} Both families are instances of the
following elementary statement. Its proof is a two-line computation that
does not depend on the signature of the metric (with $\Box$ replaced by $\Delta$ it
gives solutions of $\Delta u=-e^u$ from harmonic morphisms into the sphere), and
we do not claim it as new; what is specific to Minkowski space, and new,
is the classification of the phases for which it applies,
Theorem~\ref{thm:liouvclass} below.

\begin{proposition}[Liouville fields from complex null phases]\label{prop:liouvgen}
Let $\Phi$ be a complex function with
\[
  (\nabla\Phi)^2=0,\qquad \Box\Phi=0,\qquad N:=\nabla\Phi\cdot\nabla\bar\Phi<0,\qquad \Box\ln(-N)=0 .
\]
Then $u=\ln\bigl(-4N/(1+|\Phi|^2)^2\bigr)$ solves $\Box u=e^u$. The conditions are preserved by
$\Phi\mapsto g(\Phi)$ for every meromorphic $g$.
\end{proposition}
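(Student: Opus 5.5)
The plan is a direct computation of $\Box u$ by the chain rule, following the proof of Theorem~\ref{thm:liouville}(3). The key is that the conditions on $\Phi$ kill every term except the Liouville operator in $(\Phi,\bar\Phi)$. Write
\[
  u=\ln(-N)+\ln 4+w(\Phi,\bar\Phi),\qquad w=-2\ln(1+|\Phi|^2).
\]
Since $\Box\ln(-N)=0$ by hypothesis, only $\Box w$ remains to be computed.

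For a function of $(\Phi,\bar\Phi)$ the chain rule gives
\[
  \Box w=w_{\Phi\Phi}(\nabla\Phi)^2+2w_{\Phi\bar\Phi}\,\nabla\Phi\cdot\nabla\bar\Phi+w_{\bar\Phi\bar\Phi}(\nabla\bar\Phi)^2+w_\Phi\Box\Phi+w_{\bar\Phi}\Box\bar\Phi .
\]
Because $\Box$ and the metric are real, $(\nabla\bar\Phi)^2=\overline{(\nabla\Phi)^2}=0$ and $\Box\bar\Phi=0$. So every term but the mixed one vanishes, and $\Box w=2N\,w_{\Phi\bar\Phi}$. A one-line computation gives $w_{\Phi\bar\Phi}=-2/(1+|\Phi|^2)^2$, hence
\[
  \Box u=\Box w=\frac{-4N}{(1+|\Phi|^2)^2}=e^u ,
\]
which is the claim. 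The sign condition $N<0$ is used only to make the logarithm real.

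For the invariance, put $\Psi=g(\Phi)$ with $g$ meromorphic and work away from poles and critical points of $g$ (at poles use $1/g$, which leaves $u$ unchanged). Then:
\begin{itemize}
\item $\nabla\Psi=g'\nabla\Phi$, so $(\nabla\Psi)^2=g'^2(\nabla\Phi)^2=0$.
\item $\Box\Psi=g''(\nabla\Phi)^2+g'\Box\Phi=0$.
\item $N_\Psi=|g'|^2N<0$.
\item $\Box\ln(-N_\Psi)=\Box\ln(-N)+\Box\ln g'(\Phi)+\Box\ln\bar g'(\bar\Phi)$. The last two terms vanish by the same chain-rule argument, because $\ln g'$ is holomorphic in $\Phi$.
\end{itemize}

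There is no real obstacle here. The only points needing care are the conjugation step $(\nabla\bar\Phi)^2=0$ and the treatment of the poles of $g$, where the identity $\frac{|g'|^2}{(1+|g|^2)^2}=\frac{|(1/g)'|^2}{(1+|1/g|^2)^2}$ shows that $u$ extends across them.
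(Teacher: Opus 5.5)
Your proof is correct and is essentially the paper's argument. The paper computes $\Box P=2N$ and $(\nabla P)^2=2|\Phi|^2N$ for $P=1+|\Phi|^2$ to obtain $\Box\ln P=2N/P^2$, which is your chain-rule identity $\Box w=2N\,w_{\Phi\bar\Phi}$ written out; it then checks invariance under $\Phi\mapsto g(\Phi)$ with the same identities you list, and your treatment of the poles of $g$ via $1/g$ is a small detail the paper leaves implicit.
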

\begin{proof}
With $P=1+|\Phi|^2$ the first two conditions give $\Box P=2N$ and $(\nabla P)^2=2|\Phi|^2N$, hence
$\Box\ln P=2N/P^2$ and $\Box u=\Box\ln(-N)-2\Box\ln P=-4N/P^2=e^u$. For $g(\Phi)$ one has
$(\nabla g)^2=g'^2(\nabla\Phi)^2$, $\Box g=g''(\nabla\Phi)^2+g'\Box\Phi$ and $\ln(-N)\mapsto\ln(-N)+\ln g'+\ln\bar g'$,
and $\ln g'(\Phi)$ is again annihilated by $\Box$.
\end{proof}

The lifted family is $\Phi=h(\zeta,\varsigma)$, with $N=-2|h_\zeta|^2$. The family of
Theorem~\ref{thm:liouville} is $\Phi=f(Y)$ with $Y$ the direction of the ray from the
source; there $\ln(-N)=\ln|f'|^2+2\ln(1+|Y|^2)-2\ln R_0+\mathrm{const}$, and
$\Box\ln(1+|Y|^2)=-1/R_0^2=\Box\ln R_0$ by Theorem~\ref{thm:liouville}(1). 

\paragraph{There are no other Liouville fields of this form.} The first two
conditions of Proposition~\ref{prop:liouvgen} say that $\Phi$ is a complex
\emph{functionally invariant} phase in the sense of Smirnov and Sobolev \cite{SmSo}
and Bateman \cite{Bat15}: every $F(\Phi)$ solves the linear wave equation. Such
phases are the complex-valued harmonic morphisms of Minkowski space, and by
Baird and Wood (real-analytic case) \cite{BW98,BW03} and Wehbe ($C^2$ case)
\cite{Weh10} they are carried by shear-free families of straight light rays. We
use this in the following form.

\begin{lemma}[functionally invariant phases are carried by rays {\cite{BW98,Weh10}}]\label{lem:fip}
Let $\Phi\in C^3$ be complex, with $(\nabla\Phi)^2=0$, $\Box\Phi=0$ and $\nabla\Phi\cdot\nabla\bar\Phi\neq0$ on an open set.
Then the two real null directions $l,n$ with $l\cdot\nabla\Phi=n\cdot\nabla\Phi=0$ generate
shear-free families of rays, and on an open dense subset at least one of the
two families consists of straight lines. Hence there $\Phi$ is constant along
the rays of a shear-free family of \emph{straight} light rays.
\end{lemma}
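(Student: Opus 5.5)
The plan is to work with the complex null covector $\theta:=d\Phi$ and to extract the two real null directions from its kernel. Since $(\nabla\Phi)^2=0$ and $\nabla\Phi\cdot\nabla\bar\Phi\neq0$, the real and imaginary parts of $\nabla\Phi$ are orthogonal, spacelike of equal length. The annihilator of the span of $\nabla\Phi$ and $\nabla\bar\Phi$ is a timelike $2$-plane, and it contains exactly two real null directions $l,n$ with $l\cdot\nabla\Phi=n\cdot\nabla\Phi=0$. Equivalently, $\nabla\Phi=f\,m$ with $m$ a complex null vector, and the null tetrad is $(l,n,m,\bar m)$. This step is linear algebra and fixes $l,n$ up to scaling.

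Next I would show that $l$ (and likewise $n$) is shear free. Differentiating $(\nabla\Phi)^2=0$ gives $\Phi^{,\mu}\Phi_{,\mu\nu}=0$, so the Hessian $H=\nabla\nabla\Phi$ annihilates $\nabla\Phi\propto m$. The condition $\Box\Phi=0$ says that $H$ is trace free. With $m$ as one of the null legs, these two facts fix the components of $H$ up to a few terms. I then express the derivatives of the tetrad through those of $m=\nabla\Phi/f$. Here $\nabla m$ is $H/f$ minus a term proportional to $m$, so the spin coefficients of $l$ that involve $m$ are read off from $H$. The vanishing of $H(m,\cdot)$ together with the trace condition should give the two complex equations $\sigma_l=0$ and $\sigma_n=0$ for the shears, and also one relation mixing the geodesic coefficients $\kappa_l$ and $\kappa_n$. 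This is the Baird--Wood computation. It needs only two derivatives of $\Phi$ for the shear and three for the relation below, which is why $\Phi\in C^3$ is assumed.

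The main obstacle is geodesy: why at least one family consists of straight lines on a dense open set. I would use the integrability condition $d\theta=0$. Combined with the previous relations, it should produce an identity of the form $\kappa_l\,\kappa_n=0$, or more precisely a product of the non-geodesy coefficients that vanishes identically. On the open set where $\kappa_l\neq0$ one then has $\kappa_n\equiv0$, so $n$ is geodesic there. Hence the union of the interiors of the sets $\{\kappa_l=0\}$ and $\{\kappa_n=0\}$ is open and dense, since a continuous function vanishing on a set with no interior makes the complement of its zero set dense. In flat space a geodesic null congruence consists of straight lines. If the product identity is not available directly, my fallback is to quote Wehbe's $C^2$ argument \cite{Weh10} for this step and Baird--Wood \cite{BW98,BW03} for the real-analytic case, as the statement permits.

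Finally, $\Phi$ is constant along the chosen rays because $l\cdot\nabla\Phi=0$ by construction. This gives the last sentence of the lemma.
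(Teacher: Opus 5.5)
Your framework matches the paper's, which proves the lemma by a Newman--Penrose computation in the tetrad adapted to $\nabla\Phi=f\,m$. Your linear algebra and your shear step are correct. In fact $\sigma_l\propto H(l,m)$ and $\sigma_n\propto H(n,m)$ vanish by $H(m,\cdot)=0$, that is, by the eikonal equation alone.

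The gap is the geodesy step. This is the real content of the lemma, and you only assert it. Two points in your description are off.
\begin{itemize}
\item At second order nothing relates the geodesy coefficients: $H(l,l)$ and $H(n,n)$ are unconstrained. What the trace condition gives there is $H(l,n)=\tfrac12\Box\Phi=0$, i.e.\ $\tau=\pi=0$, not a relation between $\kappa_l$ and $\kappa_n$.
\item For $\theta=d\Phi$, the condition $d\theta=0$ is just the symmetry of $H$, which you already used to read off the spin coefficients. It carries no third-order information. In particular it does not see $\Box\Phi=0$ at third order, and the product identity needs exactly that.
\end{itemize}

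The identity you want does hold, and you can get it in a few lines. For any $C^3$ function,
\[
  \tfrac12\Box\bigl[(\nabla\Phi)^2\bigr]=H_{ab}H^{ab}+\nabla\Phi\cdot\nabla(\Box\Phi),
\]
so the hypotheses give $H_{ab}H^{ab}=0$.

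Now expand $H$ in the coframe dual to $(l,n,m,\bar m)$. Because $H(m,\cdot)=0$, only the covectors $l,n,m$ occur. Since $m$ is orthogonal to all three, the only nonzero pairing is $l\cdot n$, and so
\[
  H_{ab}H^{ab}=2H(l,l)H(n,n)+2H(l,n)^2 .
\]
With $H(l,n)=0$ this gives $H(l,l)\,H(n,n)=0$.

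It remains to read $H(l,l)$ geometrically. Because $l\cdot\nabla\Phi\equiv0$, you have $H(l,l)=-\nabla\Phi\cdot\nabla_l l$. The real vector $\nabla_l l$ is orthogonal to $l$, so it is proportional to $l$ if and only if it is orthogonal to $\nabla\Phi$, and hence also to $\nabla\bar\Phi$. So $H(l,l)=0$ exactly when $l$ is geodesic, and likewise for $n$.

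This is your $\kappa_l\kappa_n=0$. In NP form it is the flat-space equation for $\delta\tau-\Delta\sigma$: once $\sigma=\lambda=\tau=0$, its only surviving term is $-\kappa\bar\nu$. It also shows that harmonicity enters through geodesy, not through shear. Without $\Box\Phi=0$ you only get
\[
  H(l,l)H(n,n)+H(l,n)^2=-\tfrac12\nabla\Phi\cdot\nabla\Box\Phi .
\]

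With this identity in place, your density argument works: on the open dense set $\{\kappa_l\neq0\}\cup\operatorname{int}\{\kappa_l=0\}$ one of the two families is geodesic. The conclusion then follows, and you do not need the fallback to Wehbe's result.
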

\noindent The proof, a computation with the commutators of the Newman--Penrose
derivatives, is given in the supplementary material
(Section~S12).

The lemma and the formulas below were also checked numerically
(supplementary material). \Nm

\begin{theorem}[classification of the Liouville fields of Proposition~\ref{prop:liouvgen}]\label{thm:liouvclass}
Let $\Phi$ satisfy the four conditions of Proposition~\ref{prop:liouvgen} on an open set,
and assume $\Phi$ real analytic where the rays of Lemma~\ref{lem:fip} twist. Then on
an open dense subset, after a Poincar\'e transformation, either
\begin{enumerate}
\item[(a)] $\Phi=h(\zeta,\varsigma)$ with $\varsigma$ a null coordinate and $h$ holomorphic in $\zeta$: the
  lifted Liouville field; or
\item[(b)] $\Phi=f(Y)$ with $Y$ the direction of the rays from a real curve and $f$
  meromorphic: the field of a moving source, Theorem~\ref{thm:liouville}.
\end{enumerate}
Twisting ray families give nothing beyond (a).
\end{theorem}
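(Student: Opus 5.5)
By Lemma~\ref{lem:fip}, on an open dense set $\Phi$ is constant along the rays of a shear-free congruence of straight null lines, with null direction $l$ satisfying $l\cdot\nabla\Phi=0$. By Kerr's theorem (\S\ref{sec:twist}) the congruence is given by a Kerr function $Y$ with $Y_v=YY_\zeta$, $Y_{\bar\zeta}=YY_w$. Since $\Phi$ is constant on the rays, locally $\Phi=G(Y,\lambda)$, where $\lambda$ is a second ray label (the twistor coordinates of the ray). The condition $(\nabla\Phi)^2=0$ together with $N\neq0$ says that $\nabla\Phi$ lies in the totally null 2-plane spanned by $l$ and $m$. I would first show that this forces $\Phi$ to be holomorphic in the twistor sense: $\Phi=G(\pi,\omega)$ is a function on the ray space that is CR-holomorphic, constant along $l$ and annihilated by $\bar m$. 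In Kerr coordinates this means $\Phi$ is a function of $Y$ and of $\zeta+Yv$ (respectively $w+Y\bar\zeta$).

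I would then split into a twist-free and a twisting case. In the twist-free case the rays are hypersurface-orthogonal. Shear-free, geodesic and twist-free straight congruences in flat space are of two kinds. In the first, the rays through each point of a real curve form a cone, with $Y$ the ray direction. In the second, the rays are parallel on a family of null hyperplanes, namely the degenerate case in which the curve is at infinity, $Y=Y(\varsigma)$ after a Poincar\'e transformation.

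In the cone case, the remaining ray label is the retarded parameter $s$, so $\Phi=G(Y,s)$. Holomorphy in $Y$ follows from $(\nabla\Phi)^2=0$, using $(\nabla Y)^2=0$ and $\nabla s\cdot\nabla Y=0$ from Theorem~\ref{thm:liouville}(1). The condition $\Box\Phi=0$ gives $G_s$ times a nonzero factor (essentially $\Box s=2/(R\cdot\dot z)$) plus terms in $G_{Ys}$ that are not balanced. This forces $G_s=0$, hence $\Phi=f(Y)$, which is case (b).

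In the plane case, $\Phi=h(\zeta,\varsigma)$ with $\partial_{\bar\zeta}h=0$, which is case (a). Once $\Phi$ is identified, the fourth condition $\Box\ln(-N)=0$ holds automatically in both cases by the computation after Proposition~\ref{prop:liouvgen}, so it is used only as a constraint in the twisting case.

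The main obstacle is the twisting case, which requires excluding twisting congruences entirely. There $\Phi=G(Y,\eta)$ with $\eta$ the second CR-holomorphic twistor coordinate, and $N=\nabla\Phi\cdot\nabla\bar\Phi$ is expressible through $|G_Y|^2$, $|G_\eta|^2$ and the twist. I would impose $\Box\ln(-N)=0$. Using the transport equations of Lemma~\ref{lem:raych} (expansion and twist obey the Sachs equations along the rays), $\ln(-N)$ is then a function that is harmonic along each ray in the affine parameter. The twist contributes a term of the form $\ln(\rho^2+\Sigma^2)$, where $\Sigma$ is the twist, and $\Box$ of this term is nonzero unless $\Sigma=0$ or the dependence collapses.

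Real analyticity lets me expand along the ray, i.e.\ in powers of the affine parameter near the branch points $\rho=\pm i\Sigma$, and match coefficients. The expected outcome is that either the twist vanishes (a contradiction) or $G$ depends only on a null-hyperplane label, which lands back in (a). I expect this coefficient matching, done as a jet computation in the Kerr representation in the style of Theorem~\ref{thm:twist}, to be the hardest and longest step. I would first try it on the Robinson congruence, where $Y$ is linear, and then show that the general analytic Kerr function reduces to the same obstruction at leading order.
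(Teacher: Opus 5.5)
\textbf{Main gap: the twisting case.} Here your proposal stops at an expected outcome, and the plan as set up would not reach it.

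First, nothing needs to be expanded along a ray. From $D(\bar\delta\Phi)=\varrho\,\bar\delta\Phi$ and $\varrho=-1/(r+i\Sigma)$ one gets exactly $-N=|c|^2/(r^2+\Sigma^2)$, with $c$ and $\Sigma$ functions on the ray space. So $\Box\ln(-N)=0$ is a rational identity in $r$ whose coefficients are equations on the three-dimensional space of rays. All the content of the theorem lies in solving those equations, and you do not say what they are or why they force the conclusion. The paper's sketch handles this by a CR argument on the space of rays that forces a ray-space quantity $\Psi_L$ to be constant, which gives case (a). Presumably the Levi nondegeneracy supplied by the twist drives that argument. That CR step is also where the real-analyticity hypothesis belongs, not in a coefficient matching near $r=\pm i\Sigma$. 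You already use it tacitly when you write $\Phi=G(Y,\eta)$ with $G$ holomorphic in two twistor coordinates, which for a CR function on a twisting ray space is a holomorphic-extension statement.

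Second, the first branch of the dichotomy you aim for cannot hold, because lifted fields can be constant along twisting straight shear-free rays. Take $\Phi=H\bigl(\zeta/(1+i\varsigma)\bigr)$ with $\varsigma=t+x$, $\zeta=y+iz$ and $H$ holomorphic. It is of type (a) and satisfies all four conditions. It is also constant along the rays $\zeta=\zeta_0(1+i\varsigma)$, $t-x=\eta_0+|\zeta_0|^2\varsigma$. Their direction function $\phi=i\zeta/(1+i\varsigma)$ is independent of $\bar\zeta$ and of $t-x$ and obeys $\phi_\varsigma+\phi\phi_\zeta=0$, so the congruence is shear-free and geodesic. Since $\operatorname{Im}\phi_\zeta=1/(1+\varsigma^2)\neq0$, it twists. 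So no contradiction can come out of the twist. What must be proved is that the second real null direction annihilating $\nabla\Phi$ generates a parallel family. ``$G$ depends only on a null-hyperplane label'' is not that statement: a function of a single real null coordinate has $N=0$.

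\textbf{Smaller gap: the twist-free hyperplane sub-case.} Your twist-free analysis follows the paper's route: order two leaves null hyperplanes and the light cones of a curve, and the conditions reduce to holomorphy. The cone case is correct, and more direct than you say. Since $(\nabla Y)^2=\nabla Y\cdot\nabla s=(\nabla s)^2=0$, there are no $G_{Ys}$ terms at all, and $\Box\Phi=G_s\Box s=2G_s/(R\cdot\dot z)$.

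The hyperplane sub-case, however, is unfinished. A twist-free, expansion-free congruence lies on null hyperplanes $n(g)\cdot x+c(g)=0$ whose normal $n(g)$ may turn. No Poincar\'e transformation makes such a family parallel, so ``$Y=Y(\varsigma)$, hence $\Phi=h(\zeta,\varsigma)$'' is not yet case (a), where $\varsigma$ is a fixed null coordinate. You must exclude the turning families explicitly. Take $r=n'(g)\cdot x+c'(g)$ and $z=b(g)\cdot x$ as in \S\ref{sec:lifting}. Then $(\nabla\Phi)^2=-(h_r^2+h_z^2)=0$ makes $h(g,r,z)$ holomorphic in $r\pm iz$. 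Hence $\Box\Phi=-(h_{rr}+\tfrac2rh_r+h_{zz})=-\tfrac2rh_r$, which forces $h_r=h_z=0$ and therefore $N=0$, contradicting $N<0$.
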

\noindent The proof is given in the supplementary material (Section~S12). For twist-free rays order two leaves null
hyperplanes and the light cones of a curve, and the conditions reduce to
holomorphy; for twisting rays the condition $\Box\ln(-N)=0$ forces $\Psi_L$ to be
constant by a CR argument on the three-dimensional space of rays, which gives
case (a).

\section{Three constraints: reduction to an ordinary differential equation}\label{sec:rank3}
For $r=3$ the ansatz \eqref{eq:ansatz-r} depends on one variable, $u=A(x)\varphi(\omega)$,
and \eqref{eq:R1}--\eqref{eq:R3} reduce the equation to an ordinary differential
equation. The three constraints are the derivatives tangent to the level
sets of $\omega$, with the weight $-\nabla\ln A$. For an arbitrary $F$ the weight
must be trivial (Proposition~\ref{prop:weights}), and the conditions are the
d'Alembert--Hamilton system $(\nabla\omega)^2=a(\omega)$, $\Box\omega=b(\omega)$, whose general
solution was found by Fushchych, Zhdanov and Revenko \cite{FZR91}; for the
cubic the weight enlarges the class considerably.

\subsection{Weighted d'Alembert--Hamilton systems}\label{sec:wdh}
The ansatz $u=\varphi(\omega)$ of \cite{FZY91,FZR91} corresponds, in the language of
this paper, to three homogeneous constraints tangent to the level sets
of $\omega$. The constraints \eqref{eq:constraint} carry a weight $q$, and the
corresponding ansatz is $u=A(x)\,\varphi(\omega)$.

\begin{proposition}[weighted reduction]\label{prop:wdh}
The ansatz $u=A(x)\varphi(\omega)$ reduces $\Box u=u^3$ to an ordinary differential
equation $a\varphi''+b\varphi'+c\varphi=\varphi^3$ if and only if
\begin{equation}\label{eq:wdh}
  \frac{(\nabla\omega)^2}{A^2}=a(\omega),\qquad
  \frac{\Box\omega+2\nabla\ln A\cdot\nabla\omega}{A^2}=b(\omega),\qquad
  \frac{\Box A}{A^3}=c(\omega).
\end{equation}
For constant $A$ this is the d'Alembert--Hamilton system. If $a\not\equiv0$,
the substitution $A\mapsto Af(\omega)$, $\varphi\mapsto\varphi/f$, $\omega\mapsto h(\omega)$ with
$f=h'|a|^{1/2}$ and $h''/h'=-(b+a')/(3a)$ brings \eqref{eq:wdh} to the
canonical form
\begin{equation}\label{eq:wdhcan}
  (\nabla\omega)^2=\varepsilon A^2,\qquad \nabla\cdot(A^2\nabla\omega)=0,\qquad \Box A=c(\omega)A^3,
  \qquad \varepsilon=\pm1,
\end{equation}
with reduced equation $\varepsilon\psi''+c(\omega)\psi=\psi^3$.
\end{proposition}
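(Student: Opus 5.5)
The plan is to compute $\Box(A\varphi(\omega))$ by the chain rule and then match coefficients. One has
\[
  \Box u=A(\nabla\omega)^2\varphi''+\bigl(A\Box\omega+2\nabla A\cdot\nabla\omega\bigr)\varphi'+(\Box A)\varphi ,
\]
and the equation $\Box u=u^3$ is $A^3\varphi^3$ on the right. Dividing by $A^3$ shows that $u$ solves the equation if and only if
\[
  \frac{(\nabla\omega)^2}{A^2}\varphi''+\frac{\Box\omega+2\nabla\ln A\cdot\nabla\omega}{A^2}\varphi'+\frac{\Box A}{A^3}\varphi=\varphi^3 .
\]
This is the rank-three specialisation of \eqref{eq:R1}--\eqref{eq:R3} with $k=3$. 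Sufficiency of \eqref{eq:wdh} is then immediate.

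For necessity, I read ``reduces'' in the sense of \S\ref{sec:rank2normal}: the relation must be an ODE identity in $\omega$ holding for a family of solutions $\varphi$ rich enough that $\varphi,\varphi',\varphi''$ take independent values at a given $\omega$. Equivalently, the three coefficients must be functions of $\omega$ alone. I would state this interpretation explicitly, since with a single fixed $\varphi$ the converse is false.

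For the canonical form, I would apply the substitution $A\mapsto Af$, $\varphi\mapsto\varphi/f$, $\omega\mapsto h(\omega)$, which leaves $u$ unchanged, and track each coefficient in turn.

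\emph{First equation.} We have $(\nabla h)^2/(A^2f^2)=h'^2a/f^2$. With $f=h'|a|^{1/2}$ this equals $\varepsilon=\operatorname{sign}a$.

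\emph{Second equation.} The new quantity $(\Box h+2\nabla\ln(Af)\cdot\nabla h)/(A^2f^2)$ expands, using $\Box h=h''(\nabla\omega)^2+h'\Box\omega$, to
\[
  \frac{h''a+h'b+2h'a\,(\ln f)'}{f^2}.
\]
Since $(\ln f)'=h''/h'+a'/(2a)$, the numerator equals $3ah''+h'(b+a')$. It therefore vanishes exactly when $h''/h'=-(b+a')/(3a)$, and this ODE is solvable wherever $a\neq0$.

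\emph{Divergence form.} The vanishing of the first-order coefficient is $\Box\omega+2\nabla\ln A\cdot\nabla\omega=0$. Multiplying by $A^2$ gives $\nabla\cdot(A^2\nabla\omega)=0$.

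\emph{Third equation.} The quantity $\Box(Af)/(Af)^3$ remains a function of $\omega$. This follows because $u$ is unchanged and the reduced equation for $\psi=\varphi/f$ keeps the form $\varepsilon\psi''+\tilde c\psi=\psi^3$. Alternatively it can be checked directly from $\Box(Af)=f\Box A+2f'\nabla A\cdot\nabla\omega+A\Box f$, using the first two canonical relations. This yields the reduced equation $\varepsilon\psi''+c(\omega)\psi=\psi^3$.

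The main obstacle is bookkeeping rather than ideas: verifying that after the transformation the new $\Box A/A^3$ is still a function of $\omega$. The cleanest route is to appeal to the invariance of the reduced ODE under $\varphi\mapsto\varphi/f(\omega)$, $\omega\mapsto h(\omega)$, which is a point transformation of the ODE preserving the cubic term because $u$ is fixed.
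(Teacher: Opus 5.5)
Your proposal is correct and follows the paper's own route. It uses the chain-rule expression for $\Box(A\varphi(\omega))$ from \S\ref{sec:rank2normal}, specialised to $r=3$ and $k=3$, with ``reduces'' read as all coefficients being functions of $\omega$; your coefficient check for the substitution, giving $3ah''+h'(b+a')=0$, is exactly what the canonical form needs. One small correction: the direct check of the third relation (equivalently, taking $\psi\equiv1$) gives $\Box(Af)=A^3(af''+bf'+cf)$ from the original relations \eqref{eq:wdh}, not the canonical ones, so the new coefficient is $(af''+bf'+cf)/f^3$.
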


\begin{theorem}[the conformal $4$-Laplacian]\label{thm:plap}
In canonical form the weight is fixed, $A=|\nabla\omega|:=(\varepsilon(\nabla\omega)^2)^{1/2}$, and
\eqref{eq:wdhcan} is equivalent to
\begin{equation}\label{eq:plap}
  \nabla\cdot\bigl((\nabla\omega)^2\nabla\omega\bigr)=0,\qquad \Box|\nabla\omega|=c(\omega)\,|\nabla\omega|^3 .
\end{equation}
Thus the ans\"atze with $(\nabla\omega)^2\not\equiv0$ reducing $\Box u=u^3$ to a second-order ordinary differential equation
are the Lorentzian $4$-harmonic functions $\omega$ (solutions of the $4$-Laplace
equation, the Euler--Lagrange equation of $\int(\nabla\omega)^4\,d^4x$) whose gradient
length solves the conformal cubic up to a function of $\omega$, and the
solutions are $u=|\nabla\omega|\,\psi(\omega)$. A geometric reformulation in the conformally flat metric $A^2\eta$ is given in the
supplementary material. \Sm
\end{theorem}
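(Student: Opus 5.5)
The plan is to work directly from the canonical form \eqref{eq:wdhcan} of Proposition~\ref{prop:wdh}. Its first equation, $(\nabla\omega)^2=\varepsilon A^2$, fixes the weight up to sign: $A^2=\varepsilon(\nabla\omega)^2$, so $A=|\nabla\omega|$ after choosing $A>0$. The sign of $A$ can be absorbed into $\psi\mapsto-\psi$, which the cubic allows. This uses $(\nabla\omega)^2\not\equiv0$, so $a\not\equiv0$ and the canonical form is available. We work on the open set where $(\nabla\omega)^2\neq0$, so $\varepsilon$ is locally constant.

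Substituting $A^2=\varepsilon(\nabla\omega)^2$ into the second equation $\nabla\cdot(A^2\nabla\omega)=0$ gives $\varepsilon\,\nabla\cdot\bigl((\nabla\omega)^2\nabla\omega\bigr)=0$. Since $\varepsilon=\pm1$, this is the first equation of \eqref{eq:plap}. The third equation becomes $\Box|\nabla\omega|=c(\omega)|\nabla\omega|^3$ verbatim.

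For the converse, start from \eqref{eq:plap} and define $A:=|\nabla\omega|$, with $\varepsilon$ the sign of $(\nabla\omega)^2$. Then all three equations of \eqref{eq:wdhcan} hold by reading the same substitutions backwards.

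It remains to check that these are exactly the ans\"atze of Proposition~\ref{prop:wdh}. We compare with \eqref{eq:wdh}. The ratio $(\nabla\omega)^2/A^2$ equals $\varepsilon$, so $a=\varepsilon$. Using the identity
\[
\Box\omega+2\nabla\ln A\cdot\nabla\omega=A^{-2}\,\nabla\cdot(A^2\nabla\omega),
\]
the second ratio vanishes, so $b=0$. The third ratio is $c(\omega)$. Hence $u=A\psi(\omega)=|\nabla\omega|\,\psi(\omega)$ reduces $\Box u=u^3$ to $\varepsilon\psi''+c\psi=\psi^3$.

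Conversely, any ansatz $u=A\varphi(\omega)$ satisfying \eqref{eq:wdh} with $a\not\equiv0$ is brought to canonical form by the substitution of Proposition~\ref{prop:wdh}, and is then of the form above. The Euler--Lagrange statement is the standard variation
\[
\delta\!\int(\nabla\omega)^4\,d^4x=-4\!\int\nabla\cdot\bigl((\nabla\omega)^2\nabla\omega\bigr)\,\delta\omega\,d^4x .
\]

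The only delicate point is the canonical-form reduction itself, which is invoked from Proposition~\ref{prop:wdh}. In particular the choice $h''/h'=-(b+a')/(3a)$ must indeed make $b$ vanish. I would recheck this by direct computation. Under $A\mapsto Af$ and $\omega\mapsto h(\omega)$, the quantity $A^{-2}\nabla\cdot(A^2\nabla\omega)$ transforms as
\[
\frac{1}{f^2h'}\bigl(b+a\,(2f'/f+h''/h')\bigr)\cdot(\dots).
\]
With $f=h'|a|^{1/2}$ we have $2f'/f=2h''/h'+a'/a$, so the bracket becomes $b+a'+3a\,h''/h'$. This vanishes exactly for the stated $h$.

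The rest is the bookkeeping of the sign $\varepsilon$ and of the positivity of $A$. It is local, away from the zeros of $(\nabla\omega)^2$, in line with the conventions of \S\ref{sec:conv}.
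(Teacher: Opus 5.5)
Your proposal is correct and follows the same route as the paper: you fix $A=|\nabla\omega|$ from $(\nabla\omega)^2=\varepsilon A^2$, substitute it into the other two equations of \eqref{eq:wdhcan}, and use $\Box\omega+2\nabla\ln A\cdot\nabla\omega=A^{-2}\nabla\cdot(A^2\nabla\omega)$ to get $b=0$, which is all the statement rests on. Your recheck of the canonical-form choice is also right, except that the prefactor is garbled: in fact $\tilde b=(h'/f^2)\bigl[b+a(2f'/f+h''/h')\bigr]$, and this does not affect the conclusion.
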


\subsection{Smooth global solutions}\label{sec:cr}
\paragraph{Conformal compactification.} The change of variables
$\tau\pm\chi=2\arctan(t\pm r)$, $u=\Omega v$ with
$\Omega=2\bigl[(1+(t+r)^2)(1+(t-r)^2)\bigr]^{-1/2}$, maps all of Minkowski space onto the
bounded region $|\tau|+\chi<\pi$ of the cylinder $\mathbb R\times S^3$, with $\tau$ a new time
and $\chi$ the polar angle on the unit three-sphere. Because the cubic is
conformally invariant, $\Box u=u^3$ becomes
\[
  v_{\tau\tau}-\Delta_{S^3}v+v=v^3 .
\]
Infinity of $M^{1,3}$ becomes a set of ordinary points of the cylinder, so a $v$
that is smooth on the cylinder gives a $u$ that is smooth on all of
space-time and decays like $\Omega\sim r^{-2}$. Write the points of $S^3$ as
$(z_1,z_2)\in\mathbb C^2$, $|z_1|^2+|z_2|^2=1$, with $z_1=n_0+in_3$, $z_2=n_1+in_2$ and
$n=\tfrac\Omega2(1+t^2-r^2,2\vec x)$. The circles $z\mapsto e^{i\phi}z$ (Hopf circles) fill $S^3$;
$\xi_H$ is their unit tangent field, and $\Delta_H=\Delta_{S^3}-\xi_H^2$ is the Laplacian with
the second derivative along these circles removed. The twisting null constraint of Remark~S13.3 is the
invariance under the flow $\partial_\tau+\xi_H$.

\paragraph{Finite energy; the simplest solutions.} If $v$ is smooth on the closed region
$|\tau|+\chi\le\pi$, then $u=\Omega v=O(r^{-2})$ and $\nabla u=O(r^{-3})$ at fixed $t$, so the energy
$E=\int\bigl(\tfrac12u_t^2+\tfrac12|\nabla u|^2-\tfrac14u^4\bigr)\,d^3x$ is finite and conserved; this is
the familiar use of conformal compactification for conformally invariant
equations \cite{Chr86}. The simplest examples are homogeneous on $S^3$, $v=\psi(\tau)$
with $\psi''+\psi=\psi^3$ (supplementary material, Example~S13.5), a classical reduction
invariant under $\mathfrak{so}(4)$. At $t=0$ one has $\tau=0$, $\partial_t\tau=\Omega$ and
$\int\Omega^4\,d^3x=\int|\nabla\Omega|^2\,d^3x=2\pi^2$, whence
\begin{equation}\label{eq:Epsi}
  E=2\pi^2\Bigl(\tfrac12\psi'^2+\tfrac12\psi^2-\tfrac14\psi^4\Bigr),
\end{equation}
$2\pi^2$ times the energy of the oscillator. Explicitly,
\begin{equation}\label{eq:omegasn}
  u=\pm\Omega=\frac{\pm2}{\sqrt{(1+(t+r)^2)(1+(t-r)^2)}},\qquad
  u=\Omega\,A\,\mathrm{sn}\Bigl(\sqrt{1-\tfrac{A^2}2}\;\tau,\ k\Bigr),\quad k^2=\frac{A^2}{2-A^2},
\end{equation}
with $0<|A|<1$ and energies $\pi^2/2$ and $\pi^2A^2(1-A^2/2)$; as $A\to1$ the second tends to
$u=\Omega\tanh(\tau/\sqrt2)$, again with $E=\pi^2/2$. The solution $u=\Omega$ is positive; the
others change sign. \Sm

\begin{theorem}[Robinson reduction and the CR Yamabe equation]
\label{thm:cryamabe}
For every smooth function $V$ on $S^3$,
\[
  \Box\Bigl[\Omega\,V\bigl(e^{-i\tau}z\bigr)\Bigr]
  =\Omega^3\,\bigl[(-\Delta_H+1)V\bigr]\bigl(e^{-i\tau}z\bigr),
  \qquad \Delta_H:=\Delta_{S^3}-\xi_H^2 ,
\]
where $\Delta_H$ is the sub-Laplacian of the standard CR structure. Hence the
solutions of $\Box u=cu^3$ invariant under this null conformal Killing
vector are exactly $u=\Omega\,V(e^{-i\tau}z)$ with
\begin{equation}\label{eq:cry}
  -\Delta_HV+V=c\,V^3\quad\text{on }S^3,
\end{equation}
the CR Yamabe equation of the standard CR sphere (homogeneous dimension
$Q=4$, critical exponent $(Q+2)/(Q-2)=3$). \Sm
\end{theorem}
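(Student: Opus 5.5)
The plan is to reduce the identity to the conformal invariance of the cubic and then to a computation on the cylinder $\mathbb R\times S^3$. The conformal map of \S\ref{sec:cr} sends $u=\Omega v$ to $v$, and it intertwines the operators in the form
\[
  \Box(\Omega v)=\Omega^3\bigl(v_{\tau\tau}-\Delta_{S^3}v+v\bigr).
\]
This is the conformal covariance of $\Box$ acting on weight-one densities in four dimensions: the conformal Laplacian $\Box+\tfrac16 R$ with $R=0$ on the flat side and scalar curvature $6$ on $\mathbb R\times S^3$. I would take this identity as given, since it is the statement used to derive $v_{\tau\tau}-\Delta_{S^3}v+v=v^3$. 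With it the theorem reduces to a statement on the cylinder. For $v(\tau,z)=V(e^{-i\tau}z)$ one must show
\[
  v_{\tau\tau}-\Delta_{S^3}v=\bigl(-\Delta_H V\bigr)(e^{-i\tau}z).
\]

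First, $z\mapsto e^{-i\tau}z$ is the Hopf flow, an isometry of $S^3$. Hence $\Delta_{S^3}$ commutes with it, and $\Delta_{S^3}v=(\Delta_{S^3}V)(e^{-i\tau}z)$. Second, with $\xi_H$ the unit generator of $z\mapsto e^{i\phi}z$, the chain rule gives
\[
  \partial_\tau v=-(\xi_H V)(e^{-i\tau}z),
\]
so $\partial_\tau v+\xi_H v=0$. This is the invariance under $\partial_\tau+\xi_H$ recorded in \S\ref{sec:cr}. Here I use that $\xi_H$ commutes with its own flow, so it can be applied either before or after the rotation. Differentiating again,
\[
  v_{\tau\tau}=(\xi_H^2V)(e^{-i\tau}z).
\]
Subtracting the two formulas gives $v_{\tau\tau}-\Delta_{S^3}v=-\bigl((\Delta_{S^3}-\xi_H^2)V\bigr)(e^{-i\tau}z)$. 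Adding the mass term $+v$ and multiplying by $\Omega^3$ yields the displayed identity for $\Box\bigl[\Omega V(e^{-i\tau}z)\bigr]$.

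For the characterization, suppose $u$ is invariant under the null conformal Killing vector. Under the compactification this vector becomes $\partial_\tau+\xi_H$, acting on weight-one scalars as in Proposition~\ref{prop:conformal}. Writing $u=\Omega v$, invariance means $(\partial_\tau+\xi_H)v=0$. Solving along the characteristics, which are the orbits $\tau\mapsto(\tau,e^{i\tau}z_0)$, gives $v=V(e^{-i\tau}z)$ with $V=v|_{\tau=0}$. The identity just proved then shows that $\Box u=cu^3$ holds if and only if $-\Delta_HV+V=cV^3$.

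Finally, I would identify $\Delta_{S^3}-\xi_H^2$ with the CR sub-Laplacian, which is standard. Since $\xi_H$ is the unit Reeb field, $\Delta_{S^3}$ splits as the sum of squares of the orthonormal horizontal fields $X,Y$ plus $\xi_H^2$. The first-order terms vanish because these fields are divergence-free Killing fields. The critical exponent $3$ then follows from $Q=4$.

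The main obstacle is not the calculus but the bookkeeping of conventions. I expect the real difficulty to be confirming that the null conformal Killing vector of Remark~S13.3 corresponds exactly to $\partial_\tau+\xi_H$ with this normalization and sign, and that $z_1=n_0+in_3$, $z_2=n_1+in_2$ are the coordinates in which the Hopf action is $e^{i\phi}(z_1,z_2)$. I would check both by a symbolic substitution of a sample $V$.
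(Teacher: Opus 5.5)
Your proposal is correct and follows the paper's own route: the conformal covariance $\Box_\eta(\Omega v)=\Omega^3(\Box_{g_E}v+v)$ reduces the claim to the cylinder. There $\Delta_{S^3}$ commutes with the Hopf isometry and $v_{\tau\tau}=(\xi_H^2V)(e^{-i\tau}z)$, so the second derivatives along the Hopf circles cancel, exactly as the paper indicates. Your characteristic argument for the converse and your frame identification of $\Delta_{S^3}-\xi_H^2$ as the sub-Laplacian spell out details the paper leaves to its supplementary material.
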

\noindent The proof uses the conformal covariance $\Box_\eta(\Omega v)=\Omega^3(\Box_{g_E}v+v)$;
the second derivatives along the Hopf circles cancel (supplementary material).

For $c>0$ the CR Yamabe equation has infinitely many sign-changing
solutions \cite{MM12} \Qm; each gives a global smooth solution of $\Box u=u^3$ on
$M^{1,3}$, decaying like $\Omega$. An explicit one follows.

\begin{example}[an explicit sign-changing solution]\label{ex:torus}
For $V$ invariant under the torus $U(1)\times U(1)$ acting on $(z_1,z_2)$,
$V=W(\mu)$ with $\mu=|z_1|^2-|z_2|^2$, equation \eqref{eq:cry} with $c=1$ becomes
\[
  4\bigl((1-\mu^2)W'\bigr)'=W-W^3,\qquad -1\le\mu\le1 .
\]
In Minkowski coordinates $\mu=1-8(x^2+y^2)/P$ with
$P=(1+(t+r)^2)(1+(t-r)^2)$, and
\[
  u=\Omega\,W\Bigl(1-\frac{8(x^2+y^2)}{P}\Bigr)
\]
solves $\Box u=u^3$ for every solution $W$ regular at $\mu=\pm1$ (the reduction
was checked symbolically). The equation has no rational or algebraic
solutions of the simplest types (apart from the constants $0$ and $\pm1$, the only one,
$W=\operatorname{sign}\mu$, is singular), but it has an odd regular solution with
\[
  W(1)=-W(-1)=4.2994680439,\qquad W(0)=0,\qquad W'(0)=2.6945497,
\]
found by shooting \Nm; the identity
$-4\int(1-\mu^2)W'^2=\int(W^2-W^4)=-72.560$ is satisfied to seven digits.
A practical representation of $W$ is the
expansion in Legendre polynomials, the eigenfunctions of the linear part:
\[
  W(\mu)=3.52794\,P_1(\mu)+0.65981\,P_3(\mu)+0.09668\,P_5(\mu)+0.01309\,P_7(\mu)+\dots,
\]
whose truncation after $P_7$ is accurate to $2\cdot10^{-3}$, and after $P_{15}$ to $5\cdot10^{-7}$ \Nm.
The resulting $u$ is a global smooth solution of $\Box u=u^3$ on
$M^{1,3}$ with nodal set $8(x^2+y^2)=P$ (supplementary material, Figure~S1). It is
Hopf invariant, hence also invariant under the conformal time
$\partial_\tau$; solutions that genuinely rotate along the fibres require
$\xi_HV\neq0$ and lead to a partial differential equation on $S^3/U(1)$.
\end{example}

These solutions are a classical reduction, by three commuting conformal
generators (supplementary material); what is new is the global analysis below.

\begin{lemma}[the profiles are global]\label{lem:profile}
Put $\mu=\tanh s$, $s\in\mathbb R$, so that the profile equation $4((1-\mu^2)W')'=W-W^3$ reads
$W_{ss}=q(s)(W-W^3)$ with $q=1/(4\cosh^2s)$.
\begin{enumerate}
\item Every solution exists for all $s\in\mathbb R$, i.e.\ on the whole interval $-1<\mu<1$, and
  $|W_s|\le\sqrt{2F(0)}$, where $F=\tfrac12W_s^2+\tfrac14q(W^2-1)^2$.
\item As $s\to\pm\infty$, $W_s\to\beta_\pm$ and $W=\alpha_\pm+\beta_\pm s+o(1)$; equivalently
  $W=\alpha_\pm\pm\tfrac12\beta_\pm\ln\frac2{1\mp\mu}+o(1)$ as $\mu\to\pm1$.
\item For every $A\in\mathbb R$ there is exactly one solution bounded at $\mu=1$ with $W(1)=A$;
  it is real analytic in $\mu$ near $\mu=1$ and depends analytically on $A$. A solution
  is bounded at $\mu=1$ if and only if $\beta_+=0$; otherwise
  $W\sim\tfrac12\beta_+\ln\frac2{1-\mu}$. The same holds at $\mu=-1$ with $\beta_-$. Hence $u=\Omega W(\mu)$ is
  smooth on $M^{1,3}$ if $\beta_+=\beta_-=0$; if $\beta_+\neq0$ it has a logarithmic singularity on
  the $z$-axis ($\mu=1$), and if $\beta_-\neq0$ on the expanding ring $z=0$, $x^2+y^2=1+t^2$
  ($\mu=-1$).
\end{enumerate}
\end{lemma}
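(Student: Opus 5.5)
The plan is to treat $W_{ss}=q(s)(W-W^3)$, with $q=1/(4\cosh^2 s)$, as a non-autonomous Newton equation whose potential term is damped by the integrable weight $q$. The change of variables itself is routine: with $\mu=\tanh s$ one has $d\mu/ds=1-\mu^2$, so $(1-\mu^2)W_\mu=W_s$ and $4((1-\mu^2)W_\mu)_\mu=4W_{ss}/(1-\mu^2)=4\cosh^2 s\,W_{ss}$. This gives the stated form.

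For part (1) I would differentiate the energy $F=\tfrac12W_s^2+\tfrac14q(W^2-1)^2$ along solutions. This gives
\[
F_s=W_s\bigl(W_{ss}-q(W-W^3)\bigr)+\tfrac14q_s(W^2-1)^2=\tfrac14q_s(W^2-1)^2 .
\]
For $s\ge0$ we have $q_s\le0$, so $F$ is nonincreasing there and $\tfrac12W_s^2\le F\le F(0)$. Hence $W_s$ is bounded and $W$ grows at most linearly, which excludes blow-up as $s\to+\infty$. By the symmetry $s\mapsto-s$ (under which $q$ is even) the same holds for $s\le0$ with $F$ nondecreasing, so $\tfrac12W_s^2\le F(0)$ on all of $\mathbb R$. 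Since the equation is locally Lipschitz, the solution then exists globally.

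For part (2), the bound $|W|\le C(1+|s|)$ together with $q\le e^{-2|s|}$ makes $q(W-W^3)=O(s^3e^{-2|s|})$ integrable. Therefore $W_s=W_s(0)+\int_0^s q(W-W^3)$ has a limit $\beta_\pm$, and $W-\beta_\pm s$ converges as well, since the tail $\int_s^\infty(W_s-\beta_+)$ is $O(s^3e^{-2s})$. To translate back to $\mu$, note that $s=\operatorname{artanh}\mu=\tfrac12\ln\frac{1+\mu}{1-\mu}=\tfrac12\ln\frac2{1-\mu}+o(1)$ as $\mu\to1$, and similarly at $\mu=-1$.

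For part (3) I would work in the $\mu$ variable near $\mu=1$. The equation has a regular singular point there, with indicial roots $0,0$ (the Legendre-type operator $((1-\mu^2)W')'$). The bounded solutions are therefore the analytic ones. I would construct them by rewriting the equation as the integral equation
\[
W(\mu)=A-\int_\mu^1\frac{1}{4(1-\nu^2)}\int_\nu^1(W-W^3)\,d\nu'\,d\nu .
\]
The double integral preserves analyticity and is a contraction on a small disc around $1$ in a weighted sup norm; equivalently, one can run a majorant-series argument, since the recursion for the coefficients in powers of $(1-\mu)$ has divisor $n^2$. This gives existence, uniqueness, analyticity in $\mu$, and analytic dependence on $A$ (by the analytic implicit function theorem, or by analyticity of the fixed point in the parameter).

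The step I expect to be the main obstacle is showing that every bounded solution coincides with one of these, which gives the equivalence bounded $\Leftrightarrow\beta_+=0$. First, bounded implies $\beta_+=0$ by part (2). Conversely, if $\beta_+=0$, then by part (2) $W\to\alpha_+$, so $W$ is bounded near $\mu=1$. It then satisfies the same integral equation with $A=\alpha_+$: the inner integral is $4(1-\nu^2)W_\mu=4W_s\to0$. By uniqueness it is the analytic solution. If $\beta_+\neq0$, part (2) gives directly $W\sim\tfrac12\beta_+\ln\frac2{1-\mu}$. The argument at $\mu=-1$ is the same.

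Smoothness of $u=\Omega W(\mu)$ on $M^{1,3}$ then follows because $\mu=1-8(x^2+y^2)/P$ is smooth and takes values in $[-1,1]$, and $W$ is analytic on a neighbourhood of $[-1,1]$ when $\beta_\pm=0$. The singular sets are the preimages of $\mu=\pm1$. The set $\mu=1$ is $x^2+y^2=0$, the $z$-axis. The set $\mu=-1$ is $16(x^2+y^2)=P$; one checks that it forces $z=0$, and then $P=(1+t^2-r^2)^2+4r^2$ with $16r^2=P$ gives $r^2=1+t^2$. This final identification is a short algebraic check.
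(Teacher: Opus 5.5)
Your plan follows the paper's proof step for step: the energy $F$ in part (1), integrability of $W_{ss}$ in part (2), and in part (3) a contraction for the integral equation at the double indicial root, together with the observation that $\beta_+=0$ puts any solution into that integral equation. However, part (1) as you argue it does not cover every solution. Monotonicity of $F$ only controls a solution as it moves \emph{away} from $s=0$. If a maximal solution is given on $(s_-,\infty)$ with $s_-\ge0$, then $F$ increases as $s\downarrow s_-$, and nothing you wrote prevents blow-up there. This is exactly the situation of the solutions you build in part (3) at $\mu=1$, i.e.\ at $s=+\infty$. To speak of their $\beta_-$, or of $u=\Omega W$ on all of $M^{1,3}$, you must know that they reach $s=0$ and beyond. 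The missing step is one line. Since $q_s=-2\tanh s\,q$, one has $F_s=-2\tanh s\,\bigl(F-\tfrac12W_s^2\bigr)$, hence $|F_s|\le2F$ and $F(s)\le F(s_0)e^{2|s-s_0|}$, so no solution blows up at finite $s$. Monotonicity then gives $F\le F(0)$. This Gronwall bound is how the paper argues.

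Two slips should also be corrected. First, the sign of your integral equation. Integrating $4((1-\mu^2)W')'=W-W^3$ from $\nu$ to $1$ with $(1-\mu^2)W'\to0$ gives $W'(\nu)=-\frac1{4(1-\nu^2)}\int_\nu^1(W-W^3)\,d\nu'$. So the integral equation is $W(\mu)=A+\int_\mu^1\frac{1}{4(1-\nu^2)}\int_\nu^1(W-W^3)\,d\nu'\,d\nu$. With your minus sign the fixed point solves the equation with $W^3-W$ on the right; the contraction estimate itself is unaffected. Second, $\mu=-1$ means $8(x^2+y^2)/P=2$, i.e.\ $P=4(x^2+y^2)$, not $16(x^2+y^2)$. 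With $16$ the condition becomes $(1+t^2-r^2)^2+4z^2=12(x^2+y^2)$, which forces neither $z=0$ nor $r^2=1+t^2$. With $4$, the identity $P=(1+t^2-r^2)^2+4r^2$ turns it into $(1+t^2-r^2)^2+4z^2=0$, which is the paper's computation. Your smoothness argument directly in Minkowski coordinates is correct: $\mu$ is smooth because $P\ge1$ is a polynomial in $t$ and $r^2$, and $W$ extends analytically past $\mu=\pm1$. It is slightly more direct than the paper's passage through $\mu=|z_1|^2-|z_2|^2$ on $S^3$.
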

\begin{proof}
(1) Along solutions $F_s=\tfrac14q_s(W^2-1)^2=-2\tanh s\,(F-\tfrac12W_s^2)$, so $|F_s|\le2F$ and $F$ is
finite on every bounded interval; hence $W_s$ and $W$ cannot blow up at finite $s$, and
every solution is global. Since $q_s$ has the sign of $-s$, $F\le F(0)$ on $\mathbb R$.
(2) By (1), $|W|=O(|s|)$, and $q\le e^{-2|s|}$, so $W_{ss}=O(|s|^3e^{-2|s|})$ is integrable; hence $W_s\to\beta_\pm$
and $W-\beta_\pm s\to\alpha_\pm$. The second form follows from $s=\pm\tfrac12\ln\frac2{1\mp\mu}+o(1)$.
(3) Near $\mu=1$ put $x=e^{-2s}=(1-\mu)/(1+\mu)$; the equation becomes
$4(xW_x)_x=(1+x)^{-2}(W-W^3)$. If $\beta_+=0$, then $W\to\alpha_+$ and $xW_x=-\tfrac12W_s\to0$, so
\[
  W(x)=\alpha_++\tfrac14\int_0^x\frac{d\xi}{\xi}\int_0^\xi\frac{W-W^3}{(1+\eta)^2}\,d\eta .
\]
Since $\bigl|\int_0^x\xi^{-1}\int_0^\xi h\bigr|\le|x|\sup|h|$, the right side is a contraction on $\{|W-\alpha_+|\le1\}$
for $|x|\le\delta(|\alpha_+|)$, real or complex. Hence for each $A$ there is exactly one solution
bounded at $\mu=1$ with $W(1)=A$; it is analytic in $x$, hence in $\mu$, and depends
analytically on $A$, uniformly on compact sets. If $\beta_+\neq0$, $W$ is unbounded by (2).
The point $\mu=-1$ follows by $\mu\mapsto-\mu$. If $W$ is analytic on $[-1,1]$, then $V=W(\mu)$ is
smooth on $S^3$, because $\mu=|z_1|^2-|z_2|^2$ is a polynomial there, and $u=\Omega V$ is smooth
on $M^{1,3}$. Finally $\mu=1$ is the $z$-axis, and, since $P=(1+t^2-r^2)^2+4r^2$, $\mu=-1$
means $(1+t^2-r^2)^2+4z^2=0$, i.e.\ $z=0$, $x^2+y^2=1+t^2$.
\end{proof}

\begin{proposition}[a ladder of smooth solutions]\label{prop:ladder}
Let $c=1$ and consider $4((1-\mu^2)W')'=W-W^3$ on $[-1,1]$ with $W$ bounded at $\mu=\pm1$.
\begin{enumerate}
\item It has infinitely many solutions $\pm W_k$ with $E(W_k)\to\infty$, where
  $E(W)=\int_{-1}^1\bigl(2(1-\mu^2)W'^2+\tfrac12W^2-\tfrac14W^4\bigr)d\mu$; each gives a solution
  $u=\Omega W(\mu)$ of $\Box u=u^3$ smooth on $M^{1,3}$, whose energy is $\pi^2E(W)$.
\item Every positive solution is $W\equiv1$ (every negative one $W\equiv-1$); all others change sign.
\item For every $n\ge0$ it has a solution with exactly $n$ zeros in $(-1,1)$. The
  corresponding solution of $\Box u=u^3$ is smooth on $M^{1,3}$, has finite energy, and
  its zero set at each time consists of $n$ disjoint nested embedded tori,
  surfaces of revolution about the $z$-axis (round at $t=0$).
\end{enumerate}
\end{proposition}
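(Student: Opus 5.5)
The plan is to work throughout in the variable $s$ of Lemma~\ref{lem:profile}, $W_{ss}=q(s)(W-W^3)$ with $q=1/(4\cosh^2 s)$, and to treat bounded solutions as those with $\beta_+=\beta_-=0$. By Lemma~\ref{lem:profile}(3) there is, for each $A$, a unique solution $W(\cdot;A)$ bounded at $\mu=1$, depending analytically on $A$. The boundary value problem then becomes the shooting condition $\beta_-(A)=0$. Smoothness on $M^{1,3}$ of $u=\Omega W(\mu)$ follows from Lemma~\ref{lem:profile}(3) once both $\beta_\pm$ vanish. The energy identity $E(u)=\pi^2E(W)$ comes from evaluating $E$ at $t=0$, as in \eqref{eq:Epsi}. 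Concretely, one writes the energy density through $\Omega$ and the torus coordinate $\mu$, and uses that the Hopf/torus fibres of $S^3$ over $\mu\in[-1,1]$ have constant volume $2\pi^2/2$ per unit $d\mu$. The factor $2(1-\mu^2)W'^2$ is the pull-back of $|\nabla_{S^3}\mu|^2=4(1-\mu^2)$.

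For (2), suppose $W>0$ is bounded. I would multiply the equation by $W$ and integrate, and compare with the equation integrated against $1$, which gives $\int(W-W^3)\,d\mu=0$ because the flux $(1-\mu^2)W'$ vanishes at $\pm1$. I would combine this with a maximum principle argument. At an interior maximum of $W$, $((1-\mu^2)W')'\le0$ forces $W\le1$; at an interior minimum, $W\ge1$. At the endpoints the same conclusion follows from the series at $\mu=\pm1$, where $W'(\pm1)=\mp(A-A^3)/8$ (taking $x\to0$ in the integral representation). Hence $\max W\le1\le\min W$, so $W\equiv1$. Negative solutions follow by $W\mapsto-W$.

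For (1) and (3) I would use a Sturm-type shooting argument in the Prüfer angle. Put $W_s=r\cos\theta$ and $W\sqrt{q}\approx r\sin\theta$, or more simply count zeros of $W(\cdot;A)$ in $(-1,1)$ as $A$ varies. For $A=1$ the solution is $W\equiv1$ with no zeros. As $|A|\to\infty$, I would rescale by $W=A\widetilde W$, $\mu$-scale $|A|^{-1}$ near $\mu=1$. The rescaled equation tends to an autonomous cubic oscillator of the form $\widetilde W''\propto-\widetilde W^3$ near the endpoint, which forces the number of zeros $N(A)$ to tend to infinity. Zeros of $W(\cdot;A)$ cannot disappear in the interior, because a double zero would give $W\equiv0$. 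They can enter only through the endpoint $\mu=-1$, where the behaviour is governed by $\beta_-(A)$. Hence each time $N(A)$ jumps, there is a parameter $A_n$ with $\beta_-(A_n)=0$, giving a bounded solution with exactly $n$ zeros for each $n\ge N(1)$; this is the same mechanism as for the explicit odd solution of Example~\ref{ex:torus}. The case $n=0$ is $W\equiv1$.

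The unbounded energies in (1) follow because solutions with many zeros must oscillate rapidly, which forces $\int(1-\mu^2)W'^2\to\infty$. Equivalently, $E(W)=\tfrac14\int W^4$ after using the identity $-4\int(1-\mu^2)W'^2=\int(W^2-W^4)$. The nodal geometry in (3) follows from $\mu=1-8(x^2+y^2)/P$. Each level $\mu=\mu_j\in(-1,1)$ is the set $8(x^2+y^2)=(1-\mu_j)P$, a torus of revolution about the $z$-axis. These tori are nested and disjoint for distinct $\mu_j$, and at $t=0$ the equation reads $8\rho^2=(1-\mu_j)(1+r^2)^2$, a round torus in the Clifford picture. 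The main obstacle I expect is making the zero-counting rigorous near the singular endpoints, where the Prüfer angle must be controlled as $s\to\pm\infty$. I would first try to prove monotone dependence of the angle on $A$, using the exponential decay of $q$ so that the angle has a limit.
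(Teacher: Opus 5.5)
Part (2) does not work as written. At an interior maximum $((1-\mu^2)W')'=(1-\mu^2)W''\le0$, so the equation gives $W-W^3\le0$, i.e.\ $W\ge1$ for $W>0$. At an interior minimum you get $W\le1$. Your endpoint formula $W'(\pm1)=\mp(A-A^3)/8$ points the same way: a maximum at $\mu=1$ needs $W'(1)\ge0$, i.e.\ $A\ge1$. So you only obtain $\min W\le1\le\max W$, which any solution oscillating about $1$ satisfies. Your two integral identities also hold for every solution and force nothing. No maximum-principle argument can close this, because it never uses the value of the linear coefficient. After $W=\sqrt\lambda V$ it applies verbatim to $-\Delta_{S^3}W+\lambda W=W^3$ for every $\lambda>0$, yet nonconstant positive solutions exist for large $\lambda$. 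Rigidity holds only below a threshold, $\lambda\le n/(q-1)=3/2$ here. This is the nontrivial input the paper takes from Bidaut-V\'eron and V\'eron, and you need that theorem or an equivalent Bochner/Obata-type integral identity.

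In (3) your mechanism is the paper's: shoot from $\mu=1$, zeros are simple, they enter or leave only at $\mu=-1$, and $N$ can change only where $\beta_-=0$. But the step that makes it work is missing.
\begin{itemize}
\item To get a regular solution with \emph{exactly} $n$ zeros for every $n$, you must show two things near a parameter $A_*$ with $\beta_-(A_*)=0$: the count changes by at most one, and $W(\cdot;A_*)$ carries the lower value. Otherwise $N$ could jump from $n$ to $n+2$, or the regular solution at a jump could have either count. The paper gets this from three ingredients: $\alpha_-(A_*)\neq0$ (the solution bounded at $\mu=-1$ with value $0$ is $W\equiv0$); the tail estimate $|W_s-\beta(A)|+|W-\alpha(A)-\beta(A)s|\le C(1+|s|)^3e^{2s}$, uniform in $A$; and Rolle's theorem, since two far zeros would make $|\beta(A)|$ and then $|\alpha(A)|$ small. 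This yields $N(A_*)\le N(A)\le N(A_*)+1$. Then $A_n=\inf\{A\ge1:N(A)\ge n+1\}$ satisfies $\beta(A_n)=0$ and $N(A_n)=n$.
\item Finiteness of $N(A)$ also needs proof. The monotonicity of a Pr\"ufer angle in $A$ that you hope for is neither established nor needed.
\item Your blow-up is incorrect. The operator degenerates at $\mu=1$, so the balanced scale is $1-\mu\sim A^{-2}$ (the paper's $\sigma=s-\ln A$), not $A^{-1}$. The limit is the radial Lane--Emden equation $w_{rr}+w_r/r+w^3=0$, $w(0)=1$, not an autonomous oscillator. Its infinitely many zeros therefore require an argument (the paper uses Atkinson's oscillation theorem), and the convergence must be controlled uniformly up to the singular end.
\end{itemize}

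Deriving (1) from (3), instead of from the symmetric mountain-pass theorem, is a legitimate alternative that also yields the zero counts. However, ``many zeros force rapid oscillation'' must become an actual bound. Each nodal piece $v=W\chi_I$ satisfies $\|v\|_{H^1(S^3)}^2=\|v\|_{L^4(S^3)}^4$, so Sobolev gives $\int_{S^3}v^4\ge c>0$. Hence $E(W)=\tfrac14\int W^4\,d\mu\ge c'(n+1)$.
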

\begin{proof}
The equation is the Euler--Lagrange equation of $E$ on the torus-invariant part of
$H^1(S^3)$, on which the measure of $S^3$ is uniform in $\mu$. The embedding
$H^1(S^3)\subset L^4(S^3)$ is compact, the cube being subcritical in three
dimensions, so the symmetric mountain-pass theorem \cite{AR73} gives (1);
critical points on the invariant subspace are critical points on
$H^1(S^3)$ by Palais' principle of symmetric criticality, and smooth by elliptic
regularity. Part (2) is the rigidity theorem of Bidaut-V\'eron and V\'eron
\cite{BVV91} for $-\Delta u+\lambda u=u^q$ on $S^n$ with $\lambda\le n/(q-1)$, here $\lambda=1\le\tfrac32$.

(3) We shoot from $\mu=1$. For $A>0$ let $W(\cdot\,;A)$ be the solution of
Lemma~\ref{lem:profile}(3) with $W(1)=A$. Then $(W,W_s)(0;A)$ depends continuously on $A$, so $F(0;A)$
and with it the bounds $|W|\le C(1+|s|)$, $|W_{ss}|\le C(1+|s|)^3e^{-2|s|}$ are uniform for $A$ in
compact sets; by dominated convergence $\alpha(A)=\alpha_-$ and $\beta(A)=\beta_-$ are continuous, and
the remainders of Lemma~\ref{lem:profile}(2) are uniform. $W(\cdot\,;A)$ is regular at both ends
exactly when $\beta(A)=0$. Its zeros are simple, and their number $N(A)$ in $(-1,1)$ is
finite, since $W\to A$ at $\mu=1$ and $W-\alpha-\beta s\to0$ at $\mu=-1$ with $(\alpha,\beta)\neq(0,0)$: if $\beta=0$,
then $\alpha\neq0$, because the solution bounded at $\mu=-1$ with value $0$ is $W\equiv0$.

\emph{Local behaviour of $N$.} Fix $A_*>0$. For $A$ near $A_*$,
$|W_s-\beta(A)|+|W-\alpha(A)-\beta(A)s|\le C(1+|s|)^3e^{2s}$ for $s\le0$, with $C$ independent of $A$, and
$W(s;A)\ge A/2$ for $s\ge S_+$. If $\beta(A_*)\neq0$, then $|W|\ge\tfrac12|\beta(A_*)||s|-C'$ for $s\le-S$, and $N$
is constant near $A_*$. If $\beta(A_*)=0$, then $\alpha_*=\alpha(A_*)\neq0$; choose $S$ with $W(\cdot\,;A_*)\neq0$ on
$(-\infty,-S]$ and $2C(1+S)^4e^{-2S}<\tfrac12|\alpha_*|$. Two zeros $s_1<s_2\le-S$ of $W(\cdot\,;A)$ would give,
by Rolle, $|\beta(A)|\le C(1+|s_2|)^3e^{2s_2}$, and then $|\alpha(A)|\le2C(1+|s_2|)^4e^{2s_2}<\tfrac12|\alpha_*|$, which is
impossible for $A$ near $A_*$. Hence near $A_*$
\[
  N(A)=N(A_*)+\varepsilon(A),\qquad \varepsilon(A)=1\ \text{if}\ \operatorname{sign}\beta(A)=\operatorname{sign}\alpha_*,\ \varepsilon(A)=0\ \text{otherwise},
\]
and in particular $N(A_*)\le N(A)\le N(A_*)+1$.

\emph{$N(1)=0$ and $N(A)\to\infty$.} $W(\cdot\,;1)\equiv1$. For large $A$ put $W=Aw$ and $\sigma=s-\ln A$;
then $w''=qw-A^2qw^3$ with $A^2q=(e^{\sigma}+e^{-\sigma}/A^2)^{-2}\le e^{-2\sigma}$, $A^2q\to e^{-2\sigma}$, and $q\le A^{-2}e^{-2\sigma}$.
The solution $w(\cdot\,;A)$ is the fixed point of $w=1+\int_\sigma^\infty(\tau-\sigma)(qw-A^2qw^3)\,d\tau$, a
contraction on $\{|w-1|\le1\}$ over $[\sigma_0,\infty)$ with $\sigma_0$ independent of $A\ge1$; so $w\to w_\infty$ in
$C^1[\sigma_0,\infty)$, and by continuous dependence in $C^1[-L,\sigma_0]$ for every $L$, where $w_\infty$ solves
$w''+e^{-2\sigma}w^3=0$, $w\to1$ as $\sigma\to+\infty$. With $r=e^{-\sigma}$ this is the radial Lane--Emden
equation $w_{rr}+w_r/r+w^3=0$ in the plane, $w(0)=1$, $w_r(0)=0$; $w_\infty$ exists for all $r$, since
$\tfrac12w_r^2+\tfrac14w^4$ does not increase, and it has infinitely many zeros by Atkinson's oscillation
theorem \cite{Atk55} ($x=-\sigma=\ln r$, $w_{xx}+e^{2x}w^3=0$, $\int^\infty xe^{2x}dx=\infty$). These zeros are simple
and accumulate only at $\sigma=-\infty$, so for every $m$ the function $W(\cdot\,;A)$ has at least $m$
zeros once $A$ is large.

\emph{Every $n$ occurs.} Given $n\ge0$, let $A_n=\inf\{A\ge1:N(A)\ge n+1\}$. $N$ is not locally
constant at $A_n$, so $\beta(A_n)=0$. By the local formula, $N(A_n)\le n$ (from $A<A_n$, or $N(1)=0$
if $A_n=1$), and $n+1\le N(A)\le N(A_n)+1$ for some $A>A_n$ close to $A_n$. Hence $N(A_n)=n$, and
$W(\cdot\,;A_n)$ is regular at both ends with exactly $n$ zeros.

\emph{The zero set.} At fixed $t$ the map $x\mapsto n(t,x)$ is a diffeomorphism of $\mathbb R^3$ onto $S^3$
minus the point $(-1,0,0,0)$, at which $\mu=1$: its polar angle $\chi=\arctan(t+r)-\arctan(t-r)$
increases from $0$ to $\pi$ with $r$. Each level set $\mu=\mu_k\in(-1,1)$ is the torus
$|z_1|^2=(1+\mu_k)/2$ in $S^3$, invariant under the rotations about the $z$-axis, so its
preimage is a compact embedded torus of revolution; different $\mu_k$ give disjoint,
nested tori.
\end{proof}
Part (3) settles the question, left open by the variational argument,
whether every number of zeros occurs; uniqueness for a given $n$ is not
claimed. The following data are numerical and are not part of the
proposition.

\begin{remark}[numerical solutions with prescribed nodes]\label{rem:laddernum}
Shooting from $\mu=1$ finds, for $n=1,\dots,5$, a solution with exactly $n$ zeros,
with $W(1)=4.2994680$, $9.0148390$, $14.794937$, $21.467736$, $28.925904$; it is odd in
$\mu$ for odd $n$ and even for even $n$. The energies $\pi^2E(W)$ of the corresponding $u$
are $199.82$, $1419.15$, $5294.77$, $14261.5$, $31556.2$, against $\pi^2/2\approx4.93$ for $W\equiv1$. For $0<A<31$ the
function $\beta(A)$ of the proof vanishes exactly at $A=1$ and at these five values,
and $N(A)$ increases by one just to the right of each of them. \Nm
\end{remark}

At $t=0$ the nodal set $\mu=\cos2\vartheta_k$ of $W$ is a round torus of revolution about the
$z$-axis, $\bigl(\sqrt{x^2+y^2}-\csc\vartheta_k\bigr)^2+z^2=\cot^2\vartheta_k$; for $n=1$ it is
$(\sqrt{x^2+y^2}-\sqrt2)^2+z^2=1$. The solutions are even in $t$. At fixed $x$ they decay
like $t^{-2}$; near the light cone $|x|\approx t$, in particular on the $z$-axis and on the
expanding ring $z=0$, $\sqrt{x^2+y^2}=\sqrt{1+t^2}$, they decay only like $t^{-1}$. Their
energies and Morse indices are listed in the supplementary material.

\section{Comparison with Lie reductions}\label{sec:lie}
A subalgebra of symmetries with three-dimensional orbits reduces the
equation to an ordinary differential equation, one with two-dimensional
orbits to an equation in two variables. For each family we computed
numerically the algebra of conformal generators (for the cubic) or of
Poincar\'e generators and the dilation (otherwise) that leave the solution
invariant (supplementary material). \Nm\ For the two rows marked \Sm\ the
absence of symmetries is proved in exact rational arithmetic (supplementary
material, Section~S6).
\begin{center}
\small
\begin{tabular}{@{}p{0.43\textwidth}cp{0.33\textwidth}@{}}
\toprule
solution & dim & invariance algebra\\
\midrule
rotating wave \eqref{eq:rotwave}, constant $\alpha$ & 6 & $\mathfrak{so}(2,2)$: $P_0,P_z,J_{0z},D,K_0,K_z$\\
same on $\mathbb H^2$, straight spacelike line, constant phase & 6 & six-dimensional\\
cylinder $\rho^{-1/2}\Lambda$, source at rest & 4 & $P_0,P_z,J_{0z},D$\\
Liouville field \eqref{eq:liouville}, $f(Y)=Y^2$, source at rest & 3 & $P_0,J_{xy},D$\\
torus solutions, Example~\ref{ex:torus} & 3 & $\tfrac12(P_0+K_0),\ \tfrac12(P_3-K_3),\ J_{12}$\\
$u=\Omega\psi(\tau)$, Example~S13.5(ii) & 6 & $\mathfrak{so}(4)$\\
\midrule
rotating wave \eqref{eq:rotwave} & \textbf{0} & ---\\
cone field of an accelerated source, rotating frame & \textbf{0} & ---\\
cone field $4\sigma/(3\rho^2-\sigma^2)$, accelerated source, turning axis \Sm & \textbf{0} & ---\\
same on $\mathbb H^2$ with phase $\theta_0(\tau)$ & \textbf{0} & ---\\
cylinder \eqref{eq:cylfield} with phase $\varphi_0(t-\rho)$ & 1 & $P_z$ only\\
Liouville field \eqref{eq:liouville}, accelerated source \Sm & \textbf{0} & ---\\
\bottomrule
\end{tabular}
\end{center}
The pattern is uniform. Each family with arbitrary functions contains a
classical reduction as the special case of constant data (fixed axis,
constant phase, source at rest), and for generic data depending on the
characteristic phase the symmetry disappears altogether; symmetries are
restored only by special data, such as constant data or source trajectories
that are orbits of a one-parameter subgroup of the Poincar\'e group. Such solutions
can be described only in terms of differential constraints. The lifted
families of \S\ref{sec:lifting} are of the same kind. Conversely, the smooth solutions of
Example~\ref{ex:torus} are a classical reduction, by an abelian subalgebra of the
conformal algebra; what is new about them is the global analysis of
Proposition~\ref{prop:ladder}.

\section{Physical interpretation and stability}\label{sec:stability}
\paragraph{Null modulation is exact.} A solution of the plane branch
depends on $(t,x)$ only through $\varsigma=t+x$, so the configuration moves
along $-x$ at the speed of light while its parameters vary from one value
of $\varsigma$ to another; this is not a Lorentz image of a static solution.
For $u=u(y,z,\varsigma)$ and $F=-V'$ the energy density and momentum density are
\[
  T_{00}=u_\varsigma^2+\tfrac12\bigl(u_y^2+u_z^2\bigr)+V(u),\qquad |T_{0x}|=u_\varsigma^2,
\]
so $T_{00}-|T_{0x}|=\tfrac12(u_y^2+u_z^2)+V(u)$: the modulation enters only the
null part, transports purely lightlike energy-momentum and exerts no
back-reaction on the transverse profile. This is why it is exact at any
amplitude. The arbitrary functions are data prescribed on characteristic
surfaces (Goursat data) on the null hyperplanes $\varsigma=\mathrm{const}$,
not Cauchy data on $t=0$, so their freedom does not conflict with well-posedness
of the Cauchy problem.

\paragraph{Linear stability of unmodulated walls.} We consider only profiles with
constant parameters. Linearising \eqref{eq:main}
about a profile $u_0(y)$ homogeneous in the other variables and putting
$v=\chi(y)e^{-i\omega t+i\mathbf k\cdot\mathbf x}$ gives
$-\chi''-F'(u_0)\chi=E\chi$ with $E=\omega^2-\mathbf k^2$.

\begin{proposition}\label{prop:stability}
(i) For $F=u^3-u$ and $u_0=\sqrt2\,\mathrm{sech}\,y$ the operator is
$-\chi''+(1-6\,\mathrm{sech}^2y)\chi$, with levels $E=-3$, $E=0$ and continuum
$E\ge1$; the soliton is unstable for $\mathbf k^2<3$, with maximal growth rate
$\sqrt3$ at $\mathbf k=0$.
(ii) For $F=u-u^3$ and $u_0=\tanh(y/\sqrt2)$ the operator is
$-\chi''+(2-3\,\mathrm{sech}^2(y/\sqrt2))\chi$, with levels $E=0$, $E=3/2$ and
continuum $E\ge2$; the kink is linearly stable.
(iii) For the sine-Gordon equation, $F=-\sin u$, and the wall $u_0=4\arctan e^y$ of
Proposition~\ref{prop:smoothsg} the operator is $-\chi''+(1-2\,\mathrm{sech}^2y)\chi$, with the single
level $E=0$ and continuum $E\ge1$; the wall is linearly stable.
In all cases the translational mode $E=0$ has the null dispersion
$\omega^2=\mathbf k^2$.
\end{proposition}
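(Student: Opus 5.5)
The plan is to compute the linearised operator in each case and then read off its spectrum from the classical Pöschl--Teller family.

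\emph{The operator.} Setting $v=\chi(y)e^{-i\omega t+i\mathbf k\cdot\mathbf x}$ in $v_{tt}-\Delta v=F'(u_0)v$ gives $-\chi''-F'(u_0)\chi=(\omega^2-\mathbf k^2)\chi$, so we need $-F'(u_0)$ in each case.
\begin{itemize}
\item Case (i): $F=u^3-u$ and $-F'=1-3u_0^2=1-6\,\mathrm{sech}^2y$.
\item Case (ii): $F=u-u^3$ and $-F'=3u_0^2-1=2-3\,\mathrm{sech}^2(y/\sqrt2)$, using $\tanh^2=1-\mathrm{sech}^2$.
\item Case (iii): $F=-\sin u$ and $-F'=\cos u_0$. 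With $u_0=4\arctan e^y$ we have $\cos(u_0/2)=-\tanh y$, so $\cos u_0=2\tanh^2y-1=1-2\,\mathrm{sech}^2y$.
\end{itemize}

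\emph{The spectrum.} We use the reflectionless potential $-\ell(\ell+1)\,\mathrm{sech}^2x$. Its bound states are $E=-j^2$ for $j=1,\dots,\ell$, and its continuum is $E\ge0$. The three cases reduce to it as follows.
\begin{itemize}
\item Case (i): $\ell=2$. The levels of $-\chi''-6\,\mathrm{sech}^2y\,\chi$ are $-4$ and $-1$, so adding $1$ gives $E=-3$, $E=0$ and continuum $E\ge1$.
\item Case (ii): rescale $y=\sqrt2\,x$. The operator becomes $\tfrac12\bigl(-\chi_{xx}-6\,\mathrm{sech}^2x\,\chi\bigr)+2$, so the levels are $2+\tfrac12(-4)=0$ and $2+\tfrac12(-1)=\tfrac32$, with continuum $E\ge2$.
\item Case (iii): $\ell=1$, giving the single level $-1+1=0$ and continuum $E\ge1$.
\end{itemize}

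\emph{Eigenfunctions.} For rigour I will also exhibit the eigenfunctions explicitly. The zero modes are $u_0'$ in each case, by translation invariance of the profile equation. The $E=-3$ mode in (i) is $\mathrm{sech}^2y$, and the $E=3/2$ mode in (ii) is $\tanh(y/\sqrt2)\,\mathrm{sech}(y/\sqrt2)$. Each is checked by direct differentiation. Completeness of the discrete list follows by Sturm oscillation: $u_0'$ has one zero in (i), so exactly one level lies below $0$; in (ii) and (iii) $u_0'>0$ is the ground state. The continuum thresholds are the limits of the potential at infinity.

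\emph{Stability conclusions.} A mode is stable if and only if $\omega^2=E+\mathbf k^2\ge0$ for every spectral $E$.
\begin{itemize}
\item Case (i): the level $E=-3$ gives $\omega^2=\mathbf k^2-3$, which is negative for $\mathbf k^2<3$. The growth rate is $\sqrt{3-\mathbf k^2}$, maximal and equal to $\sqrt3$ at $\mathbf k=0$.
\item Cases (ii) and (iii): all $E\ge0$, so $\omega^2\ge0$ for every $\mathbf k$, and there is no exponential growth. Only the neutral translational modes are marginal; that is linear stability in the spectral sense.
\end{itemize}
The translational mode $E=0$ gives $\omega^2=\mathbf k^2$ in all three cases, which is the null dispersion.

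The only mildly delicate point is justifying that no further bound states exist. The Sturm oscillation argument, or simply citing the exactly solvable Pöschl--Teller spectrum, settles it.
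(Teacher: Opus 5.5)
Your proposal is correct and follows the same route as the paper: compute $-F'(u_0)$ in each case, then read off the spectrum from the reflectionless Pöschl--Teller operators $-\partial^2-\ell(\ell+1)\,\mathrm{sech}^2$ with $\ell=2,2,1$ (after the rescaling $y=\sqrt2\,x$ in (ii)). The explicit eigenfunctions and the Sturm count are a harmless addition, but Sturm alone controls only the levels below $0$, which is all stability needs; the full list of levels, e.g.\ exactly $0$ and $3/2$ below the threshold $2$ in (ii), rests on the exact Pöschl--Teller spectrum you also cite.
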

\noindent The operators reduce to reflectionless P\"oschl--Teller operators
\cite{PT33} (supplementary material).

The arbitrary functions of the lift are the exact nonlinear continuation of the
gapless translational mode at null momentum (supplementary material,
Proposition~S13.6). All stability statements here are linear and
concern walls with constant parameters; they do not extend without separate
analysis to walls whose parameters are modulated by arbitrary functions, and
nonlinear (orbital) stability is not addressed.

\section{Status of the statements and open problems}\label{sec:status}
\paragraph{Status.} The markers are those of \S\ref{sec:conv}; \Qm\ refers to
Theorem~\ref{thm:umbclass}, Kerr's theorem, the Herglotz--Noether theorem, the
Liouville-type theorem of \cite{Bre84}, and the variational, rigidity and
oscillation theorems used in Proposition~\ref{prop:ladder}.

\paragraph{Open.}
\begin{enumerate}
\item The hyperbolic type of two constraints (two characteristic phases; the
  only case found is the radial reduction, and two light-cone phases from
  different curves solve the system of LDEs neither for the cubic nor for the exponential). Theorem~\ref{thm:liouvclass} for smooth,
  non-analytic twisting phases, and Liouville fields not of the form of
  Proposition~\ref{prop:liouvgen}.
\item The elliptic type of two constraints for complex phases that are not Kerr
  direction functions (Remark~S11.4), for powers and for the exponential.
\item Classification of the solutions of the weighted system \eqref{eq:plap}.
\item Linear stability of the lifted configurations with non-constant data
  and of crossing walls (Proposition~\ref{prop:smoothsg} with $N\ge2$).
\item Twisting \emph{spacelike} single constraints: is a rotating, non-deforming
  spacelike flow proportional to a conformal Killing vector? With it, every
  rank-one branch outside the classical ones would be a conformal symmetry.
\item Uniqueness, for each $n$, of the solution with $n$ zeros in
  Proposition~\ref{prop:ladder}(3) (numerically it is unique for $n\le5$).
\end{enumerate}

\section{Conclusion}
The geometry of the constraint fields almost fixes the nonlinearity, and
arbitrary functions enter through a characteristic phase. The solutions
that, for generic data, no symmetry reduction produces are: smooth bounded sine-Gordon walls,
jets and vortices whose directions and positions vary arbitrarily along a
light front, a single wall being linearly stable; the field of a point
source moving on an arbitrary curve, for the cubic, which radiates according
to an exact nonlinear Larmor formula; and the Liouville
field of such a source, which together with the lifted Liouville fields
exhausts the Liouville fields built from complex functionally invariant
phases (real analytic where their rays twist). Three constraints give, for the cubic, smooth
finite-energy solutions with any prescribed number of toroidal nodal surfaces.

\section*{Supplementary material}
The supplementary material contains the proofs and statements omitted here
(each omission is indicated where it occurs) and the computer-algebra and
numerical scripts for the statements marked \Sm\ and \Nm.

\section*{Acknowledgments}
The research was carried out within the state assignment of Ministry of
Science and Higher Education of the Russian Federation for Federal Research
Center for Information and Computational Technologies. During the preparation of this
work the author used the AI assistant Claude (Anthropic) for symbolic and
numerical verification of calculations, literature search and editing
of the text. All results were checked by the author, who takes full
responsibility for the content of the article.

\section*{Author declarations}
\paragraph{Conflict of interest.} The author has no conflicts to disclose.
\paragraph{Author contributions.} \textbf{O.~V. Kaptsov}: Conceptualization;
Formal analysis; Investigation; Methodology; Software; Validation;
Writing -- original draft; Writing -- review \& editing.

\section*{Data availability}
The data that support the findings of this study are available within
the article and its supplementary material.

\appendix
\section{Proof of Proposition~\ref{prop:radiation}}\label{app:radiation}
\begin{proof}[Proof]
(1) In the rest frame of the source at the moment of emission $\ell=(1,\mathbf n)$ with
$|\mathbf n|=1$ and $e=(0,\mathbf e)$, so $\mu=-\ell\cdot e=\mathbf n\cdot\mathbf e$ is the cosine of the angle
between the ray and the axis. Hence $3-\mu^2\ge2$ and $|W|\le W(1)=2$.
(2) From $\partial_\nu R^\kappa=\delta_\nu^\kappa-V^\kappa\ell_\nu$ and $V\cdot e=0$ we get
$\partial_\nu\sigma=-e_\nu-\rho\,(\ell\cdot\dot e)\,\ell_\nu$; with $\partial_\nu\rho=V_\nu+(A-1)\ell_\nu$ and
$A=\rho\,(\ell\cdot a)$ this gives $\nabla\mu=-\bigl[(\ell\cdot\dot e)+\mu(\ell\cdot a)\bigr]\ell-\rho^{-1}\bigl(e+\mu V-\mu\ell\bigr)$,
and $\nabla u=W'\nabla\mu/\rho-W\nabla\rho/\rho^2$ is the stated formula. Since $\dot e=-(e\cdot a)V+w$
and $\ell\cdot V=1$, $\ell\cdot\dot e=-(e\cdot a)+\ell\cdot w$, so $f$ is linear in $a$ and $w$.
(3) From $\ell\cdot e=-\mu$ and $\ell\cdot V=1$ one gets $\ell\cdot N=-W$, hence
$\partial u\cdot\partial u=-2fW\rho^{-3}+N\cdot N\rho^{-4}$ and
\[
  T^{\mu\nu}=\frac{f^2}{\rho^2}\ell^\mu\ell^\nu+\frac{f}{\rho^3}\bigl(\ell^\mu N^\nu+N^\mu\ell^\nu+W\eta^{\mu\nu}\bigr)+O(\rho^{-4}),
\]
the coefficients being constant along each ray. The middle term is annihilated
by $\ell_\nu$, so $T^{\mu\nu}\ell_\nu=O(\rho^{-4})$ and $T^{\mu\nu}V_\nu=f^2\ell^\mu/\rho^2+O(\rho^{-3})$. The normal
$\partial_\nu\rho=V_\nu+(A-1)\ell_\nu$ grows like $\rho$ through $A=\rho\,(\ell\cdot a)$, but the two estimates
still give $\rho^2T^{\mu\nu}\partial_\nu\rho=f^2\ell^\mu+O(\rho^{-1})$. In the coordinates $(\tau,\rho,\mathbf n)$ the
volume element is $\rho^2\,d\tau\,d\rho\,d\Omega$, so the four-momentum crossing the surface
$\rho=\mathrm{const}$ per unit $\tau$ tends to $\oint f^2\ell\,d\Omega$; its component along $V$ is $P$.
(Where $\ell\cdot a>0$ this surface is spacelike for $\rho>1/(2\,\ell\cdot a)$, since
$(\nabla\rho)^2=2A-1$; this does not affect the limit.) In the rest frame,
$\ell\cdot a=-\mathbf n\cdot\mathbf a$ and $\ell\cdot\dot e=\mathbf e\cdot\mathbf a-\mathbf n\cdot\mathbf w$, so
\[
  f=(\mathbf n\cdot\mathbf a)(\mu W)'-(\mathbf e\cdot\mathbf a)W'+(\mathbf n\cdot\mathbf w)W' .
\]
Write $\mathbf n=\mu\mathbf e+\mathbf n_\perp$ and $\mathbf a=a_\parallel\mathbf e+\mathbf a_\perp$; then
$f=a_\parallel\bigl[\mu(\mu W)'-W'\bigr]+\mathbf n_\perp\cdot\bigl(\mathbf a_\perp(\mu W)'+\mathbf wW'\bigr)$. Averaging over the
azimuth, $\int_0^{2\pi}(\mathbf n_\perp\cdot\mathbf b)^2d\varphi=\pi(1-\mu^2)|\mathbf b|^2$ and the mixed terms vanish; the
term in $\mathbf a_\perp\cdot\mathbf w$ has the coefficient $2\pi\int_{-1}^1(1-\mu^2)(\mu W)'W'\,d\mu=0$, the integrand being
odd. The remaining integrals over $\mu$ are elementary and give the constants.
\end{proof}

\end{document}